\crefname{section}{\textsection}{\textsection}
\crefname{subsection}{\textsection}{\textsection}
\crefname{subsubsection}{\textsection}{\textsection}
\crefname{paragraph}{\textparagraph}{\textparagraph}
\renewcommand{\Im}{\mathrm{Im}}
\renewcommand{\Re}{\mathrm{Re}}
          \newtheorem{thm}{Theorem}[section]
          \newtheorem{proposition}[thm]{Proposition}
          \newtheorem{lemma}[thm]{Lemma}
          \newtheorem{corollary}[thm]{Corollary}
          \theoremstyle{definition}
          \newtheorem{definition}[thm]{Definition}
          \newtheorem{remark}[thm]{Remark}
\renewcommand{\setminus}{\smallsetminus}
\numberwithin{equation}{section}
\newcommand{\bdm}{\begin{displaymath}}
\newcommand{\edm}{\end{displaymath}}
\newcommand{\bdn}{\begin{eqnarray}}
\newcommand{\edn}{\end{eqnarray}}
\newcommand{\bay}{\begin{array}{c}}
\newcommand{\eay}{\end{array}}
\newcommand{\ben}{\begin{enumerate}}
\newcommand{\een}{\end{enumerate}}
\newcommand{\beq}{\begin{equation}}
\newcommand{\eeq}{\end{equation}}
\newcommand{\beqn}{\begin{eqnarray}}
\newcommand{\eeqn}{\end{eqnarray}}
\newcommand{\bml}[1]{\begin{multline} #1 \end{multline}}
\newcommand{\bmln}[1]{\begin{multline*} #1 \end{multline*}}
\renewcommand{\leq}{\leqslant}
\renewcommand{\geq}{\geqslant}
\newcommand{\disp}{\displaystyle}
\newcommand{\tx}{\textstyle}
\newcommand{\one}{\mathds{1}}
\newcommand{\lf}{\left}
\newcommand{\ri}{\right}
\newcommand{\bra}[1]{\lf\langle #1\ri|}
\newcommand{\ket}[1]{\lf|#1 \ri\rangle}
\newcommand{\braket}[2]{\lf\langle #1|#2 \ri\rangle}
\newcommand{\braketr}[2]{\lf\langle #1\lf|#2\ri. \ri\rangle}
\newcommand{\braketl}[2]{\lf.\lf\langle #1\ri|#2 \ri\rangle}
\newcommand{\meanlr}[3]{\bra{#1}#2\ket{#3}}
\newcommand{\meanlrlr}[3]{\lf\langle #1\lf|#2\ri|#3\ri\rangle}
\newcommand{\xv}{\mathbf{x}}
\newcommand{\kv}{\mathbf{k}}
\newcommand{\kvp}{\mathbf{k}^{\prime}}
\newcommand{\sv}{\mathbf{s}}
\newcommand{\eev}{\mathbf{e}}
\newcommand{\aav}{\mathbf{A}}
\newcommand{\diff}{\mathrm{d}}
\newcommand{\eps}{\varepsilon}
\newcommand{\ceps}{c_{\eps}}
\newcommand{\zv}{\mathbf{z}}
\newcommand{\xxv}{\mathbf{X}}
\newcommand{\kkv}{\mathbf{K}}
\newcommand{\R}{\mathbb{R}}
\newcommand{\M}{\mathscr{M}}
\newcommand{\C}{\mathbb{C}}
\newcommand{\HH}{\mathcal{H}}
\newcommand{\HHe}{\mathcal{H}_{\mathrm{eff}}}
\newcommand{\heps}{\mathcal{H}_{\eps}}
\newcommand{\homega}{\mathfrak{H}_{\omega}}
\newcommand{\hh}{\mathfrak{H}}
\newcommand{\fock}{\Gamma_{\mathrm{s}}(\mathfrak{H})}
\newcommand{\RR}{\mathbb{R}}
\renewcommand{\vec}[1]{\boldsymbol{#1}}
\begin{document}
\onehalfspace
\bibliographystyle{amsalpha}

\title[Magnetic Schr\"{o}dinger Operators as Quasi-Classical Limit of Pauli-Fierz Models]{Magnetic Schr\"{o}dinger Operators as the Quasi-Classical Limit of Pauli-Fierz-type Models}

\author[M. Correggi]{Michele Correggi}\address{Dipartimento di Matematica, ``Sapienza'' Universit\`{a} di Roma, P.le Aldo Moro 5, 00185, Roma, Italia.}\email{michele.correggi@gmail.com}\urladdr{http://ricerca.mat.uniroma3.it/users/correggi/}
\author[M. Falconi]{Marco Falconi}\address{I-Math, Universität Zürich, Winterthurerstrasse 190, CH-8057 Zürich}
\email{marco.falconi@math.uzh.ch}\urladdr{http://user.math.uzh.ch/falconi/}
\author[M. Olivieri]{Marco Olivieri}\address{Dipartimento di Matematica, ``Sapienza'' Universit\`{a} di Roma, P.le Aldo Moro 5, 00185, Roma, Italia.}\email{olivieri@mat.uniroma1.it}

\begin{abstract}
  We study the quasi-classical limit of the Pauli-Fierz model: the system is composed of finitely many non-relativistic charged particles interacting
  with a bosonic radiation field. We trace out the degrees of freedom of the field, and consider the classical limit of the latter. We prove that the
  partial trace of the full Hamiltonian converges, in resolvent sense, to an effective Schr\"{o}dinger operator with magnetic field and a corrective electric
  potential that depends on the field configuration. Furthermore, we prove the convergence of the ground state energy of the microscopic system to the
  infimum over all possible classical field configurations of the ground state energy of the effective Schr\"{o}dinger operator.
\end{abstract}

\keywords{Quasi-classical limit, Magnetic Schr\"{o}dinger Operators, Magnetic Laplacians, Pauli-Fierz Model.}\subjclass[2010]{Primary: 81Q20, 81V10. Secondary: 81T10, 81Q10.}

\maketitle

\section{Introduction and Main Results}

The relevance of Schr\"{o}dinger operators with magnetic fields, also known as {\it magnetic Schr\"{o}dinger operators} (MSO) in modern
Quantum Mechanics is testified by the amount of work in the mathematical physics community on the topic, see
\cite{MR3496663,MR2228679,MR3615042,MR3603960,MR3616340}, to mention just a few among the most recent ones. The main reason for this
interest is mainly related to the role played by MSO in several key phenomena of solid state and condensed matter physics. The presence
of a magnetic field is for instance a necessary ingredient for both the integer and fractional Quantum Hall effects (see, {\it
  e.g.}, \cite{MR1341838,MR885569} for an introduction to Quantum Hall effects, from a mathematical standpoint). Periodic or quasi-periodic
Schr\"{o}dinger operators with possibly slowly varying magnetic fields are involved in the most relevant models for crystals and the
motion of electrons therein \cite{Panati2003}. MSO are involved in the modelling of the response of a
superconductor to an external magnetic field, in particular for very intense fields, and the ultimate loss of superconductivity when
the field penetrates the sample entirely
\cite{doi:10.1142/S0129055X17500052,Correggi2016,fournais2010spectral,sandier2006vortices}. The structure of Landau levels, {\it i.e.},
eigenvalues of MSO, and the restriction to the lowest Landau level is responsible for peculiar quantum phenomena
\cite{doi:10.1142/S0129055X07002900,Rougerie2015}. Singular magnetic fields, as the Aharonov-Bohm fluxes, even appear in the effective
description of two-dimensional particles obeying to fractional statistics \cite{Lundholm2013}, giving rise to even more peculiar
nonlinear effective models \cite{CLR2017}.

From the mathematical view point, the models related to the phenomena mentioned above typically describe $ N $ non-relativistic quantum particles with Hamiltonian\footnote{For the sake of simplicity, we set $ c, \hbar = 1 $ and we assume that all the masses are equal to $ 1/2$, while all the charges equal $ - 1 $.}
\beq
	\label{eq:MSO}
	\sum_{j = 1}^N \lf( - i \nabla_j - \aav(\xv_j) \ri)^2 + V(\xv_1, \ldots, \xv_N),
\eeq
where $ \aav $ is a classical vector potential, with magnetic field
\beq
	\mathbf{B}(\xv) = \nabla \times \aav(\xv),
\eeq
and $ \xv, \xv_j \in \R^3 $. The potential $ V $ contains the interaction among the particles as well as an external trapping, if present.

In spite of their importance, however, MSO should be viewed as effective non-relativistic models where the quantized field is assumed to be classical. A more fundamental model is indeed obtained by coupling the non-relativistic particles to a radiation electro-magnetic field, so obtaining a non-relativistic model of quantum electrodynamics. Whether such a connection could be made rigorous in the appropriate classical limit is the question that we address in this article. We can anticipate that the answer is positive, if a {\it quasi-classical limit} is taken along the lines of \cite{Correggi:2017aa}, {\it i.e.}, if the degrees of freedom of the quantized field can be suitably approximated by their classical counterpart.

In order to state our main results we have however to specify what model of quantum electrodynamics we choose and, because of its generality and simplicity, we focus on the class of {\it Pauli-Fierz (PF) models}, or, more generally, systems described by PF-type Hamiltonians: they are described by a
minimal coupling between a conserved number of non-relativistic quantum particles and a quantized photon field (in Coulomb gauge) and they were introduced, as the name suggests, in the pioneering work \cite{pauli1938theorie}. Concretely, a PF-type Hamiltonian acts on the tensor space
\beq
	\label{eq:space}
	\mathscr{H} = L^2(\R^{3N}) \otimes \Gamma_{\mathrm{s}}(\mathfrak{H}),
\eeq 
where the first factor is associated to the particle degrees of freedom, while $ \Gamma_s(\mathfrak{H}) $ is the bosonic Fock space with one-particle Hilbert space $ \mathfrak{H} = \mathfrak{h}\otimes\C^2 $, given by two copies of $ \mathfrak{h} $. For the sake of simplicity, we have considered here $ N $ spinless particles and chosen a three-dimensional setting, although our results apply to other dimensions too; also the particles might have spin and thus satisfy suitable symmetry constrains  (see below). A PF operator is then formally very similar to \eqref{eq:MSO}, {\it i.e.},
\begin{equation}
	\label{eq:PFham}
	H = \sum_{j=1}^N \left(-i \nabla_{j} - \vec{\varphi}(\xv_j)\right)^2 + V(\xv_1, \ldots, \xv_N) + \diff \Gamma(|\kv|),
\end{equation}
where the major difference is that now $  \vec{\varphi} $ stands for an operator acting on $ \Gamma_{\mathrm{s}}(\mathfrak{H}) $. In order to give the precise expression of $  \vec{\varphi} $ in the Coulomb gauge, one has to introduce the polarization vectors: let $ \eev_{1}(\kv), \eev_2(\kv) \in \R^3 $ be two three-dimensional unit vectors, such that for a.e. $ \kv \in \R^3 $, 
\bdm
	\kv \cdot \eev_{\gamma}(\kv) = 0,	\qquad		\gamma = 1,2,
\edm
and $ \eev_{1} \cdot \eev_{2} = 0 $, {\it i.e.}, $ \hat{\kv}, \eev_1, \eev_2 $ form a basis in $ \R^3 $. Then, the operator $  \vec{\varphi} $ is given by
\beq
	 \vec{\varphi}(\xv) = \sum_{\gamma = 1}^{2} \int_{\RR^3} \diff \kv \: \left(\lambda_{\gamma}^*(\kv) e^{i \kv \cdot \xv} a_{\gamma}(\kv)+\lambda_{\gamma}(\kv) e^{-i \kv \cdot \xv} a_{\gamma}^{\dagger}(\kv)\right) \vec{e}_{\gamma}(\kv),	\label{eq:aav}
\eeq
where $ \lambda(\kv) \in \mathfrak{H} $ is the particle form factor. The field energy $ \diff \Gamma(\omega) $ is the second quantization of a linear dispersion relation $ \omega(\kv) = |\kv| $, {\it i.e.},
\beq
	\label{eq:field energy}
	\diff \Gamma(|\kv|) = \sum_{\gamma = 1}^2 \int_{\R^3} \diff \kv \: |\kv| \; a_{\gamma}^{\dagger}(\kv) a_{\gamma}(\kv),
\eeq
with $ a_{\gamma}, a_{\gamma}^{\dagger} $ the usual creation and annihilation operators satisfying the canonical commutation relations
\beq
	\label{eq:ccr}
	\lf[ a_{\gamma}(\kv), a^{\dagger}_{\gamma'}(\kvp) \ri] = \eps \: \delta_{\gamma, \gamma'} \: \delta(\kv - \kvp).
\eeq

As explained in detail in the monograph \cite{MR2097788}, the Pauli-Fierz operators can be obtained as the quantization of the Abraham
model of extended non-relativistic classical charges coupled to the classical electromagnetic field (see
\cite{2013arXiv1311.1675F
} for a well-posedness result on the Abraham model suitable for quantization and semiclassical analysis). Combining these quantum and
classical descriptions it is possible to cover a wide range of physical phenomena. In this respect, MSO describe a physical situation
that is intermediate between the PF and Abraham models, in which the quantum nature of the particles is preserved, while the field is
macroscopic and therefore it behaves classically. This work provides exactly a bridge between the completely quantum PF regime and the
intermediate quantum-classical regime of MSO models.

We now provide more details about the {\it quasi-classical} regime we want to study: the physical situation we are describing is the one of quantum particles interacting with a very intense radiation field. The average number of field carriers, {\it e.g.}, photons or phonons, is thus very large and much larger than the commutator between the annihilation and creation operators. From the mathematical view point, this is perfectly equivalent to rescale $ a $ and $ a^{\dagger} $ (and therefore the number operator), so that the commutator $ [a, a^{\dagger}] $ is order $ \eps $ and the average number of carriers of order 1. In \eqref{eq:ccr} and in the rest of the paper, we take this point of view and investigate the quasi-classical limit $ \eps \to 0 $ of the PF model.

We anticipate that our main result is that the partial trace of the PF Hamiltonian converges in norm resolvent sense as $ \eps \to 0 $ to an effective operator which is a MSO. The magnetic field (and an additional unexpected electric field) of such a MSO are expressed in terms of the classical state of the quantized field. Moreover, {\it any} reasonable MSO can be obtained as the quasi-classical limit of a suitable PF-type Hamiltonian, although in this case the convergence is a bit weaker (strong resolvent sense). Finally, we show convergence of the corresponding ground state energies.

From the mathematical point of view, the quasi-classical limit of the PF Hamiltonian is much more subtle than the one of other models of particle-radiation interaction. Indeed, the coupling between the two subsystems in \eqref{eq:PFham} is quadratic in the field operator instead of linear as, {\it e.g.}, in the Nelson-type operators or the polaron model (see \cite{Correggi:2017aa}). In addition, the coupling itself involves a non-multiplicative operator acting on the particle subsystem, whose differential part is the gradient. This poses new technical challenges in the control of the convergence of the terms involving the field operators along the classical limit. Finally, when considering the ground state energy convergence, the massless dispersion relation of the PF Hamiltonian is responsible for a convergence to more general classical states (cylindrical Wigner measures) than in the massive case and a completely new machinery \cite{Falconi:2017aa} has to be applied to attack the problem. 

The plan of the paper is the following: the rest of this Sect. is devoted to a detailed description of the model under investigation and the statement of our main results. The proofs are contained in Sect. \cref{sec:convergence,sec:ground state}. Two Appendices are devoted to a brief summary of the tools of infinite dimensional semiclassical analysis used throughout the paper and a discussion of the self-adjointness of the PF operator respectively.

\medskip

\noindent
{\small
{\bf Acknowledgements.} The authors acknowledge the support of MIUR through the FIR grant 2013 ``Condensed Matter in Mathematical Physics (Cond-Math)'' (code RBFR13WAET). M.F. and M.O. are partially supported by the National Group of Mathematical Physics (GNFM--INdAM).}

\subsection{The microscopic model: Pauli-Fierz-type operators}
\label{sec:PF}

Instead of focusing on the three-dimensional case, and in order to be as much general as possible, we consider a PF-type model in $ d $ dimensions, describing a quantum system composed by $N$ non-relativistic extended charges of the same mass (equal to $ 1/2 $) and spin, interacting with an electromagnetic
radiation field.  The request that the particles have the same mass and spin is only for the sake of presentation, and could be avoided. Moreover, we do not take into account the particle statistics, but this can easily implemented in our analysis by suitably restricting the particle Hilbert space. We shall
partly borrow some notation from \cite{Matte:2017ab
}.

The Hilbert space of the full system is
$$  \mathscr{H} = L^2\lf(\Lambda^N; \mathbb{C}^s\ri) \otimes \Gamma_{\mathrm{s}}(\mathfrak{H}), $$
where $L^2(\Lambda^N; \mathbb{C}^s)$, $\Lambda\subset \mathbb{R}^{d}$ open with smooth boundary, is the Hilbert space associated to $ N $ particles with spin
$\frac{s-1}{2}$ and moving inside the same space region $\Lambda$, which might as well coincide with the whole $ \R^d $; $\Gamma_{\mathrm{s}}(\mathfrak{H})$ is the Hilbert space corresponding to the radiation degrees of freedom. Let us recall
that
$$ \Gamma_{\mathrm{s}}(\mathfrak{H}) = \bigoplus_{n=0}^{\infty} \mathfrak{H}^{\otimes_{\mathrm{s}}n}\; ,$$
with $\mathfrak{H}$ the (separable) one-photon Hilbert space, that we assume to be of the form
\beq
	\label{eq:one-particle space}
	\mathfrak{H} = \C^{d-1} \otimes L^2(\mathcal{K},\diff\nu)
\eeq
 for some $\sigma$-finite and locally
compact measurable space $(\mathcal{K},\Sigma,\nu)$ (momentum space). The tensorization by $ \C^{d-1} $ takes into account all the possible polarizations. The standard creation and annihilation operators $a^{\dagger}(\,\cdot \,)$ and $a(\,\cdot \,)$
satisfy the canonical commutation relations as in \eqref{eq:ccr}, i.e, 
\begin{equation}
	\label{eq:ccr2}
 	\lf[ a(\mathbf{f}), a^{\dagger}(\mathbf{g}) \ri] = \eps \, \braket{\mathbf{f}}{\mathbf{g}}_{\mathfrak{H}} = \varepsilon\sum_{\gamma=1}^{d-1} \braket{f_{\gamma}}{g_{\gamma}}_{L^2(\mathcal{K},\nu)}.
\end{equation}
Notice that in our convention $ a(\mathbf{f}) = \lf( a_{1}(f_1), \ldots, a_{d-1}(f_{d-1}) \ri) $ is in fact a vector of $ d-1 $ operators
acting on the degrees of freedom corresponding to different polarizations. In addition, we shall need to consider vector valued
functions for each polarization: given functions $\mathbf{F},\mathbf{G}\in \mathbb{R}^d\otimes \mathfrak{H}$, we define the commutator
\begin{equation}
	\label{eq:cc3}
 	\lf[ a(\mathbf{F}), a^{\dagger}(\mathbf{G}) \ri] = \eps \, \sum_{i=1}^d\braket{\mathbf{F}_{i}}{\mathbf{G}_{i}}_{\mathfrak{H}} = \varepsilon \sum_{i=1}^d\sum_{\gamma=1}^{d-1} \braket{F_{i,\gamma}}{G_{i,\gamma}}_{L^2(\mathcal{K},\nu)}.
\end{equation}
Therefore in this case the creation/annihilation operators $a^{\dagger}(\mathbf{F})$ and $a(\mathbf{G})$ have $d(d-1)$
components\footnote{Throughout the paper, we denote by bold-face letters vectors with either $d$, $d-1$, $d(d-1)$ $dN$, $d(d-1)N$
  components. It should always be clear from the context which is the case.}. As we have already commented on, $\eps \ll 1 $ plays here the role of a scale parameter, measuring the non-classical behavior of the field part of the system. This very same role is played by $ \hbar $, which is however set equal to 1 here, and therefore the limit $ \eps \to 0 $ should be interpreted as a suitable scaling limit.

The energy of the field is the second quantization of a dispersion relation $\omega:\mathcal{K}\to \mathbb{R}$, that we suppose $\nu$-almost everywhere strictly positive, in a way such
that
\beq
	\label{eq:dgamma}
	\mathrm{d}\Gamma(\omega) = \sum_{\gamma = 1}^{d-1} \int_{\mathcal{K}}\mathrm{d}\nu(\kv) \:\omega(\kv) \; a_{\gamma}^{\dagger}(\kv) a_{\gamma}(\kv)
\eeq
is a positive densely defined self-adjoint operator on the Fock space
$\Gamma_{\mathrm{s}}(\mathfrak{H})$, and $\omega^{-1}$ is a (possibly unbounded) densely defined self-adjoint multiplication operator on
$\mathfrak{H}$. The case $ \omega(\kv) = |\kv| $ is then recovered as a special case.
 
 In order to complete the definition of the PF model, we also have to introduce the form factors of the particles. Let $ \mathscr{D}[\omega + \omega^{-1}] \subset \mathfrak{H} $ be the domain of the closed quadratic form associated to the multiplication operator $ \omega(\kv) + \omega^{-1}(\kv) $ on $ \mathfrak{H} $. We pick a vector-valued function 
\beq
	\label{eq:lam} 
	\bm{\lambda} \in L^{\infty}\lf(\Lambda^N; \lf(\mathscr{D}[\omega + \omega^{-1}]\ri)^{dN} \ri)
\eeq
of the form
\begin{equation}
	\label{eq:lambda}
  	\vec{\lambda}(\xv_1, \ldots, \xv_N; \kv) =\Big( \lambda_{1}(\xv_1; \kv) \eev_{1}(\kv), \lambda_{1}(\xv_1; \kv) \eev_{2}(\kv), \ldots, \lambda_{j}(\xv_j; \kv) \eev_{\gamma}(\kv), \ldots \Big)
\end{equation}
where $ \lambda_{j}(\: \cdot \:\:\: ; \: \cdot \:) \in L^{\infty}(\Lambda;L^2(\mathcal{K},\nu)) $, {\it i.e.}, for a.e. $ \xv_j \in \Lambda $, $ \lambda_{j}(\xv_j;\: \cdot \: ) \in L^2(\mathcal{K},\nu) $, and $ \eev_{\gamma}(\kv) \in \R^d $, $ \gamma = 1,\ldots, d-1 $, are the polarization vectors, {\it i.e.}, $ \hat{\kv}, \eev_{1}(\kv),\ldots, \eev_{d-1}(\kv) $ form an orthonormal basis in $ \R^d $ for any $ \kv $. Physically speaking, $ \lambda_{j}$ is the Fourier transform of the (signed) charge distribution $ \rho_j: \hat{\mathcal{K}}\to \mathbb{R}$ of the
$j$-th particle ($\hat{\mathcal{K}}\cong \Lambda$ is the configuration space), multiplied by the factor $\omega^{-1/2}(\kv) e^{-i \kv \cdot \xv_j}$. Notice that we have implicitly chosen the Coulomb gauge for the field, which is apparent in the request $ \kv \cdot \mathbf{e}_{\gamma} = 0 $, for any $ \gamma = 1, \ldots, d-1 $.  Since we are mostly interested in the Coulomb gauge for the magnetic potential $\varphi(\vec{\lambda})$, we assume for the sake of simplicity that for a.e. $ \xv_j \in \Lambda$, $ j = 1, \ldots, N $ and for all $\gamma=1,\dotsc,d-1$,
\begin{equation}
  	\label{coulomb}\tag{A1}
  	\nabla_j  \lambda_{j}(\xv_j; \kv) \cdot \eev_{\gamma}(\kv) = 0.
\end{equation}

To take into account also a possible coupling between the spin and the magnetic field, which classically would have the usual form $ \sigma \cdot \mathbf{B} $, we introduce another coupling factor 
\beq
	\label{eq:b1}
	\mathbf{b} \in L^{\infty} \lf(\Lambda^N; \lf(\mathscr{D}[\omega^{-1}]\ri)^{dN} \ri) 
\eeq
of the form
\begin{equation}
	\label{eq:b2}
  	\mathbf{b}(\xv_1, \ldots, \xv_N; \kv) =\Big( b_{1}(\xv_1; \kv) \eev_{1}(\kv), b_{1}(\xv_1; \kv) \eev_{2}(\kv), \ldots, b_{j}(\xv_j; \kv) \eev_{\gamma}(\kv), \ldots \Big).
\end{equation}
 The physical meaning of this coupling is to provide the field coupled to the spin, {\it e.g.}, in $ d = 3 $ it is the vector of components $ \nabla_{\xv_j} \times \lambda_{j}(\xv_j; \kv) $.

The PF Hamiltonian of the
microscopic system is thus (the quadratic form associated to)
\begin{equation}
  \label{eqn:ham}
  \framebox{$ H = \lf[ \bigl(-i\nabla - \bm{\varphi}(\vec{\lambda})\bigr)^2 +V + \mathrm{d}\Gamma(\omega) \ri] \otimes \mathds{1}_{\C^s} -\vec{\sigma}\cdot \bm{\varphi}(\mathbf{b})\;, $}
\end{equation}
where 
\begin{equation}
\begin{split}
	\label{eq:field}
	\bm{\varphi}(\bm{\lambda})&=a^{\dagger}(\bm{\lambda})+a(\bm{\lambda}) = \Big( \vec{\varphi}_{1,1}(\lambda_1), \ldots, \vec{\varphi}_{1,d-1}(\lambda_1), \ldots, \vec{\varphi}_{j,\gamma}(\lambda_j), \ldots \Big)	\\
	\vec\varphi_{j,\gamma}(\lambda_j(\xv_j)) &= a^{\dagger}_{\gamma}\lf(\lambda_j(\xv_j) \eev_{\gamma} \ri)+a_{\gamma}(\lambda_j(\xv_j)  \eev_{\gamma})	\\
									& = \int_{\mathcal{K}} \diff \nu(\kv) \:  \lf[ \lambda_{j}(\xv_j; \kv) \mathbf{e}_{\gamma}(\kv) a^{\dagger}_{\gamma}(\kv) + \lambda^*_{j}(\xv_j; \kv) \mathbf{e}_{\gamma}(\kv) a_{\gamma}(\kv) \ri]
\end{split}
\eeq
is the usual Segal field, interpreted as the quantum vector potential; $V=V(\xv_1, \cdots, \xv_N)\otimes \mathds{1}_{\Gamma}$ is the self-interaction potential among the particles; and $\vec{\sigma}=(\vec{\sigma}_1,\dotsc,\vec{\sigma}_{N})$, $ \vec{\sigma}_j = (\sigma_{j,1}, \ldots, \sigma_{j,d}) $,
is a vector of $s\times s$ spin matrices. The latter is the only operator that acts on the spin degrees of freedom non-trivially, coupled to the quantum magnetic field $\varphi(\mathbf{b})$ (with components of the form \eqref{eq:field}). In absence of spin, {\it i.e.}, for $s=1$, this last term should be dropped from the operator. The interaction potential $V(\xv_1, \ldots, \xv_N)$ is assumed to be of the form
\beq
  	\label{eqn:h3} \tag{A2}
	\begin{split}
  		&V(\xv_1, \ldots, \xv_N) = V_{<}(\xv_1, \ldots, \xv_N) + V_+(\xv_1, \ldots, \xv_N)\; ,  \\
  		&V_{<} \in \mathfrak{K}_{<}\lf(\Lambda^N\ri), \; V_+ \in L^1_{\mathrm{loc}}\lf(\Lambda^N; \RR^+\ri)\;, 
\end{split}
\eeq
where $\mathfrak{K}_{<}$ stands for the set of operators which are Kato-small as quadratic forms (in the sense of KLMN Theorem \cite[Theorem X.17]{MR0493420})\footnote{Here and in the rest of the paper we denote by $ C > 0 $ a generic finite constant whose value may change from line to line.}:
\bmln{
	\mathfrak{K}_{<}(\Lambda) = \Bigl\{ U: \Lambda^N \rightarrow \RR \; \Big| \; \meanlr{\psi}{U}{\psi}  \leq a \lf\| \nabla \psi \ri\|_2^2 + C \lf\| \psi \ri\|_2^2, \forall \psi \in H^1_0\lf(\Lambda^N\ri),	\\
	 \text{ for some } a < 1 \text{ and } C < +\infty  \Bigr\}.
} 
With the above assumptions, $H$ is self-adjoint and bounded from below on $ \mathscr{D}(-\Delta_{\mathrm{D}}+V_+)\otimes \mathscr{D}(\mathrm{d}\Gamma(\omega)) $ (see \cref{sec:PFO}).

\subsection{Quasi-classical limit}
\label{sec:conveff}

In order to study the effective behavior of the particles as $\varepsilon\to 0$, we need to consider suitable $\varepsilon$-dependent Fock vectors, and then trace out the
radiation  degrees of freedom with respect to such vectors. The aim is to prove that the resulting effective operator on the particle space
$L^2(\Lambda^N;\mathbb{C}^s)$ has a limit (in suitable sense) as $\varepsilon\to 0$, and that we can characterize such a limit as a MSO of the form \eqref{eq:MSO}.

For any state of the field represented by a normalized $\varepsilon$-dependent Fock vector $ \Psi_{\eps} \in \Gamma_{\mathrm{s}}(\mathfrak{H})$, one can find a cylindrical measure $\mu\in \mathscr{M}_{\mathrm{cyl}}(\mathfrak{H})$ and a sequence $\Psi_{\varepsilon_n}$, with $\varepsilon_n\in (0,1)$ for all $n\in \mathbb{N}$, such that in a suitable topology $\Psi_{\varepsilon_n}$ converges to $\mu$. To different sequences may correspond different limits. It is convenient to adopt the following shorthand notation: let $\mu$ be one cluster point of $\Psi_{\varepsilon}$ in the aforementioned topology, we denote by
\beq
	\label{eq:conv}
	\Psi_{\eps} \xrightarrow[\eps \to 0]{} \mu,
\eeq
the convergence of $ \Psi_{\eps} $ to $\mu$ along all suitable sequences. The corresponding convergences of operators stated in the rest of the paper hold along the \emph{same} sequences. Let us remark that for the existence of the limit measure $ \mu $, no assumption on $ \Psi_{\eps} $ is needed, except for its normalization as a vector in $ \Gamma_{\mathrm{s}}(\mathfrak{H})$ (\cref{convcyl}). It is however important to have a probability measure on $\mathfrak{H}$, or on some other related space, as the semiclassical limit of a quantum state. One possible sufficient condition to ensure that any cluster point $\mu$ of $\Psi_{\varepsilon}$ is concentrated as a Radon probability measure is the following: there exists a $\delta \geq 1$ and a finite constant $C = C(\delta) < +\infty $ such that, uniformly in $\eps$,
 \begin{equation}
 	\label{omegacontrol} \tag{A3}
    \meanlr{\Psi_{\eps}}{\mathrm{d}\Gamma(\omega)^{\delta}}{\Psi_{\eps}}_{\Gamma_{\mathrm{s}}(\mathfrak{H})} \leq C(\delta).
\end{equation}
 Obviously, by interpolation, if the above assumption is satisfied for a given $\delta$, then it holds as well for any $ 0 <\underline{\delta} \leq \delta$. In fact, Assumption \ref{omegacontrol} ensures that any cluster point $\mu$ is concentrated on the space
\beq
	\label{eq:homega}
	\mathfrak{H}_{\omega} : = \C^{d-1} \otimes L_{\omega}^2(\mathcal{K}) : = \C^{d-1} \otimes L^2(\mathcal{K}, \omega \diff \nu),
\eeq
(which shares a common dense subset with $\mathfrak{H}$), {\it i.e.}, $ \mu \in \mathscr{M}(\mathfrak{H}_{\omega}) $. In addition,
\beq
\label{eq:en bound}
\int_{\mathfrak{H}_{\omega}} \diff \mu(\zv) \lf\| \sqrt{\omega} \zv \ri\|_{\mathfrak{H}}^{2\underline{\delta}} \leq C(\delta),
\eeq
for any $ 0 \leq \underline{\delta} \leq \delta $ (more details on these infinite dimensional semiclassical techniques, together with bibliographic references, are given in \cref{sec:Wick-quant}).

The partial trace with respect to the degrees of freedom of the field is a standard procedure (see, {\it e.g.}, \cite[Proposition 3.3]{Correggi:2017aa}), which yields, given an operator on $\mathscr{H}=L^2(\Lambda;\mathbb{C}^s)\otimes \Gamma_{\mathrm{s}}(\mathfrak{H})$ and a vector of
$\Gamma_{\mathrm{s}}(\mathfrak{H})$, a quadratic form on $L^2(\Lambda,\mathbb{C}^s)$. Therefore, we set
\begin{equation}
  \label{heps}
  \framebox{$ \mathcal{H}_{\eps} :=\meanlr{\Psi_{\eps}}{H}{\Psi_{\eps}}_{\Gamma_{\mathrm{s}}(\mathfrak{H})} - c_{\eps}\;, $} 
\end{equation}
with
\begin{equation}
	\label{eq:ceps}
  c_{\eps} = \meanlr{\Psi_{\eps}}{\mathrm{d} \Gamma (\omega)}{\Psi_{\eps}}_{\Gamma_{\mathrm{s}}(\mathfrak{H})}.
\end{equation}
Assumption \eqref{omegacontrol} guarantees that $c_{\varepsilon}$ is uniformly bounded with respect to $\varepsilon$, however since it amounts to a simple spectral shift of the effective
Hamiltonian we drop it.

The quasi-classic counterpart of \eqref{heps} is the MSO
\begin{equation}
  \label{eff}
  \framebox{$ \mathcal{H}_{\text{eff}} (\mu) = \disp\sum_{j = 1}^N \lf\{ \bigl(-i\nabla_j - \aav_{j, \mu}(\xv_j)\bigr)^2 - \bm{\sigma}_j \cdot \mathbf{B}_{j,\mu}(\xv_j) + W_{j,\mu}(\xv_j) \ri\} + V(\xv_1, \ldots, \xv_N)\; , $}
\end{equation}
where the classical fields are given by 
\begin{align}
  &\mathbf{A}_{j, \mu}(\xv)= 2 \Re \int_{\mathfrak{H}_{\omega}}^{} \mathrm{d}\mu(\zv)  \braket{\zv}{ \lambda_{j}(\xv) \hat{\eev}}_{\mathfrak{H}} &\text{(``mediated'' magnetic potential)}\; ,	\label{eq:Aclassic}\\
  &\mathbf{B}_{j, \mu}(\xv)= 2 \Re \int_{\mathfrak{H}_{\omega}} \mathrm{d}\mu(\zv)  \braket{\zv}{ b_{j}(\xv) \hat{\eev}}_{\mathfrak{H}} &\text{(``mediated'' magnetic field)}\; \label{eq:Bclassic},
\end{align}
where $ \hat\eev $ stands for the vector (actually a tensor in $ \R^{d-1} \times \R^{d} $)
\beq
	\hat{\eev} = \bigl(\eev_1, \ldots, \eev_{d-1}\bigr),
\eeq
of polarizations. Hence, the scalar product appearing in the above definition is meant, {\it e.g.}, as the vector in $ \R^d $
\bdm
	\braket{\zv}{ \lambda_{j}(\xv) \hat{\eev}}_{\mathfrak{H}} = \sum_{\gamma = 1}^{d-1} \braket{\zv_\gamma}{ \lambda_{j}(\xv) \eev_{\gamma}}_{L^2(\mathcal{K},\diff \nu)} = \sum_{\gamma = 1}^{d-1} \int_{\mathcal{K}} \diff \nu(\kv) \: z_{\gamma}^*(\kv) \lambda_j(\xv;\kv) \eev_{\gamma}(\kv).
\edm
The ``electric" potential $ W_{j,\mu}(\xv) $ is explicitly given by
\beq
	\label{eq:Wclassic}
	W_{j,\mu}(\xv) : = 4 \int_{\mathfrak{H}_{\omega}}^{} \mathrm{d}\mu(\zv)  \lf( \Re \braket{\zv}{ \lambda_{j}(\xv)  \hat{\eev}}_{\mathfrak{H}} \ri)^2 - 4 \lf( \Re  \int_{\mathfrak{H}_{\omega}}^{} \mathrm{d}\mu(\zv) \braket{\zv}{ \lambda_{j}(\xv) \hat{\eev}}_{\mathfrak{H}} \ri)^2
\eeq
and could be thought of as the variance with respect to the measure $ \mu $ of the ``bare'' vector potential
\beq
	\label{eq:fraka}
	\mathfrak{a}_{\zv}(\xv_1, \ldots, \xv_N) := 2\Re \braket{\zv}{\vec{\lambda}}_{\mathfrak{H}}: \Lambda^N \to \mathbb{R}^{dN},
\eeq
i.e., $ \sum_j W_{j,\mu}(\xv_j) = \mu\lf(\mathfrak{a}^2_{\zv}(\xv_1, \ldots, \xv_N) \ri) - \mu\lf(\mathfrak{a}_{\zv}(\xv_1, \ldots, \xv_N)\ri)^2 $.

We can now formulate our first main result: we denote by $ - \Delta_{\mathrm{D}} $ the Dirichlet Laplacian on $ \Lambda^N $, {\it i.e.}, the self-adjoint operator with domain $ \mathscr{D}(-\Delta_{\mathrm{D}}) = H^2_0(\Lambda^N) $, and by $ \mathscr{D}[-\Delta_{\mathrm{D}}+V_+] \subset L^2(\Lambda^N; \C^s) $ the domain of the closed quadratic form associated to the self-adjoint operator $ -\Delta_{\mathrm{D}}+V_+ $; $ \lf\| \: \cdot \: \ri\| - \mathrm{res} $ is short for the convergence in norm resolvent sense.

	\begin{thm}[Effective Hamiltonian]
 		\label{thm:conveff}
 		\mbox{}	\\
  		Let $\bm{\lambda} \in L^{\infty}\bigl(\Lambda^N; \lf(\mathscr{D}[\omega + \omega^{-1}]\ri)^{dN} \bigr)$, $\mathbf{b} \in L^{\infty} \bigl(\Lambda^N; \lf(\mathscr{D}[\omega^{-1}]\ri)^{dN} \bigr)$ and the assumptions \eqref{coulomb}, \eqref{eqn:h3} and \eqref{omegacontrol} be satisfied. In addition, let $ \Psi_{\eps} $  be a normalized vector in $ \Gamma_{\mathrm{s}}(\mathfrak{H}) $, such that $\Psi_{\varepsilon} \to \mu \in \mathscr{M}(\mathfrak{H}_{\omega}) $ in the sense of \eqref{eq:conv}. Then $\mathcal{H}_{\varepsilon}$, $\varepsilon\in (0,1)$, and $\mathcal{H}_{\mathrm{eff}}$ are self-adjoint on the self-adjointness domain of $ -\Delta_{\mathrm{D}}+V_+ $ and
  		\begin{equation}
  			\framebox{$ 
    			\mathcal{H}_{\eps} \xrightarrow[\eps \rightarrow 0]{\| \cdot \| - \mathrm{res}} \mathcal{H}_{\mathrm{eff}} (\mu)\; . $}
  		\end{equation}
	\end{thm}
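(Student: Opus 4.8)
The plan is to realise $\mathcal{H}_{\eps}$ explicitly as the partial trace of the Pauli--Fierz form \eqref{eqn:ham}, to identify each resulting field expectation as the $\eps$-dependent counterpart of a coefficient of $\mathcal{H}_{\mathrm{eff}}(\mu)$, and then to convert the (pointwise) convergence of these coefficients into norm--resolvent convergence by the standard factorisation of resolvents of KLMN perturbations. Throughout write $\langle\,\cdot\,\rangle_{\eps}:=\meanlr{\Psi_{\eps}}{\,\cdot\,}{\Psi_{\eps}}_{\fock}$ and $\mathcal{A}_{0}:=-\Delta_{\mathrm{D}}+V_{+}+z_{0}$ with $z_{0}>0$ to be fixed.

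\emph{Step 1: partial trace and self-adjointness.} Expanding the square in \eqref{eqn:ham} and using the Coulomb condition \eqref{coulomb}, which makes $\nabla\cdot\bm{\varphi}(\bm{\lambda})=0$ on the particle variables, one obtains, as a quadratic form on $\mathscr{D}[-\Delta_{\mathrm{D}}+V_{+}]\otimes\mathscr{D}(\diff\Gamma(\omega))$,
\[
  H \;=\; -\Delta_{\mathrm{D}}\otimes\one \;+\; 2i\,\bm{\varphi}(\bm{\lambda})\cdot\nabla \;+\; \bm{\varphi}(\bm{\lambda})^{2} \;+\; V\otimes\one \;+\; \diff\Gamma(\omega) \;-\; \bm{\sigma}\cdot\bm{\varphi}(\mathbf{b}),
\]
hence, subtracting $c_{\eps}=\langle\diff\Gamma(\omega)\rangle_{\eps}$,
\[
  \mathcal{H}_{\eps} \;=\; -\Delta_{\mathrm{D}} \;+\; 2i\,\langle\bm{\varphi}(\bm{\lambda})\rangle_{\eps}\cdot\nabla \;+\; \langle\bm{\varphi}(\bm{\lambda})^{2}\rangle_{\eps} \;+\; V \;-\; \bm{\sigma}\cdot\langle\bm{\varphi}(\mathbf{b})\rangle_{\eps}.
\]
By the standard bounds $\|a^{\#}(\mathbf{f})\Phi\|\leq \|\omega^{-1/2}\mathbf{f}\|_{\mathfrak{H}}\,\|\diff\Gamma(\omega)^{1/2}\Phi\|+\sqrt{\eps}\,\|\mathbf{f}\|_{\mathfrak{H}}\,\|\Phi\|$ together with $\bm{\lambda},\mathbf{b}\in L^{\infty}(\Lambda^{N};(\mathscr{D}[\omega^{-1}])^{dN})$, Assumption \eqref{omegacontrol} with $\delta\geq1$ (so that $\langle\diff\Gamma(\omega)\rangle_{\eps}\leq C$) shows that $\langle\bm{\varphi}(\bm{\lambda})\rangle_{\eps}$, $\langle\bm{\varphi}(\bm{\lambda})^{2}\rangle_{\eps}\geq0$ and $\langle\bm{\varphi}(\mathbf{b})\rangle_{\eps}$ are real $L^{\infty}(\Lambda^{N})$ functions, with norms bounded uniformly in $\eps\in(0,1)$; the first-order term is symmetric since its coefficient is divergence free by \eqref{coulomb}. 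Thus each $\mathcal{H}_{\eps}$ is a KLMN perturbation of $-\Delta_{\mathrm{D}}+V_{+}$ with relative form bound $<1$ uniformly in $\eps$, the bounded pieces leaving the operator domain unchanged; the same estimates applied to the integrals against $\mu$, using \eqref{eq:en bound}, give $\mathbf{A}_{j,\mu},\mathbf{B}_{j,\mu},W_{j,\mu}\in L^{\infty}$ and the analogous statement for $\mathcal{H}_{\mathrm{eff}}(\mu)$. Fixing $z_{0}$ so large that $\|\mathcal{A}_{0}^{-1/2}\mathcal{V}_{\#}\mathcal{A}_{0}^{-1/2}\|\leq\tfrac12$ for $\#\in\{\eps,\mu\}$ and all $\eps$, where $\mathcal{V}_{\#}:=\mathcal{H}_{\#}-(-\Delta_{\mathrm{D}}+V_{+})$, we may write $(\mathcal{H}_{\#}+z_{0})^{-1}=\mathcal{A}_{0}^{-1/2}(\one+\mathcal{A}_{0}^{-1/2}\mathcal{V}_{\#}\mathcal{A}_{0}^{-1/2})^{-1}\mathcal{A}_{0}^{-1/2}$; the second resolvent identity then reduces \cref{thm:conveff} to the single estimate $\|\mathcal{A}_{0}^{-1/2}(\mathcal{V}_{\eps}-\mathcal{V}_{\mu})\mathcal{A}_{0}^{-1/2}\|\to0$.

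\emph{Step 2: identifying the classical coefficients.} Applying the semiclassical/Wick-quantisation convergence recalled in \cref{sec:Wick-quant}, along the sequences selected by \eqref{eq:conv}, one has for every $\mathbf{f}\in\mathfrak{H}$ that $\langle a^{\dagger}(\mathbf{f})\rangle_{\eps}\to\int\braket{\zv}{\mathbf{f}}\,\diff\mu$, $\langle a^{\dagger}(\mathbf{f})^{2}\rangle_{\eps}\to\int\braket{\zv}{\mathbf{f}}^{2}\diff\mu$ and $\langle a^{\dagger}(\mathbf{f})a(\mathbf{f})\rangle_{\eps}\to\int|\braket{\zv}{\mathbf{f}}|^{2}\diff\mu$ (here the weighted bound \eqref{omegacontrol} prevents loss of mass at infinity for massless $\omega$); since $\bm{\varphi}(\mathbf{f})^{2}=a^{\dagger}(\mathbf{f})^{2}+a(\mathbf{f})^{2}+2a^{\dagger}(\mathbf{f})a(\mathbf{f})+\eps\|\mathbf{f}\|_{\mathfrak{H}}^{2}$ and $\eps\|\mathbf{f}\|_{\mathfrak{H}}^{2}\to0$,
\[
  \langle\bm{\varphi}(\mathbf{f})\rangle_{\eps}\longrightarrow 2\Re\!\int\!\braket{\zv}{\mathbf{f}}\diff\mu, \qquad
  \langle\bm{\varphi}(\mathbf{f})^{2}\rangle_{\eps}\longrightarrow \int\!\bigl(2\Re\braket{\zv}{\mathbf{f}}\bigr)^{2}\diff\mu .
\]
Choosing $\mathbf{f}=\bm{\lambda}(\xv)$ and $\mathbf{f}=\mathbf{b}(\xv)$ and comparing with \eqref{eq:Aclassic}, \eqref{eq:Bclassic}, \eqref{eq:Wclassic}, \eqref{eq:fraka}, this gives, for a.e.\ $\xv\in\Lambda^{N}$, the coefficientwise convergences $2i\langle\bm{\varphi}(\bm{\lambda})\rangle_{\eps}\cdot\nabla\to 2i\mathbf{A}_{\mu}\cdot\nabla$, $\langle\bm{\varphi}(\mathbf{b})\rangle_{\eps}\to\mathbf{B}_{\mu}$, and $\langle\bm{\varphi}(\bm{\lambda})^{2}\rangle_{\eps}\to\mu(\mathfrak{a}_{\zv}^{2})=\sum_{j}(\mathbf{A}_{j,\mu}^{2}+W_{j,\mu})$, so that the pointwise limit of $\mathcal{V}_{\eps}$ is exactly $\mathcal{V}_{\mu}$ (using $2i\mathbf{A}_{\mu}\cdot\nabla+\sum_{j}\mathbf{A}_{j,\mu}^{2}=\sum_{j}(-i\nabla_{j}-\mathbf{A}_{j,\mu})^{2}+\Delta_{\mathrm{D}}$, valid since $\nabla_{j}\cdot\mathbf{A}_{j,\mu}=0$ by \eqref{coulomb}, and $W_{j,\mu}=\mu(\mathfrak{a}_{\zv;j}^{2})-\mathbf{A}_{j,\mu}^{2}$ by \eqref{eq:Wclassic}).

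\emph{Step 3: from pointwise to norm convergence --- the main obstacle.} Write $\mathcal{V}_{\eps}-\mathcal{V}_{\mu}=2i\,\mathbf{g}_{\eps}^{(1)}\cdot\nabla+g_{\eps}^{(0)}$ with $\mathbf{g}_{\eps}^{(1)}:=\langle\bm{\varphi}(\bm{\lambda})\rangle_{\eps}-\mathbf{A}_{\mu}$ and $g_{\eps}^{(0)}:=(\langle\bm{\varphi}(\bm{\lambda})^{2}\rangle_{\eps}-\mu(\mathfrak{a}_{\zv}^{2}))-\bm{\sigma}\cdot(\langle\bm{\varphi}(\mathbf{b})\rangle_{\eps}-\mathbf{B}_{\mu})$: by Steps 1--2 these are real multiplication coefficients, bounded in $L^{\infty}(\Lambda^{N})$ uniformly in $\eps$ and tending to $0$ a.e. Since $\||\nabla|\mathcal{A}_{0}^{-1/2}\|\leq1$, in both terms it suffices to prove $\|\mathcal{A}_{0}^{-1/2}g_{\eps}B\|\to0$ for such $g_{\eps}$ and $B\in\{\mathcal{A}_{0}^{-1/2},\,|\nabla|\mathcal{A}_{0}^{-1/2}\}$. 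When $-\Delta_{\mathrm{D}}+V_{+}$ has compact resolvent (e.g.\ $\Lambda$ bounded), $\mathcal{A}_{0}^{-1/2}$ is compact, $g_{\eps}\to0$ strongly by dominated convergence, and $\mathcal{A}_{0}^{-1/2}g_{\eps}=(g_{\eps}\mathcal{A}_{0}^{-1/2})^{*}\to0$ in operator norm, because a compact operator composed with a uniformly bounded strongly null net converges in norm. In general one localises: for $R>0$, $\chi_{R}:=\one_{\{|\xv|\leq R\}}$, the operator $\chi_{R}\mathcal{A}_{0}^{-1/2}$ is compact by Rellich, so $\|\mathcal{A}_{0}^{-1/2}g_{\eps}\chi_{R}B\|\to0$ for each fixed $R$, while $\|\mathcal{A}_{0}^{-1/2}g_{\eps}(1-\chi_{R})B\|\leq\|g_{\eps}(1-\chi_{R})\|_{L^{\infty}}$; one then has to bound this last quantity by a quantity small uniformly in $\eps$ for $R$ large, i.e.\ to control the $\xv$-tails of $\langle\bm{\varphi}(\bm{\lambda})^{2}\rangle_{\eps}$, $\langle\bm{\varphi}(\bm{\lambda})\rangle_{\eps}$, $\langle\bm{\varphi}(\mathbf{b})\rangle_{\eps}$ (and of their limits) uniformly in $\eps$. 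This is exactly where the hypotheses $\bm{\lambda},\mathbf{b}\in L^{\infty}(\Lambda^{N};(\mathscr{D}[\omega^{-1}])^{dN})$ and \eqref{omegacontrol} enter: they allow one to express the relevant quadratic expectations through pairings of the oscillatory, hence weakly vanishing, weighted form factors $\omega^{-1/2}\bm{\lambda}(\xv)$, $\omega^{-1/2}\mathbf{b}(\xv)$ against the $\eps$-uniformly bounded reduced (ordinary and anomalous) density operators of $\Psi_{\eps}$, whose $\omega$-weighted versions are trace class with trace $\le C$; combined with the locally uniform form of the convergence of \cref{sec:Wick-quant}, this gives the sought uniform tail estimate. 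I expect this last point --- upgrading the soft pointwise/weak semiclassical convergence of the traced field operators to operator-norm convergence of the sandwiched perturbations, uniformly in $\eps$ and in the non-confined case --- to be the genuine technical core; once it is in place, the resolvent identity of Step 1 closes the argument, and the remaining input (Steps 1--2) is algebraic or a routine resolvent estimate.
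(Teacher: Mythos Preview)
Your Steps 1 and 2 are essentially the paper's Proposition~\ref{prop:1} and Lemma~\ref{point}: the same partial-trace identification, the same uniform $L^{\infty}$ bounds on the coefficients from \eqref{omegacontrol}, and the same pointwise convergence via compact Wick symbols. The symmetric sandwich $\mathcal{A}_0^{-1/2}(\,\cdot\,)\mathcal{A}_0^{-1/2}$ is a harmless variant of the paper's asymmetric use of the second resolvent identity, and your Rellich-based treatment of the region $\{|\xv|\leq R\}$ is a legitimate alternative to the paper's H\"older/Sobolev argument there.

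The gap is in the tail control. You bound the far-field contribution by $\|g_\eps(1-\chi_R)\|_{L^{\infty}}$ and then argue that this is small uniformly in $\eps$ for large $R$ because the form factors are ``oscillatory, hence weakly vanishing''. But the hypothesis $\bm{\lambda}\in L^{\infty}\bigl(\Lambda^N;(\mathscr{D}[\omega+\omega^{-1}])^{dN}\bigr)$ imposes no spatial decay whatsoever on $\xv\mapsto\bm{\lambda}(\xv)$; the oscillatory factor $e^{-i\kv\cdot\xv}$ you invoke is a feature of the concrete three-dimensional example in the introduction, not of the abstract framework of the theorem. Under the stated assumptions the coefficients $\aav_{j,\eps}$, $\langle\bm{\varphi}^2\rangle_\eps$, $\mathbf{B}_{j,\eps}$ need not vanish as $|\xv|\to\infty$ even for fixed $\eps$, so the mechanism you flag as ``the genuine technical core'' cannot close the argument in general.

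The paper avoids this by placing the spatial decay on the \emph{resolvent} side rather than on the coefficients. After the second resolvent identity one must estimate $\|g_\eps\,(\mathcal{H}_{\mathrm{eff}}+\xi)^{-1}\phi\|_2$. A preliminary lemma (Lemma~\ref{lemma:1}) shows that $(\mathcal{H}_{\mathrm{eff}}+\xi)^{-1}$ maps $L^2$ into $H^1$ boundedly. On $\{|\xv|\leq R\}$ one then applies H\"older with exponents chosen so that Sobolev embedding controls $\|(\mathcal{H}_{\mathrm{eff}}+\xi)^{-1}\phi\|_{2p'}$, while the factor $\bigl(\int_{|\xv|\leq R}|g_\eps|^{2p}\bigr)^{1/p}$ tends to zero by dominated convergence (pointwise limit from Step~2 plus the uniform $L^{\infty}$ bound on a set of finite measure). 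On the complement $U_R^{\mathrm c}$ the coefficient $g_\eps$ is kept merely bounded, and one uses instead the $L^2$-tail decay of $(\mathcal{H}_{\mathrm{eff}}+\xi)^{-1}\phi$. No decay in $\xv$ of the field expectations is ever required.
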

	
	\begin{remark}[Vector potential]
		\mbox{}	\\
		The vector potential $ \mathbf{A}_{j,\mu} $ depends on the particle index $ j = 1, \ldots, N $, {\it i.e.}, each particle interacts in a different way with the classical field. This is due to the difference in the charge distributions: if all the function $ \lambda_j $, $ j = 1, \ldots, N $, were equal, then the magnetic potential would be independent of $ j $.
	\end{remark}
  
  	\begin{remark}[Electric potential I]
  		\mbox{}	\\
  		The effective Hamiltonian $\mathcal{H}_{\mathrm{eff}}$ is a MSO, that differs however from the naïvely expected form because of the ``variance'' electric potential $ W_{j,\mu} $. The presence of such additional electric-like
potential is motivated by the fact that at the quantum level the expectation $\langle \vec{\varphi}^2 \rangle_{\varepsilon}$ differs in general from
$\langle \vec{\varphi} \rangle_{\varepsilon}^2$. As a matter of fact, we can construct explicit examples
of sequences of quantum Fock vectors for which the above limit is either different from zero (e.g., vectors with an increasing fixed number of
photons, each one in the same one-particle configuration) or equal to zero (e.g., families of squeezed coherent states of minimal
uncertainty). In general, microscopic field configurations whose corresponding classical limit is not a Dirac's delta measure are expected to have a
non-vanishing variance electric potential.  
	\end{remark}
	
	\begin{remark}[Electric potential II]
		\label{rem: positive}
		\mbox{}	\\
		As for the vector potential, the electric potential is in fact particle-dependent, {\it i.e.}, each particle feels a different field. This makes perfect sense in the physical picture, since the particles are assumed to have different charge distributions or form factors. Furthermore, as long as $\mu$ is a probability measure such that $ \mathfrak{a}_{(\: \cdot \:)}(\xv_1, \ldots, \xv_N)$ and
$\mathfrak{a}^2_{(\: \cdot \:)}(\xv_1, \ldots, \xv_N)$ are both measurable and integrable $\xv_1, \ldots, \xv_N \in \Lambda^N$-a.e., the electric potential is pointwise positive a.e.:
	\bml{
            \sum_{j=1}^N W_{j,\mu}(\xv_j) =\mu\lf(\mathfrak{a}^2_{\zv}(\xv_1, \ldots, \xv_N) \ri) - \mu\lf(\mathfrak{a}_{\zv}(\xv_1, \ldots, \xv_N)\ri)^2 \\
            =\mu \lf( \mathfrak{a}^2_{\zv}(\xv_1, \ldots, \xv_N)  - 2 \mu\lf(\mathfrak{a}_{\zv}(\xv_1, \ldots, \xv_N)\ri)  \mathfrak{a}_{\zv}(\xv_1, \ldots, \xv_N) + \mu\lf(\mathfrak{a}_{\zv}(\xv_1, \ldots, \xv_N)\ri)^2 \ri)	\\ = \mu \lf( \lf[\mathfrak{a}_{\zv}(\xv_1, \ldots, \xv_N) -\mu\lf(\mathfrak{a}_{\zv}(\xv_1, \ldots, \xv_N)\ri) \ri]^2 \ri) \geq 0\;.
  		}
  	\end{remark}

	\begin{remark}[Regularity of the potentials]
  		\label{rem:2}
  		\mbox{}	\\
  		With the above assumptions, the effective potential $ \mathbf{A}_{j,\mu}$ and field $\mathbf{B}_{j,\mu}$, as well as the effective electric potential
  $ W_{j,\mu} $ are all regular, {\it i.e.}, continuous and vanishing at infinity. This is apparent also in the fact that the domains of self-adjointness of $ \heps $ and $ \HHe $ coincide. 
\end{remark}

	The last \cref{rem:2} motivates a deeper investigation of the quasi-classical limit: a large class of magnetic fields is indeed excluded from the result in \cref{thm:conveff}, because of the regularity and boundedness of both $ \mathbf{A}_{j,\mu}$ and $\mathbf{B}_{j,\mu}$. For instance, uniform magnetic fields for which $ \aav(\xv) \: \propto \: \xv^{\perp} $ can not be generated in the quasi-classical regime, if the assumptions on the state of the field in \cref{thm:conveff} are met. It is thus intriguing to drop such assumptions and see whether this allows to reach a larger class of MSOs. This is the content of next result, where for the sake of simplicity we drop the spin dependence of the Hamiltonian and assume that the quantum particles are spinless. Let then $ \aav $ be a generic vector potential and $V$ an electric field, such that
\beq
	\label{eq:AV assumptions}
	 \aav \in L^2_{\mathrm{loc}}\lf( \Lambda;\mathbb{R}^{d} \ri),	\qquad		V\in \mathfrak{K}_{<}\lf(\Lambda^N\ri)+L^1_{\mathrm{loc}}\lf(\Lambda^N;\mathbb{R}^+\ri).
\eeq
We denote by $ \mathcal{H}^{\aav,V} $ the corresponding MSO, {\it i.e.},
\begin{equation}
  	\label{eq:generic MSO}
    	\mathcal{H}^{\aav,V} = \sum_{j =1}^N \bigl(-i\nabla_j -\aav(\xv_j)\bigr)^2+V(\xv_1, \ldots, \xv_N),
\end{equation}
which is self-adjoint on a suitable domain contained in $L^2(\Lambda^N)$. We use the short notation $ \mathrm{s-res} $ for the convergence in strong resolvent sense.

	\begin{thm}[Effective Hamiltonian: rougher electromagnetic fields]\label{thm:sconv}
          $\phantom{i}$\\
          Let $ \HH^{\aav,V} $ be any self-adjoint MSO of the form \eqref{eq:generic MSO} with $ \aav $ and $V$ satisfying \eqref{eq:AV
            assumptions}. Then, there exist a self-adjoint microscopic PF-type Hamiltonian $H^{\aav,V}$ on
          $L^2(\Lambda^N)\otimes \Gamma_{\mathrm{s}}(\mathfrak{H})$ of spinless non-point-like quantum charges coupled to quantized electromagnetic
          radiation, and a family $(\Psi_{\varepsilon})_{\varepsilon\in (0,1)}$ of quantum configurations of the radiation field realizing
          $\mathcal{H}^{\aav,V}$ in the quasi-classical limit, {\it i.e.}, denoting by
          $ \mathcal{H}_{\varepsilon}^{\aav,V} : = \meanlrlr{\Psi_{\eps}}{H^{\aav,V} - \diff \Gamma(\omega)}{\Psi_{\eps}}_{\Gamma_{\mathrm{s}}(\mathfrak{H})} $ the
          partial trace of $ H^{\aav,V} - \diff \Gamma(\omega) $,
  		\begin{equation}
    			\framebox{$ \mathcal{H}_{\varepsilon}^{\aav,V} \xrightarrow[\eps \rightarrow 0]{\mathrm{s-res}} \mathcal{H}^{\aav,V} .$}
  		\end{equation}
              \end{thm}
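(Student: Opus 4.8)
The plan is to realize $\HH^{\aav,V}$, up to a gauge transformation on the particle factor, as the partial trace of a single explicit Pauli--Fierz operator against a family of \emph{coherent} states of the field, whose displacement is tuned so that the ``mediated'' potential it produces tends to $\aav$. Because of \eqref{coulomb} every Hamiltonian \eqref{eqn:ham} is in the Coulomb gauge, hence the classical potential it generates in the limit is automatically divergence-free; so I first reduce to the case $\nabla\cdot\aav=0$. This costs nothing: solving $\Delta\phi=\nabla\cdot\aav$ (locally, then patching) gives a real $\phi\in H^1_{\mathrm{loc}}$ with $\aav-\nabla\phi$ divergence-free, the multiplication operator $e^{i\phi}$ on $L^2(\Lambda^N)$ intertwines $\HH^{\aav,V}$ with $\HH^{\aav-\nabla\phi,V}$ and the corresponding partial traces, and strong resolvent convergence is preserved under conjugation by this fixed unitary. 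Henceforth $\nabla\cdot\aav=0$.

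Fix once and for all the radiation data $\mathcal K=\Rd$, $\nu=$ Lebesgue measure, $\omega(\kv)=\lvert\kv\rvert$, and a smooth, nowhere-vanishing, rapidly decaying scalar profile $\chi\in\mathscr D[\omega+\omega^{-1}]$ (e.g. $\chi(\kv)=(1+\lvert\kv\rvert^2)^{-m}$ with $m$ large). Consider the spinless model ($\mathbf b=0$) with the particle-independent form factor $\vec\lambda_j(\xv_j;\kv)=\chi(\kv)\,e^{-i\kv\cdot\xv_j}\,\hat{\eev}(\kv)$: this is of the physical form \eqref{eq:lambda}, lies in $L^\infty\bigl(\Lambda^N;(\mathscr D[\omega+\omega^{-1}])^{dN}\bigr)$ and satisfies \eqref{coulomb}, so the associated $H^{\aav,V}$ is a self-adjoint Pauli--Fierz operator of extended (non-point-like) charges by \cref{sec:PFO}. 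For a family $(\eta_\eps)_{\eps\in(0,1)}\subset\mathfrak{H}$ to be chosen, let $\Psi_\eps$ be the coherent state of displacement $\eta_\eps$, so that $\meanlr{\Psi_\eps}{a(\mathbf f)}{\Psi_\eps}=\braket{\eta_\eps}{\mathbf f}_{\mathfrak{H}}$ and normal-ordered field monomials factorize over $\Psi_\eps$.

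The heart of the matter is the exact identification of the partial trace. Expanding the square in \eqref{eqn:ham}, using $\nabla\cdot\bm\varphi(\vec\lambda)=0$ (which is \eqref{coulomb}) to rule out ordering terms, and tracing out the field against $\Psi_\eps$, the field-dependent terms combine — because $\mathfrak a_{\eta_\eps}:=2\,\Re\braket{\eta_\eps}{\vec\lambda(\cdot)}_{\mathfrak{H}}$ is again divergence-free — into the magnetic square $(-i\nabla-\mathfrak a_{\eta_\eps})^2$, while the quantum variance of the field, which in \cref{thm:conveff} is responsible for the ``variance'' electric potential, here collapses to the \emph{constant} $\eps\,N(d-1)\lVert\chi\rVert_{L^2}^2$ since $\Psi_\eps$ has minimal fluctuations and $\vec\lambda$ is bounded. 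Hence
\[
  \mathcal{H}_{\eps}^{\aav,V}=\sum_{j=1}^N\bigl(-i\nabla_j-\mathfrak a_{\eta_\eps}(\xv_j)\bigr)^2+V(\xv_1,\ldots,\xv_N)+\eps\,N(d-1)\lVert\chi\rVert_{L^2}^2=\HH^{\mathfrak a_{\eta_\eps},V}+\mathcal O(\eps),
\]
i.e. $\mathcal{H}_{\eps}^{\aav,V}$ is genuinely a magnetic Schrödinger operator with the smooth, bounded, divergence-free potential $\mathfrak a_{\eta_\eps}$, up to a vanishing additive constant.

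It remains to choose $(\eta_\eps)$ so that $\mathfrak a_{\eta_\eps}\to\aav$ in $L^2_{\mathrm{loc}}(\Lambda;\Rd)$ and to invoke stability of magnetic Schrödinger operators. Since $\mathfrak a_\eta(\xv)=2\,\Re\sum_{\gamma=1}^{d-1}\int\diff\kv\,\overline{\eta_\gamma(\kv)}\,\chi(\kv)\,e^{-i\kv\cdot\xv}\,\eev_\gamma(\kv)$, the map $\eta\mapsto\mathfrak a_\eta$ is an inverse Fourier transform composed with multiplication by the nowhere-vanishing $\chi$ and with the passage to the polarization frame $\{\eev_\gamma(\kv)\}$ of $\kv^\perp$; thus every band-limited divergence-free real field is \emph{exactly} some $\mathfrak a_\eta$ with $\eta\in\mathfrak{H}$ (take $\eta_\gamma(\kv)=\widehat{\aav}(\kv)\cdot\eev_\gamma(\kv)/\bigl(2\,\overline{\chi(\kv)}\bigr)$ restricted to a ball), and a density argument — extend $\aav$ locally through a vector potential via the Poincaré lemma, mollify, band-limit, and let the cutoff grow as $\eps\to0$ — produces $\eta_\eps\in\mathfrak{H}$ with $\mathfrak a_{\eta_\eps}\to\aav$ in $L^2_{\mathrm{loc}}$. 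Then $(-i\nabla_j-\mathfrak a_{\eta_\eps}(\xv_j))\psi\to(-i\nabla_j-\aav(\xv_j))\psi$ in $L^2$ for every $\psi\in C^\infty_c(\Lambda^N)$, so the quadratic forms of $\HH^{\mathfrak a_{\eta_\eps},V}$ converge on this common core while remaining uniformly bounded below (the fixed $V$ only being carried along), whence $\HH^{\mathfrak a_{\eta_\eps},V}\to\HH^{\aav,V}$ in strong resolvent sense; absorbing the $\mathcal O(\eps)$ constant concludes. This also explains why the convergence cannot be improved to norm resolvent: the limiting displacement leaves $\mathfrak{H}$ (the associated Wigner measure is only cylindrical) and $\aav$ may be unbounded, so $\HH^{\aav,V}$ need not share a self-adjointness domain with the approximants. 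The main obstacle is precisely this last stability statement — for magnetic Schrödinger operators with merely $L^2_{\mathrm{loc}}$, possibly unbounded, vector potentials on a general open set with Dirichlet conditions — together with the somewhat delicate bookkeeping of the $L^2_{\mathrm{loc}}$-approximation under the Coulomb/polarization constraint.
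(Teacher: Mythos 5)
Your overall architecture matches the paper's proof of \cref{thm:sconv}: fix the one-photon space and an \emph{invertible} scalar form factor, use Weyl-displaced (``squeezed'' in the paper's terminology) coherent states whose displacement is a mollified/truncated version of $\widehat{\aav}$ divided by the form factor in the polarization frame, compute the partial trace exactly against these states, and then invoke a stability statement for magnetic Schrödinger operators with $L^2_{\mathrm{loc}}$ potentials. Your parametrization $\eta_{\gamma}(\kv)=\widehat{\aav}(\kv)\cdot\eev_{\gamma}(\kv)/(2\,\overline{\chi(\kv)})$, with a growing ball cutoff, is exactly the role played in the paper by $\lambda_{\aav}^{-1}(\hat{A}_{\gamma}*\xi_{\eps})$ in \cref{def:2}; your constant $\eps\,N(d-1)\lVert\chi\rVert_{L^2}^2$ is the coherent-state field variance and is correct. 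Your explicit gauge reduction to $\nabla\cdot\aav=0$ is a sensible addendum: the paper's construction in \cref{def:1}--\cref{def:2} silently assumes $\aav$ divergence-free, and you make this compatible with the theorem's stated generality.

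The one place where your argument, as written, does not close is the final inference: ``the quadratic forms of $\HH^{\mathfrak a_{\eta_\eps},V}$ converge on this common core while remaining uniformly bounded below, whence $\HH^{\mathfrak a_{\eta_\eps},V}\to\HH^{\aav,V}$ in strong resolvent sense.'' Pointwise convergence of quadratic forms on a common form core, even with a uniform lower bound, is \emph{not} sufficient for strong resolvent convergence (you also need the Mosco/$\Gamma$-type lim-sup condition, or recovery sequences). For the magnetic case the correct route is rather at the operator level: with $\aav_{n}\to\aav$ in $L^2_{\mathrm{loc}}$ one has $\HH^{\aav_{n},V}\psi\to\HH^{\aav,V}\psi$ in $L^2$ for every $\psi\in C^{\infty}_{c}(\Lambda^N)$, and then one applies Reed--Simon Thm.~VIII.25 \emph{provided} $C^{\infty}_{c}(\Lambda^N)$ is an operator core for all $\HH^{\aav_n,V}$ and for $\HH^{\aav,V}$. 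That essential self-adjointness on $C^{\infty}_{c}$ (Leinfelder--Simader type) is itself nontrivial for a general open $\Lambda$ with Dirichlet conditions and for a $V$ that is only form-bounded (the $V_{<}$ part of \eqref{eqn:h3} is a form perturbation, not an operator perturbation, so the operator is not a priori defined on $C^{\infty}_{c}$). You do flag this stability step as the main obstacle, which is fair, and the paper's own proof is equally terse on exactly this point — it refers to \cite[\textsection 3.4 and Prop.~3.11]{Correggi:2017aa} and then asserts the conclusion ``using explicit computations.'' So this is a genuine gap in the justification you give, but not a divergence from the paper's route; to close it you would need to either restrict to data for which the Leinfelder--Simader core argument applies, or supply a dedicated form-stability lemma for MSOs with Dirichlet conditions and $\mathfrak{K}_{<}+L^1_{\mathrm{loc}}$ potentials.

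One smaller caveat: be careful with your two suggested profiles $\chi$. For the polynomial profile $(1+\lvert\kv\rvert^2)^{-m}$, $\chi^{-1}$ has polynomial growth and $\eta_{\eps}\in\mathfrak H$ even without band-limiting once $\widehat{\aav}$ is Schwartz; for a Gaussian profile $\chi^{-1}$ grows super-exponentially and the band-limiting step is genuinely necessary. Since you use the cutoff anyway this is harmless, but it is worth stating that the compactly supported cutoff is what makes $\eta_{\eps}\in\mathfrak{H}$ rather than mere decay of $\widehat{\aav}$.
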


		\begin{remark}[Field states]
			\label{rem:field states}
			\mbox{}	\\
			There are many possible choices of coupling factors $\vec{\lambda}$ and states $ \Psi_{\varepsilon} $ in \cref{thm:sconv} (see Sect. \cref{sec:rough-electr-fields} for some explicit examples). Let us remark here that the sequence of states $(\Psi_{\varepsilon})_{\varepsilon\in (0,1)}$, along which the quasi-classical limit is taken, can be explicitly chosen (see \eqref{eq:field states}), once the magnetic potential $ \aav $ is given, and the coupling factor is fixed. More precisely, we construct $ \Psi_{\eps}$ as a suitable squeezed coherent state, whose argument depends on $ \aav $. A similar strategy was actually followed in \cite{Correggi:2017aa} to show that certain field states can give rise to trapping (electric) potentials in the quasi-classical limit of Nelson-type models. We stress that, exactly as in that case, the energy of the field $ \ceps $ might as well diverge as $ \eps \to 0 $, which is to be expected since the classical field so generated is singular ({\it e.g.}, it is unbounded).
              \end{remark}

        \begin{remark}[Alternative approach]
          \label{rem:1}
          	\mbox{}	\\
           Another way to obtain interesting
          vector potentials, such as the
          aforementioned $\vec{x}^{\perp}$, could
          be to use more singular form
          factors $\vec{\lambda}$. However, there
          are some difficulties connected
          with this strategy, namely that the
          operator $H_{\varepsilon}$ may not be
          self-adjoint or even densely
          defined, and that one should choose
          a very specific combination of
          coupling factor and sequence of
          semiclassical states in order to
          obtain the desired limit. 
        \end{remark}

        We now work out an explicit example in order to clarify the meaning of the above Theorem. The final goal will be the derivation of a magnetic Sch\"{o}dinger operator with uniform magnetic field. Let then 
       	\beq
       		\label{eq:uniform}
        		\aav = \tx\frac{1}{2} \mathbf{x}^{\perp} = \tx\frac{1}{2} (-y, x, 0)
        	\eeq 
        	be the vector potential generating a unitary magnetic field along $ \hat{\mathbf{z}} $ in $ d = 3 $\footnote{The computation can be reproduced also in $ d =2 $ by simply projecting all the quantities on the plane $ x,y $.}, with
          $V=0$, $N=1$. Let also 
          $\{\xi_{\varepsilon}(\mathbf{k})\}_{\varepsilon\in (0,1)}$ be a family of
          compactly supported mollifiers, and set $ \nabla^{\perp} = (-\partial_y,\partial_x,0) $.
          In addition,
          choose
          \begin{equation*}
            \mathbf{\lambda}(\mathbf{x};\mathbf{k})=\sum_{\gamma=1}^{2}\frac{1}{\sqrt{\lvert \mathbf{k}  \rvert_{}^{}}}e^{-\frac{\lvert \mathbf{k}  \rvert_{}^2}{2}}e^{-i \mathbf{k}\cdot \mathbf{x}}\mathbf{e}_{\gamma}(\mathbf{k})\; ;
          \end{equation*}
          and
          \begin{equation*}
            \Psi_{\varepsilon}=\exp\biggl\{\frac{i}{4\varepsilon}\sum_{\gamma=1}^{2}\int_{\R^3} \mathrm{d}\mathbf{k}\; \sqrt{|\kv|}e^{\frac{\lvert \mathbf{k}  \rvert_{}^2}{2}} \lf(a^{\dagger}_{\gamma}(\mathbf{k})\,\lf( \eev_{\gamma} \cdot \nabla^{\perp} \xi_{\varepsilon} \ri) (\mathbf{k}) +a_{\gamma}(\mathbf{k})\lf( \eev_{\gamma} \cdot \nabla^{\perp} \xi_{\varepsilon} \ri)^* (\mathbf{k}) \ri)\biggr\}\,\Omega,
          \end{equation*}
          with $ \Omega \in \Gamma_{\mathrm{s}}(\mathfrak{H}) $ the vacuum vector. Then, \cref{thm:sconv} yields the convergence (in this specific case there is actually no need of sequence extraction)
          \begin{equation*}
            \mathcal{H}_{\varepsilon}^{\aav,0} \xrightarrow[\eps \rightarrow 0]{\mathrm{s-res}} \mathcal{H}^{\aav,0} = \lf(-i\nabla - \tx\frac{1}{2} \xv^{\perp}\ri)^2\;,
          \end{equation*}
          for the vector potential $ \aav $ given by \eqref{eq:uniform}.

It could also be interesting to couple the quasi-classical limit on the field with a mean field limit on the particles, \emph{i.e.}, to let $N\to \infty$. In that case, either if we take the limit $N\to \infty$ before or after the limit $\varepsilon\to 0$, the Hamiltonian converges to the same non-linear effective energy functional, describing the mean-field interaction of one particle with the classical field. The effective model is in this case the same obtained considering the coupled limit $N \propto \frac{1}{\varepsilon} $, as $ \eps \to 0 $. Such a coupled limit has been studied in the dynamical setting for the Nelson and Pauli-Fierz models in \cite{falconi2013jmp,Ammari:2014aa,ammari:15}  and  \cite{leopold2016arxiv}, respectively.
        
\subsection{Ground state energy}

In this section we study the behaviour of the ground state energy of the microscopic Hamiltonian \eqref{eqn:ham} in the quasi-classical
limit $\eps \rightarrow 0$. The ground state energy of Pauli-Fierz-type operators is defined as
\beq
	\underline{\sigma}(H_{\eps}) := \inf \bigl\{ \lambda \in \RR \; | \; \lambda \in \sigma(H_{\varepsilon})\bigr\}
\eeq
where we have emphasized the dependence on $ \eps $ of $ H $ by adding a label. Concretely, the quantity can be computed via a suitable minimization of the energy quadratic form, {\it e.g.},
\[
	\underline{\sigma}(H_{\eps}) = \inf_{\substack{\psi \in \mathscr{D}[H_{\eps}], \: \|\psi \|_{\mathscr{H}} = 1}} \langle \psi | H_{\eps} | \psi \rangle \; 
\]
with $ \mathscr{D}[H_{\eps}] $ the form domain of $ H_{\eps} $ (or any core for it).

Analogously, it is possible to define the ground state energy for the effective models
$\underline{\sigma}\bigl(\mathcal{H}_{\mathrm{eff}}(\mu)\bigr)$, taking into account that the effective Hamiltonians depend on the Wigner measure
$\mu\in \mathfrak{H}_{\omega} $, describing the classical state of the field. Heuristically, the ground state energy of the microscopic system is expected to converge in the quasi-classical
limit to the infimum over all possible classical configurations, which can be obtained as classical limits of vectors in the domain of $H_{\varepsilon}$, of the
effective ground state energies. In fact, as we are going to see, it is sufficient to consider the smaller minimization domain (recall the definition of $ \mathfrak{H}_{\omega} $ in \eqref{eq:homega})
\begin{equation}\label{measures}
\mathscr{M}_{2,\omega}:= \mathscr{M}_{2} \lf(\mathfrak{H}_{\omega}\ri)= \lf\{ \mu \in \mathscr{M}\lf(\mathfrak{H}_{\omega}\ri) \; \bigg | \;  \int_{\mathfrak{H}_{\omega}} \diff \mu(\zv) \lf\| \sqrt{\omega} \zv \ri\|_{\mathfrak{H}}^{2} < + \infty \ri\}.
\end{equation}
The second moment of $\mu$, which is finite for any $\mu\in \mathscr{M}_{2,\omega}$, is also the classical limit of the average free energy of photons (see \cref{lem:ceps}):
\begin{equation}
	\label{eq:cmu}
  	c(\mu) :=\int_{\mathfrak{H}_{\omega}}^{}  \diff \mu(\zv) \: \lf\| \sqrt{\omega} \zv \ri\|_{\mathfrak{H}}^{2} =\lim_{\varepsilon\to 0}c_{\varepsilon}\; ,
\end{equation}
provided $\Psi_{\varepsilon}\to \mu$ in the sense of \eqref{eq:conv}. We recall that $c_{\varepsilon}$ given by \eqref{eq:ceps} is the energy of the field on the state $ \Psi_{\eps} $.

\begin{thm}[Ground state energy convergence]
	\label{thm:2}
	\mbox{}	\\
  	Let $\bm{\lambda} \in L^{\infty}\bigl(\Lambda^N; \lf(\mathscr{D}[\omega + \omega^{-1}]\ri)^{dN} \bigr)$, $\mathbf{b} \in L^{\infty} \bigl(\Lambda^N; \lf(\mathscr{D}[\omega^{-1}]\ri)^{dN} \bigr)$ and the assumptions \eqref{coulomb} and \eqref{eqn:h3} be satisfied. Then,
	\begin{equation}
  		\label{bottom}
		\framebox{$ \disp\lim_{\eps \rightarrow 0} \underline{\sigma}(H_{\varepsilon}) = \inf_{\mu \in \mathscr{M}_{2,\omega}} \Bigl(\underline{\sigma}\lf(\HHe(\mu)\ri) + c(\mu)\Bigr)\; . $}
	\end{equation}
\end{thm}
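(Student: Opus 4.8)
The plan is to prove \eqref{bottom} by matching the bounds $\limsup_{\eps\to0}\underline{\sigma}(H_{\eps})\leq\inf_{\mathscr{M}_{2,\omega}}(\underline{\sigma}(\HHe(\mu))+c(\mu))$ and $\liminf_{\eps\to0}\underline{\sigma}(H_{\eps})\geq\inf_{\mathscr{M}_{2,\omega}}(\underline{\sigma}(\HHe(\mu))+c(\mu))$, after a preliminary reduction to Dirac masses. By the ``variance identity'' of \cref{rem: positive}, and since $V$ and $\|\sqrt{\omega}\zv\|_{\mathfrak{H}}^{2}$ do not depend on $\zv$, one has for every $\psi$ in the form domain
\begin{equation*}
  \meanlr{\psi}{\HHe(\mu)}{\psi}+c(\mu)=\int_{\homega}\diff\mu(\zv)\,\Bigl(\meanlr{\psi}{\HHe(\delta_{\zv})}{\psi}+\|\sqrt{\omega}\zv\|_{\mathfrak{H}}^{2}\Bigr),
\end{equation*}
whence $\underline{\sigma}(\HHe(\mu))+c(\mu)\geq\int\diff\mu(\zv)\,(\underline{\sigma}(\HHe(\delta_{\zv}))+\|\sqrt{\omega}\zv\|_{\mathfrak{H}}^{2})$. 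Combining this with the continuity of $\zv\mapsto\underline{\sigma}(\HHe(\delta_{\zv}))+\|\sqrt{\omega}\zv\|_{\mathfrak{H}}^{2}$ on $\homega$ (the potentials $\mathbf{A}_{j,\delta_{\zv}},\mathbf{B}_{j,\delta_{\zv}},W_{j,\delta_{\zv}}$ are $L^{\infty}$-continuous in $\zv\in\homega$ because $\vec{\lambda}/\sqrt{\omega},\mathbf{b}/\sqrt{\omega}\in L^{\infty}$, and bounded perturbations are norm-resolvent continuous) and with the density in $\homega$ of the common dense subset $\homega\cap\mathfrak{H}$, one obtains
\begin{equation*}
  \inf_{\mu\in\mathscr{M}_{2,\omega}}\bigl(\underline{\sigma}(\HHe(\mu))+c(\mu)\bigr)=\inf_{\zv\in\homega\cap\mathfrak{H}}\bigl(\underline{\sigma}(\HHe(\delta_{\zv}))+\|\sqrt{\omega}\zv\|_{\mathfrak{H}}^{2}\bigr),
\end{equation*}
so it is enough to handle Dirac masses $\delta_{\zv}$ with $\zv\in\homega\cap\mathfrak{H}$.

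For the upper bound I would fix such a $\zv$, take a normalized $\psi\in\dom(-\Delta_{\mathrm{D}}+V_+)$ and let $\Psi_{\eps}$ be a coherent state with classical profile $\zv$, so that $\Psi_{\eps}\to\delta_{\zv}$ in the sense of \eqref{eq:conv} and $c_{\eps}=\meanlr{\Psi_{\eps}}{\diff\Gamma(\omega)}{\Psi_{\eps}}=\|\sqrt{\omega}\zv\|_{\mathfrak{H}}^{2}$; in particular \eqref{omegacontrol} holds with $\delta=1$. Then $\underline{\sigma}(H_{\eps})\leq\meanlr{\psi\otimes\Psi_{\eps}}{H_{\eps}}{\psi\otimes\Psi_{\eps}}=\meanlr{\psi}{\heps}{\psi}+c_{\eps}$, and taking the infimum over $\psi$ gives $\underline{\sigma}(H_{\eps})\leq\underline{\sigma}(\heps)+c_{\eps}$. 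By \cref{thm:conveff}, $\heps\to\HHe(\delta_{\zv})$ in norm resolvent sense (with a uniform lower bound), so $\underline{\sigma}(\heps)\to\underline{\sigma}(\HHe(\delta_{\zv}))$, while $c_{\eps}\to\|\sqrt{\omega}\zv\|_{\mathfrak{H}}^{2}$ (trivially here, or by \cref{lem:ceps}). Hence $\limsup_{\eps}\underline{\sigma}(H_{\eps})\leq\underline{\sigma}(\HHe(\delta_{\zv}))+\|\sqrt{\omega}\zv\|_{\mathfrak{H}}^{2}$, and the reduction above yields the $\limsup$ half of \eqref{bottom}.

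For the lower bound I would pass to a subsequence realizing $\liminf_{\eps}\underline{\sigma}(H_{\eps})$ and pick normalized approximate ground states $\Psi_{\eps}$ with $\meanlr{\Psi_{\eps}}{H_{\eps}}{\Psi_{\eps}}\leq\underline{\sigma}(H_{\eps})+\eps$. Since this quantity is bounded above (by the $\limsup$ bound) and $H_{\eps}$ is uniformly bounded below, the standard a priori estimates for Pauli--Fierz operators at arbitrary coupling (cf. \cite{Matte:2017ab} and \cref{sec:PFO}) give, uniformly in $\eps$, $\meanlr{\Psi_{\eps}}{(-\Delta_{\mathrm{D}}+\diff\Gamma(\omega)+V_+)}{\Psi_{\eps}}\leq C$, and hence also $\meanlr{\Psi_{\eps}}{\bm{\varphi}(\vec{\lambda})^2}{\Psi_{\eps}}\leq C$ by the relative bounds; in particular \eqref{omegacontrol} holds with $\delta=1$. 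Passing to a further subsequence, the states $|\Psi_{\eps}\rangle\langle\Psi_{\eps}|$ then admit a quasi-classical limit (see \cref{sec:Wick-quant}): a Wigner measure $\mu$ on $\homega$ and a measurable family $(\gamma_{\zv})$ of positive trace-class operators on $L^2(\Lambda^N;\C^s)$ with $\int\diff\mu(\zv)\,\Tr\gamma_{\zv}=1$, such that expectations of ``particle $\otimes$ field'' observables converge to the corresponding $\diff\mu$-averages of traces against $\gamma_{\zv}$ (here the framework of \cite{Falconi:2017aa} is what allows both the extraction and the passage to the limit of the field observables in the massless case, where the limiting objects are only cylindrical Wigner measures on $\homega$). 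Using \eqref{coulomb} to drop the $\nabla\cdot\bm{\varphi}(\vec{\lambda})$ term and writing $H_{\eps}=(-i\nabla-\bm{\varphi}(\vec{\lambda}))^2+V-\vec{\sigma}\cdot\bm{\varphi}(\mathbf{b})+\diff\Gamma(\omega)$, I would then show: $\liminf_{\eps}\meanlr{\Psi_{\eps}}{\diff\Gamma(\omega)}{\Psi_{\eps}}\geq\int\diff\mu(\zv)\,\|\sqrt{\omega}\zv\|_{\mathfrak{H}}^{2}\Tr\gamma_{\zv}$ by lower semicontinuity (the symbol being nonnegative); that $\bm{\varphi}(\vec{\lambda})^2$ differs from its Wick-ordered version by $\eps\,c(\xv)$ with $c\in L^{\infty}(\Lambda^N)$, hence by an $O(\eps)$ operator; and, approximating $\HHe(\delta_{\zv})$ from below by bounded field symbols and invoking the convergence of the Wick-quantized operators together with Fatou,
\begin{equation*}
  \liminf_{\eps}\meanlr{\Psi_{\eps}}{\bigl((-i\nabla-\bm{\varphi}(\vec{\lambda}))^2+V-\vec{\sigma}\cdot\bm{\varphi}(\mathbf{b})-\eps\,c(\xv)\bigr)}{\Psi_{\eps}}\geq\int\diff\mu(\zv)\,\Tr\bigl(\HHe(\delta_{\zv})\gamma_{\zv}\bigr).
\end{equation*}
Summing the two, and writing $\tilde{\mu}(\diff\zv)=\Tr(\gamma_{\zv})\,\mu(\diff\zv)$ (a probability measure on $\homega$) and $\hat{\gamma}_{\zv}=\gamma_{\zv}/\Tr\gamma_{\zv}$, one gets $\liminf_{\eps}\underline{\sigma}(H_{\eps})\geq\int\diff\tilde{\mu}(\zv)\,(\Tr(\HHe(\delta_{\zv})\hat{\gamma}_{\zv})+\|\sqrt{\omega}\zv\|_{\mathfrak{H}}^{2})\geq\int\diff\tilde{\mu}(\zv)\,(\underline{\sigma}(\HHe(\delta_{\zv}))+\|\sqrt{\omega}\zv\|_{\mathfrak{H}}^{2})$; since $\delta_{\zv}\in\mathscr{M}_{2,\omega}$ for $\tilde{\mu}$-a.e. $\zv$, the last expression is $\geq\inf_{\mu\in\mathscr{M}_{2,\omega}}(\underline{\sigma}(\HHe(\mu))+c(\mu))$, which is the $\liminf$ half of \eqref{bottom}.

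I expect the main obstacle to be the last of these points, namely the quasi-classical convergence of the field-dependent part of the energy. Two features put it outside the reach of the standard semiclassical toolbox. First, the coupling $\bm{\varphi}(\vec{\lambda})$ is \emph{quadratic}, so the relevant Wick monomial is of second order and its convergence against $\mu$ cannot be read off from the first-order moment bound \eqref{omegacontrol} alone; rather, it has to be extracted from the bound on $\meanlr{\Psi_{\eps}}{\bm{\varphi}(\vec{\lambda})^2}{\Psi_{\eps}}$ together with the regularity $\vec{\lambda}\in\mathscr{D}[\omega+\omega^{-1}]$. Second, it is \emph{non-multiplicative}, coupling the field to the gradient through the cross term $\bm{\varphi}(\vec{\lambda})\cdot\nabla$, a genuinely mixed operator that is unbounded on the particle side and for which a dedicated transference/convergence estimate is required. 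On top of this, the massless dispersion $\omega$ forces the classical configurations into the larger space $\homega$, where in general only cylindrical Wigner measures are available, and it is precisely here that the machinery of \cite{Falconi:2017aa} is indispensable. The lower-semicontinuity argument for the lower-semibounded operator $\HHe(\delta_{\zv})$ (truncation from below plus Fatou) and the verification of the Pauli--Fierz a priori estimates uniformly in $\eps$ at arbitrary coupling are the remaining, more routine, technical points.
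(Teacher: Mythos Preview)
Your upper bound and your reduction to Dirac masses are essentially the paper's: both test against coherent product states $\psi\otimes\Xi(\zv)$ and use the identification $\inf_{\mathscr{M}_{2,\omega}}=\inf_{\mathscr{M}_{\mathrm{fin}}}$ (the paper cites \cite[Lemma~3.20]{Correggi:2017aa} rather than your variance--continuity argument, but the content is the same).

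For the lower bound you take a genuinely different route. You pick a \emph{general} approximate minimizer $\Psi_{\eps}\in\mathscr{H}$, derive the a~priori bound \eqref{omegacontrol} from Pauli--Fierz coercivity, extract a state-valued Wigner limit $(\mu,(\gamma_{\zv}))$ on the coupled particle--field space, and pass to the limit in the energy by Fatou/lower semicontinuity. This is the philosophy of later work by the same authors, but note that the disintegration with the trace-class family $(\gamma_{\zv})$ is \emph{not} provided by \cref{sec:Wick-quant} or by \cite{Falconi:2017aa}: those references treat Wigner measures on the field space alone. So the piece you flag as the ``main obstacle'' is real, and closing it would require building machinery beyond what the paper has at hand.

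The paper bypasses all of this with a structural shortcut. Since vectors $\psi\otimes\Xi(\zv)$ with $\psi\in\mathscr{D}(-\Delta_{\mathrm{D}}+V_+)$, $\zv\in\homega$, are total in $\mathscr{H}$ and lie in the core, the approximate minimizer is taken directly as a \emph{finite} linear combination
\[
\Pi_{\eps,\delta}=\sum_{j=1}^{M(\delta)}\zeta_{j,\delta}(\eps)\,\psi_{j,\delta}\otimes\Xi(\zv_{j,\delta}),
\]
with only the scalar coefficients depending on $\eps$. The expectation $\meanlr{\Pi_{\eps,\delta}}{H_{\eps}}{\Pi_{\eps,\delta}}$ is computed explicitly; every off-diagonal term carries the coherent-state overlap $e^{-\frac{1}{2\eps}\lVert \zv_{j,\delta}-\zv_{k,\delta}\rVert_{\mathfrak{H}}^{2}}$ and hence vanishes as $\eps\to0$, leaving $\sum_j\lvert\zeta_{j,\delta}\rvert^{2}\meanlr{\psi_{j,\delta}}{\mathcal{J}_{\mathrm{eff}}(\delta_{\zv_{j,\delta}})}{\psi_{j,\delta}}\geq\inf_{\mu\in\mathscr{M}_{\mathrm{fin}}}\underline{\sigma}(\mathcal{J}_{\mathrm{eff}}(\mu))$. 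No Wigner-measure compactness for the coupled system, no a~priori Pauli--Fierz estimate, and no lower-semicontinuity argument are needed. Your approach is more robust and conceptually cleaner (it does not depend on the coherent-state totality trick), but it pays for that with substantial extra infrastructure; the paper's argument is more elementary and entirely self-contained within the tools already set up.
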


\begin{remark}[Boundedness from below]
	\label{rem:bd below}
	\mbox{}	\\
	Since by \cref{selfadthm} $ H_{\eps} $ is bounded from below with bound uniform in $ \eps $, a byproduct of \cref{thm:2} is that the r.h.s. of \eqref{bottom} is also a quantity which is bounded from below. This was indeed not a priori obvious and, in order to hold true, the presence of the field energy $ c(\mu) $ is in fact crucial.
\end{remark}

\section{Convergence in the Quasi-classical Limit}
\label{sec:convergence}
   
This Section is devoted to the proof of \cref{thm:conveff} and \ref{thm:sconv} on the convergence of the partial trace of the PF Hamiltonian to a suitable effective MSO.

\subsection{Convergence to the effective Hamiltonian}

Proof of \cref{thm:conveff} is split in several steps: we first identify the operator $ \heps $ given by \eqref{heps}; next we show that it is self-adjoint on a suitable domain, where the effective operator $ \HHe $ is self-adjoint too; then we prove pointwise convergence as $ \eps \to 0 $ and the proof is thus completed by exploiting a dominated convergence argument.

We first notice that, under the assumptions  \eqref{coulomb}, \eqref{eqn:h3} and \eqref{omegacontrol}, we can explicitly identify the partial trace of $ H $, {\it i.e.}, the operator \eqref{heps}, which takes the form
\bml{
  	\label{hepsexplicit}
  	\mathcal{H}_{\eps} = -\Delta_{\mathrm{D}} + \lf\langle\bm{\varphi}^2\ri\rangle_{\eps}(\xv_1, \ldots, \xv_N) + i \bigl(\mathbf{\alpha}^{*}_{\varepsilon}(\xv_1, \ldots, \xv_N) \cdot \nabla+\nabla\cdot \vec{\alpha}_{\varepsilon}(\xv_1, \ldots, \xv_N)\bigr)\\
   -\vec{\sigma}\cdot \mathbf{B}_{\varepsilon}\lf(\xv_1, \ldots, \xv_N\ri) +V(\xv_1, \ldots, \xv_N)\; 
}
where 
\begin{gather}
  \lf\langle\bm{\varphi}^2\ri\rangle_{\varepsilon}\lf(\xv_1, \ldots, \xv_N\ri)= \meanlr{\Psi_{\varepsilon}}{\bm{\varphi}(\vec{\lambda}(\xv_1, \ldots, \xv_N)) \cdot \bm{\varphi}(\vec{\lambda}(\xv_1, \ldots, \xv_N))}{\Psi_{\varepsilon}}_{\Gamma_{\mathrm{s}}(\mathfrak{H})} ,\\
  \vec{\alpha}_{\varepsilon}\lf(\xv_1, \ldots, \xv_N\ri) : = \braketl{\Psi_{\varepsilon}}{ a(\vec{\lambda}\lf(\xv_1, \ldots, \xv_N\ri))  \Psi_{\varepsilon}}_{\Gamma_{\mathrm{s}}(\mathfrak{H})} ,	\\
  	\mathbf{B}_{\varepsilon}\lf(\xv_1, \ldots, \xv_N\ri) : = \meanlr{\Psi_{\varepsilon}}{\bm{\varphi}\lf(\mathbf{b}\lf(\xv_1, \ldots, \xv_N\ri)\ri)}{ \Psi_{\varepsilon}}_{\Gamma_{\mathrm{s}}(\mathfrak{H})} .
\end{gather}
We also denote by $ \aav_{j,\eps} $, $ \mathbf{B}_{j,\eps} $ the components of $ \vec{\alpha}_{\varepsilon} + \vec{\alpha}_{\eps}^*$, $ \mathbf{B}_{\varepsilon} $ respectively, {\it i.e.},
\begin{gather}
  	\aav_{j,\eps}(\xv) : = \sum_{\gamma = 1}^{d-1} \meanlrlr{\Psi_{\eps}}{a^{\dagger}_{\gamma} \lf(\lambda_{j,\gamma}(\xv) \mathbf{e}_{\gamma} \ri) +a_{\gamma} \lf(\lambda_{j,\gamma}(\xv) \mathbf{e}_{\gamma} \ri)}{\Psi_{\eps}}_{\Gamma_{\mathrm{s}}(\mathfrak{H})} , \\
  		\mathbf{B}_{j,\eps}(\xv) : = \sum_{\gamma = 1}^{d-1} \meanlrlr{\Psi_{\eps}}{a^{\dagger}_{\gamma} \lf(b_{j,\gamma}(\xv) \mathbf{e}_{\gamma} \ri) +a_{\gamma} \lf(b_{j,\gamma}(\xv) \mathbf{e}_{\gamma} \ri)}{\Psi_{\eps}}_{\Gamma_{\mathrm{s}}(\mathfrak{H})}.
\end{gather}

We start by proving self-adjointness of $\mathcal{H}_{\varepsilon}$, $\varepsilon\in (0,1)$, and $\mathcal{H}_{\mathrm{eff}}$ with the aforementioned assumptions
(for a matter of convenience, we omit the explicit dependence on $\mu$ of $\mathcal{H}_{\mathrm{eff}}(\mu)$ in the rest of this section). We denote by $ q_{\heps} $ and $ q_{\HHe} $ the quadratic forms associated to $ \heps $ and $ \HHe $ respectively.

	\begin{proposition}[Self-adjointness of $ \heps $ and $ \HHe $]
 		\label{prop:1}
  		\mbox{}	\\
 		Let $q_{0,\mathrm{D}}[\; \cdot \;] =\meanlr{\, \cdot \,}{- \Delta_{\mathrm{D}} +V_+}{\,\cdot \,}_{L^2(\Lambda^N;\mathbb{C}^s)}$  be the closed form with domain
  $ \mathscr{D}[-\Delta_{\mathrm{D}}+V_+]$. Then, the quadratic forms $ q_{\mathcal{H}_{\varepsilon}}-q_{0,\mathrm{D}} $, $\varepsilon\in (0,1)$, and
  $ q_{\mathcal{H}_{\mathrm{eff}}}-q_{0,\mathrm{D}} $ are Kato-small w.r.t. $q_{0,\mathrm{D}}$.
	\end{proposition}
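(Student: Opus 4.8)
The plan is to treat $\heps$ and $\HHe$ on the same footing. By the explicit expression \eqref{hepsexplicit} and by \eqref{eff}, each of these operators differs from $-\Delta_{\mathrm{D}}+V_+$ by three types of terms: (i) a bounded multiplication or matrix-multiplication operator --- namely $\langle\bm{\varphi}^2\rangle_\eps$ and $\vec{\sigma}\cdot\mathbf B_\eps$, respectively $\sum_j\bigl(|\mathbf A_{j,\mu}|^2+W_{j,\mu}\bigr)$ and $\sum_j\vec{\sigma}_j\cdot\mathbf B_{j,\mu}$; (ii) a first-order differential operator with bounded coefficients --- the symmetric combination $i(\vec{\alpha}_\eps^*\cdot\nabla+\nabla\cdot\vec{\alpha}_\eps)$ of \eqref{hepsexplicit}, respectively $i\sum_j(\mathbf A_{j,\mu}\cdot\nabla_j+\nabla_j\cdot\mathbf A_{j,\mu})$; and (iii) the potential $V_<$. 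Once suitable uniform bounds on the coefficients are available, the statement reduces to the usual KLMN bookkeeping, so the heart of the matter is those bounds.

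The first step is to prove the uniform bounds on the coefficients produced by tracing out the field. Since $\bm{\lambda}\in L^\infty\bigl(\Lambda^N;(\mathscr{D}[\omega+\omega^{-1}])^{dN}\bigr)$, the function $\omega^{-1/2}\bm{\lambda}$ belongs to $L^\infty(\Lambda^N;\mathfrak H^{dN})$, whence for a.e.\ $\xv$ and every $\Psi$ in the form domain of $\mathrm d\Gamma(\omega)$,
\[
  \|a(\bm{\lambda}(\xv))\Psi\|\le\|\omega^{-1/2}\bm{\lambda}(\xv)\|_{\mathfrak H}\,\|\mathrm d\Gamma(\omega)^{1/2}\Psi\|,\qquad \|a^\dagger(\bm{\lambda}(\xv))\Psi\|^2=\|a(\bm{\lambda}(\xv))\Psi\|^2+\eps\,\|\bm{\lambda}(\xv)\|_{\mathfrak H}^2.
\]
As recalled after its statement, Assumption \eqref{omegacontrol} guarantees that $c_\eps$ is bounded uniformly in $\eps\in(0,1)$, so combining the two displays above with Wick ordering ($a_ia_i^\dagger=a_i^\dagger a_i+\eps\|\cdot\|^2$) and the Cauchy--Schwarz inequality gives $\|\vec{\alpha}_\eps\|_{L^\infty}+\|\langle\bm{\varphi}^2\rangle_\eps\|_{L^\infty}+\|\mathbf B_\eps\|_{L^\infty}\le C$, uniformly in $\eps$ (the bound on $\mathbf B_\eps$ additionally uses $\mathbf b\in L^\infty(\Lambda^N;(\mathscr{D}[\omega^{-1}])^{dN})$). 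For the effective coefficients one writes $\langle\zv,\lambda_j(\xv)\hat{\eev}\rangle_{\mathfrak H}=\langle\sqrt\omega\,\zv,\omega^{-1/2}\lambda_j(\xv)\hat{\eev}\rangle_{\mathfrak H}$ and applies Cauchy--Schwarz first in $\mathfrak H$ and then in $L^2(\mathrm d\mu)$, using the second-moment bound $\int_{\mathfrak H_\omega}\mathrm d\mu(\zv)\,\|\sqrt\omega\,\zv\|_{\mathfrak H}^2\le C$ that follows from \eqref{omegacontrol} (cf.\ \eqref{eq:en bound}) together with $\mu\in\mathscr{M}(\mathfrak H_\omega)$; this yields $\|\mathbf A_{j,\mu}\|_{L^\infty}+\|W_{j,\mu}\|_{L^\infty}+\|\mathbf B_{j,\mu}\|_{L^\infty}\le C$.

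With these bounds at hand, fix $\psi\in H^1_0(\Lambda^N;\mathbb C^s)$ in the form domain of $q_{0,\mathrm{D}}$. The zeroth-order pieces are bounded self-adjoint operators and hence form-bounded with relative bound $0$. The first-order pieces are bounded in modulus, as quadratic forms, by $C\|\vec{\alpha}_\eps\|_{L^\infty}\|\psi\|_2\|\nabla\psi\|_2$ (respectively with $\mathbf A_{j,\mu}$ in place of $\vec{\alpha}_\eps$); using Young's inequality with an arbitrary parameter $\eta>0$ and $q_{0,\mathrm{D}}[\psi]\ge\|\nabla\psi\|_2^2$ (as $V_+\ge0$), they are bounded by $\eta\,q_{0,\mathrm{D}}[\psi]+C\eta^{-1}\|\psi\|_2^2$, i.e.\ they are infinitesimally form-small with respect to $q_{0,\mathrm{D}}$, uniformly in $\eps$. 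Finally $V_<\in\mathfrak{K}_{<}(\Lambda^N)$ satisfies by definition $|q_{V_<}[\psi]|\le a\|\nabla\psi\|_2^2+C\|\psi\|_2^2\le a\,q_{0,\mathrm{D}}[\psi]+C\|\psi\|_2^2$ for a fixed $a<1$. Summing the three contributions and choosing $\eta$ so small that $a+\eta<1$ produces constants $\tilde a<1$ and $\tilde C<\infty$, independent of $\eps$ (and of $\mu$), with $|q_{\heps}[\psi]-q_{0,\mathrm{D}}[\psi]|\le\tilde a\,q_{0,\mathrm{D}}[\psi]+\tilde C\|\psi\|_2^2$ and the analogous inequality for $q_{\HHe}$; this is precisely Kato-smallness, and by the KLMN theorem it makes $\heps$ and $\HHe$ self-adjoint on the domain of $-\Delta_{\mathrm{D}}+V_+$, with form domain $\mathscr{D}[-\Delta_{\mathrm{D}}+V_+]$.

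The main obstacle is precisely the uniform-in-$\eps$ control in the second paragraph: one has to absorb the factor $\omega^{1/2}$ lost in $\mathrm d\Gamma(\omega)^{1/2}$ into the form factor, which is possible only because $\bm{\lambda}$ (and $\mathbf b$) take values in $\mathscr{D}[\omega^{-1}]$, and only afterwards can one invoke Assumption \eqref{omegacontrol} to bound $c_\eps$ and the second moment of $\mu$. Without the $\omega^{-1/2}$ factor $a(\bm{\lambda}(\xv))$ cannot be controlled uniformly, since $\omega$ is only assumed to be strictly positive $\nu$-almost everywhere and may come arbitrarily close to $0$; everything else in the argument is standard.
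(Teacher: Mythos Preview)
Your proof is correct and follows essentially the same approach as the paper: establish uniform $L^\infty$ bounds on all the field-induced coefficients via the standard estimate $\|a(f)\Psi\|\le\|\omega^{-1/2}f\|\,\|\mathrm d\Gamma(\omega)^{1/2}\Psi\|$ together with assumption \eqref{omegacontrol} (and its measure-side consequence \eqref{eq:en bound}), then conclude by KLMN. You are in fact more explicit than the paper in spelling out the Young-inequality step for the first-order terms and the final choice of $\eta$ with $a+\eta<1$; the paper simply remarks that once the $L^\infty$ bounds are in hand the result is ``straightforward'' and that all terms but $V_<$ are infinitesimally Kato-small. One small caveat: KLMN by itself only guarantees that the form domain is $\mathscr{D}[-\Delta_{\mathrm{D}}+V_+]$, not that the operator domain coincides with $\mathscr{D}(-\Delta_{\mathrm{D}}+V_+)$; your closing sentence slightly overstates what KLMN yields, though this does not affect the proposition, which asks only for form Kato-smallness.
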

	
	\begin{proof}
  		By hypothesis, we already know that $V_{<}\in \mathfrak{K}_{<}(\Lambda^N)$ is a Kato-small perturbation. We now show that, for any $ j = 1, \ldots, N $,
  		\begin{equation}
    		\label{eq:1}
    			\begin{gathered}
      			\vec{\alpha}_{\varepsilon}\in L^{\infty}\lf(\Lambda^N;\mathbb{C}^{dN}\ri);	\qquad	\mathbf{B}_{j,\varepsilon}\,,\,\aav_{j,\mu}\,,\,\mathbf{B}_{j,\mu}\in L^{\infty}\lf(\Lambda;\mathbb{R}^{dN}\ri);\\ \lf\langle\bm{\varphi}^2\ri\rangle_{\eps},\; W_{j,\mu} \in L^{\infty}\lf(\Lambda_N,\mathbb{R}^+\ri),
    			\end{gathered}
  		\end{equation}
  		uniformly w.r.t. $\varepsilon\in (0,1)$. Since the proofs are basically the same, for the sake of brevity, we take into account only one $\varepsilon$-function and one $\mu$-function. Let us consider first $ \lf\langle\bm{\varphi}^2\ri\rangle_{\eps} $:
  		\bmln{
      		\lf| \lf\langle\bm{\varphi}^2\ri\rangle_{\eps}(\xv_1, \ldots, \xv_N)  \ri| \leq \sum_{j=1}^{N} \sum_{\gamma =1}^{d-1} \bigl\| \varphi(\lambda_j(\xv_j) \eev_{\gamma}) \Psi_{\varepsilon} \bigr\|_{\Gamma_{\mathrm{s}}(\mathfrak{H})}^2	\\
      		\leq 2\sum_{j=1}^{N} \sum_{\gamma =1}^{d-1} \lf[ \bigl\| a(\lambda_j(\xv_j) \eev_{\gamma}) \Psi_{\varepsilon} \bigr\|_{\Gamma_{\mathrm{s}}(\mathfrak{H})}^2+\bigl\| a^{\dagger}(\lambda_j(\xv_j) \eev_{\gamma}) \Psi_{\varepsilon} \bigr\|_{\Gamma_{\mathrm{s}}(\mathfrak{H})}^2 \ri] \\ 
      		\leq 4 d \sum_{j=1}^{N} \sup_{\xv \in \R^d} \lf[ \bigl\| \omega^{-1/2} \lambda_j(\xv) \bigr\|_{L^2(\mathcal K,\diff \nu)}^2 \lf\| \mathrm{d}\Gamma(\omega)^{1/2} \Psi_{\varepsilon}  \ri\|_{\Gamma_{\mathrm{s}}(\mathfrak{H})}^2 + \eps \lf\| \lambda_{j}(\xv) \ri\|_{L^2(\mathcal K, \diff \nu)}^2 \ri].
    		}
  		It then follows that
  		\begin{equation}
    			\lf\| \lf\langle\bm{\varphi}^2\ri\rangle_{\eps}  \ri\|_{L^{\infty}\lf( \Lambda^N\ri)} \leq 4  \lf[ C(1) \bigl\| \omega^{-1/2} \vec{\lambda}( \: \cdot \:) \bigr\|^2_{L^{\infty}\lf( \Lambda^N; \mathfrak{H} \ri)} + \eps \bigl\| \vec{\lambda}( \: \cdot \:) \bigr\|_{L^{\infty}\lf( \Lambda^N; \mathfrak{H} \ri)}^2 \ri]\; ,
  		\end{equation}
  where $C(1)$ is the constant $C(\delta)$ defined in \eqref{omegacontrol} for $ \delta = 1 $. The estimate of $\aav_{j,\mu}$ is very similar:
 	 \begin{equation}
    		\begin{split}
      		\lf| \aav_{j,\mu}(\xv) \ri|_{}^{} \leq 2 \int_{\mathfrak{H}_{\omega}}^{} \mathrm{d}\mu(\zv)  \lf| \braket{\zv}{ \lambda_{j}(\xv)}_{\mathfrak{H}} \ri| \leq 2\bigl\| \omega^{-1/2} \vec{\lambda}( \: \cdot \:) \bigr\|_{L^{\infty}\lf( \Lambda^N; \mathfrak{H} \ri)} \int_{\mathfrak{H}_{\omega}}^{}  \diff \mu(\zv) \: \lf\| \sqrt{\omega} \zv \ri\|_{\mathfrak{H}} \\\leq 2 C(\tx\frac{1}{2}) \bigl\| \omega^{-1/2} \vec{\lambda}( \: \cdot \:) \: \bigr\|^2_{L^{\infty}\lf( \Lambda^N; \mathfrak{H} \ri)},
    		\end{split}
  	\end{equation}
  	again by \eqref{omegacontrol} (see also \eqref{eq:en bound}). Using then the assumption \eqref{eq:lam} on $ \vec{\lambda} $, the result easily follows. In the case of quantities depending on $ \mathbf{b} $, one can easily reproduce the argument and use \eqref{eq:b1}.
  	
  	Once the uniform boundedness of the quantities in \eqref{eq:1}, it is straightforward to get the final result, {\it i.e.}, Kato-smallness of the quadratic forms in the statement (in fact, infinitesimal Kato-smallness of all the terms but $ V_< $).
\end{proof}

Next, we prove the pointwise convergence a.e. w.r.t. $ \xv \in \Lambda $ of $\aav_{j,\varepsilon} $ and
$\mathbf{B}_{j,\varepsilon}$ to $\aav_{j,\mu}$ and $\mathbf{B}_{j,\mu}$ respectively. Similarly, we also prove the a.e. pointwise convergence in $ \Lambda^N $ of $\langle\vec\varphi^2\rangle_{\varepsilon} $ to $ \mu\lf(\mathfrak{a}^2_{\zv} \ri) $ (see \eqref{eq:fraka} for the definition).

	\begin{lemma}[Pointwise convergence]
  		\label{point}
  		\mbox{}	\\
  		Let $\Psi_{\eps} \in \Gamma_{\mathrm{s}}(\mathfrak{H})$ normalized satisfy \eqref{omegacontrol}, so that $\Psi_{\varepsilon}\to \mu\in \mathscr{M}\bigl(\mathfrak{H}_{\omega}\bigr)$ in the sense of \eqref{eq:conv}. Then, for a.e. $ \xv \in \Lambda$ and for any $ j = 1, \ldots, N $,
  		\beq
  			\aav_{j,\eps}(\xv) \xrightarrow{ \eps \rightarrow 0} \aav_{j,\mu} (\xv),	\qquad \mathbf{B}_{j,\eps}(\xv) \xrightarrow{ \eps \rightarrow 0} \mathbf{B}_{j,\mu} (\xv).
		\eeq
		Moreover, for a.e. $ \xv_1, \ldots, \xv_N \in \Lambda^N $,
   		\beq
   			\lf\langle\vec\varphi^2\ri\rangle_{\varepsilon} (\xv_1, \ldots, \xv_N) \xrightarrow[\eps \rightarrow 0]{}  \mu\lf(\mathfrak{a}^2_{\zv}\lf(\xv_1, \ldots, \xv_N \ri)\ri).
		\eeq
  	\end{lemma}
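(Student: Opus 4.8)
The plan is to recognize each of the three quantities as a quasi-classical moment of the state $\Psi_{\eps}$ — after one normal-ordering rearrangement in the quadratic case — and to pass to the limit using the convergence of moments of Wick-quantized observables collected in \cref{sec:Wick-quant}. Preliminarily, note that the hypotheses $\bm{\lambda}\in L^{\infty}\bigl(\Lambda^N;(\mathscr{D}[\omega+\omega^{-1}])^{dN}\bigr)$ and $\mathbf{b}\in L^{\infty}\bigl(\Lambda^N;(\mathscr{D}[\omega^{-1}])^{dN}\bigr)$ provide a full-measure subset of $\Lambda^N$ on which $\lambda_j(\xv_j)\eev_{\gamma}$ and $b_j(\xv_j)\eev_{\gamma}$ lie in $\mathfrak{H}\cap\mathfrak{H}_{\omega}$, with norms dominated uniformly by $\|\bm{\lambda}\|_{\infty}$ and $\|\mathbf{b}\|_{\infty}$; since in particular $\omega^{-1/2}\lambda_j(\xv_j),\omega^{-1/2}b_j(\xv_j)\in L^2(\mathcal{K},\diff\nu)$, the pairings $\braket{\zv}{\lambda_j(\xv_j)\eev_{\gamma}}_{\mathfrak{H}}$ and $\braket{\zv}{b_j(\xv_j)\eev_{\gamma}}_{\mathfrak{H}}$ are finite for $\mu$-a.e.\ $\zv\in\mathfrak{H}_{\omega}$ by Cauchy--Schwarz, so the right-hand sides of \eqref{eq:Aclassic}--\eqref{eq:Bclassic} and of \eqref{eq:Wclassic} are well defined and legitimately appear as limits. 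All limits below are along the sequence $\eps_n\to 0$ selected by $\Psi_{\eps}\to\mu$.

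For the two linear quantities, fix $\xv$ in the above full-measure set. Then $\aav_{j,\eps}(\xv)$ is the expectation on $\Psi_{\eps}$ of $\sum_{\gamma=1}^{d-1}\bigl(a^{\dagger}_{\gamma}(\lambda_j(\xv)\eev_{\gamma})+a_{\gamma}(\lambda_j(\xv)\eev_{\gamma})\bigr)$, that is, of the Wick quantization of the symbol $\zv\mapsto\sum_{\gamma}2\Re\braket{\zv_{\gamma}}{\lambda_j(\xv)\eev_{\gamma}}_{L^2(\mathcal{K},\diff\nu)}=2\Re\braket{\zv}{\lambda_j(\xv)\hat{\eev}}_{\mathfrak{H}}$; the convergence of first quasi-classical moments then gives
\[
  \aav_{j,\eps}(\xv)\ \xrightarrow[\eps\rightarrow 0]{}\ 2\Re\int_{\mathfrak{H}_{\omega}}\diff\mu(\zv)\,\braket{\zv}{\lambda_j(\xv)\hat{\eev}}_{\mathfrak{H}}=\aav_{j,\mu}(\xv),
\]
and the identical argument with $b_j$ in place of $\lambda_j$ yields $\mathbf{B}_{j,\eps}(\xv)\to\mathbf{B}_{j,\mu}(\xv)$.

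For the quadratic quantity, fix $(\xv_1,\ldots,\xv_N)$ in the full-measure set. Using $\bm{\varphi}(\bm{\lambda})=a^{\dagger}(\bm{\lambda})+a(\bm{\lambda})$, expanding $\bm{\varphi}(\bm{\lambda})\cdot\bm{\varphi}(\bm{\lambda})$ and commuting all annihilation operators to the right by the canonical commutation relations \eqref{eq:ccr2}--\eqref{eq:cc3} (distinct polarizations commute and $|\eev_{\gamma}(\kv)|=1$), one obtains a Wick-ordered quadratic expression plus the c-number $\eps\,(d-1)\sum_{j=1}^{N}\|\lambda_j(\xv_j)\|_{L^2(\mathcal{K},\diff\nu)}^2$. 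The Wick-ordered expression is the Wick quantization of the square of the symbol of $\bm{\varphi}(\bm{\lambda})$, namely of $\mathfrak{a}_{\zv}^2(\xv_1,\ldots,\xv_N)$ with $\mathfrak{a}_{\zv}=2\Re\braket{\zv}{\bm{\lambda}}_{\mathfrak{H}}$ as in \eqref{eq:fraka}; hence, by the convergence of second-order quasi-classical moments, its expectation on $\Psi_{\eps}$ tends to $\int_{\mathfrak{H}_{\omega}}\mathfrak{a}_{\zv}^2(\xv_1,\ldots,\xv_N)\,\diff\mu(\zv)=\mu\bigl(\mathfrak{a}_{\zv}^2(\xv_1,\ldots,\xv_N)\bigr)$, while the c-number term is $\eps$ times a quantity bounded uniformly in $\eps$ by $\|\bm{\lambda}\|_{\infty}^2$ and therefore vanishes. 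Altogether $\langle\bm{\varphi}^2\rangle_{\eps}(\xv_1,\ldots,\xv_N)\to\mu\bigl(\mathfrak{a}_{\zv}^2(\xv_1,\ldots,\xv_N)\bigr)$.

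The step I expect to be the real obstacle is the quadratic one — the linear statements are essentially a restatement of $\Psi_{\eps}\to\mu$. For the cited moment convergence to reach the degree-two creation/annihilation monomials one needs, on the quantum side, a uniform-in-$\eps$ bound along $\Psi_{\eps}$ on the relevant powers of $\mathrm{d}\Gamma(\omega)$ and, on the classical side, finiteness of the limiting integral $\mu(\mathfrak{a}_{\zv}^2)$. Both are supplied by Assumption \eqref{omegacontrol} with $\delta\geq 1$: it forces every cluster point $\mu$ to be concentrated as a Radon measure on $\mathfrak{H}_{\omega}$ with $\int_{\mathfrak{H}_{\omega}}\|\sqrt{\omega}\zv\|_{\mathfrak{H}}^{2}\,\diff\mu<\infty$ (cf.\ \eqref{eq:en bound}), whence $\int_{\mathfrak{H}_{\omega}}|\braket{\zv}{\lambda_j(\xv)\eev_{\gamma}}_{\mathfrak{H}}|^{2}\,\diff\mu<\infty$ by Cauchy--Schwarz and $\omega^{-1/2}\lambda_j\in L^2(\mathcal{K},\diff\nu)$, and it gives $\meanlr{\Psi_{\eps}}{\mathrm{d}\Gamma(\omega)^{\delta}}{\Psi_{\eps}}\leq C(\delta)$ uniformly in $\eps$, which is exactly the hypothesis under which the machinery of \cref{sec:Wick-quant} yields convergence of the quasi-classical moments up to the order we use. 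Once this is granted, all three convergences follow as above.
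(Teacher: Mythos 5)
Your route coincides with the paper's: represent $\aav_{j,\eps}$, $\mathbf{B}_{j,\eps}$ and $\langle\vec\varphi^2\rangle_{\eps}$ as expectations of Wick-quantized symbols, normal-order the quadratic one (the $\eps$-proportional c-number then drops out), and pass to the limit via the classical-limit machinery of \cref{sec:Wick-quant}. The substantive step you omit — and which occupies most of the paper's proof — is the verification that the symbols are \emph{compact}. \cref{eqn:comp} requires $s\in\mathscr{P}^{\infty}_{p,q}(\homega)$, not merely $s\in\mathscr{P}_{p,q}(\homega)$: the paper identifies each symbol $s_{i,\xv}$ with an operator $\tilde s_{i,\xv}$ and checks explicitly that these are finite-rank (their ranges are spanned by the $d-1$ vectors $\lambda_j(\xv)\eev_\gamma$, by their symmetrized tensor squares, and, for $\tilde s_{5,\xv}$, by the image of the non-orthogonal projector $\ket{\lambda_j(\xv)}\bra{\lambda_j(\xv)}$). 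Your closing paragraph flags only the moment bound \eqref{omegacontrol} and the $\mu$-integrability of $\mathfrak{a}_{\zv}^2$ as the ingredients that could fail; these are needed but are not on their own sufficient, and without compactness the quantum expectations of Wick monomials need not converge to the classical ones (this is exactly the loss-of-compactness issue infinite-dimensional semiclassics has to control). For the same reason the aside that the linear statements are ``essentially a restatement of $\Psi_{\eps}\to\mu$'' is misleading: the convergence $\Psi_{\eps}\to\mu$ is in the $\mathfrak{P}\vee\mathfrak{T}$ topology, tested on compactly supported cylindrical observables and generating functionals, and upgrading it to convergence of expectations of unbounded Wick monomials already for degree one requires precisely the compactness of the symbol together with \eqref{omegacontrol}. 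So add the finite-rank check — it is short, but it is the load-bearing part of the argument.
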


	\begin{proof}
          The idea is to use the tools of semiclassical analysis in infinite dimensions outlined in \cref{sec:Wick-quant} (see
          \cite{ammari:nier:2008,Falconi:2017aa} for additional details). Let us start with $\aav_{\varepsilon}$: we need to prove that the
          corresponding Fock space operator-valued vector $\vec\varphi(\vec{\lambda})$ is actually the Wick-quantization of a compact (finite-rank)
          symbol.
  
  		Using the correspondences (\ref{eqn:a}) and (\ref{eqn:a*}), we may write for any $ j = 1, \ldots, N $, $ \gamma = 1, \ldots, d-1 $ and a.e. $\xv \in \Lambda$:
  		\bmln{
   			\vec\varphi_{j}(\lambda_j(\xv)) = \sum_{\gamma = 1}^{d-1} \lf[ \braket{\lambda_j(\xv) \eev_{\gamma}}{z_{\gamma}}_{L^2(\mathcal{K},\diff \nu)}^{\text{Wick}} + \braket{z_{\gamma}}{\lambda_j(\xv) \eev_{\gamma}}_{L^2(\mathcal{K},\diff \nu)}^{\text{Wick}} \ri] \\
   			= \braket{\lambda_j(\xv) \hat{\eev}}{\zv}_{\homega}^{\text{Wick}} + \braket{\zv}{\lambda_j(\xv) \hat{\eev}}_{\homega}^{\text{Wick}}= : \sv_{1,\xv}(\zv)^{\text{Wick}}+\sv_{2,\xv}(\zv)^{\text{Wick}}\;,
		}
  		where we have kept track of the dependence on the particle coordinate $ \xv \in \Lambda $. Notice that we have used the compact notation $ \zv = \big( z_1, \ldots, z_{d-1} \big) \in \hh $. The classical functions $ \sv_{i,\xxv} $, $ i = 1,2 $, are actually polynomial symbols in the sense of \cref{def:poly}: one can easily see that it is possible to write
		\bdm
			\sv_{1,\xv}(\zv) = \braket{1}{\tilde{\sv}_{1,\xv} \: \zv}_{\mathbb{C}},	\qquad		\sv_{2,\xv}(\zv) = \braket{\zv}{\tilde{\sv}_{2,\xv} \: 1}_{\mathfrak{H}},
		\edm
		where $ \tilde{\sv}_{1,\xv}: \mathfrak{H} \cup \mathfrak{H}_{\omega} \to \C $ is simply the projection onto $ \lambda_j(\xv) \eev_{\gamma} $ component-wise, while $ \tilde{\sv}_{2,\xv}: \C \to \mathfrak{H} \cup \mathfrak{H}_{\omega} $ multiplies any complex number by $  \lambda_j(\xv) \eev_{\gamma}  $ again component-wise. Under the assumptions we made on $ \vec\lambda $, both operators $ \tilde{\sv}_{i,\xv}  $, $ i = 1,2 $, are bounded and, in fact, $ \tilde{\sv}_{2,\xv} = \tilde{\sv}_{1,\xv}^{\dagger} $. Furthermore, both $\tilde{\sv}_{1,\xv} $ and $ \tilde{\sv}_{2,\xv} $ are finite-rank and thus compact. For the first one it is obvious, but also the range of $ \tilde{\sv}_{2,\xv} $ is the one-dimensional subspace spanned by $  \lambda_j(\xv) \eev_{\gamma} $, $ \gamma = 1, \ldots, d-1 $. In conclusion, for any $ \xv \in \Lambda $, $ \sv_{1,\xv} \in \mathscr{P}^{\infty}_{1,0}\bigl(\mathfrak{H}_{\omega} \bigr)$ and
  $\sv_{2,\xv}\in \mathscr{P}^{\infty}_{0,1}\bigl(\mathfrak{H}_{\omega} \bigr)$.
  
    		Applying now \cref{eqn:comp}, we obtain
 		\bml{
    			\aav_{j,\varepsilon}(\xv) = \meanlrlr{\Psi_{\eps}}{\vec\varphi_{j}\lf(\lambda_j(\xv)\ri)}{\Psi_{\eps}}_{\fock} \\
    			= \meanlrlr{\Psi_{\eps}}{\sv_{1,\xv}(\zv)^{\text{Wick}}+\sv_{2,\xv}(\zv)^{\text{Wick}}}{\Psi_{\eps}}_{\fock} \xrightarrow[\eps \rightarrow 0]{} \int_{\mathfrak{H}_{\omega}}^{} \mathrm{d}\mu(\zv) \: \lf( \sv_{1,\xv}(\zv) +\sv_{2,\xv}(\zv) \ri) \\
    			= 2 \Re \int_{\mathfrak{H}_{\omega}}^{} \mathrm{d}\mu(\zv)  \braket{\zv}{ \lambda_{j}(\xv) \hat{\eev}}_{\mathfrak{H}} = \aav_{j,\mu}(\xv)\;,
 		}
  		for a.e. $  \xv \in \Lambda $. 
  		
  		The proof for $\mathbf{B}_{\varepsilon}$ is perfectly analogous. Let us outline the proof for $\langle \vec\varphi^2\rangle_{\varepsilon}$. Using the canonical commutation relations and
  \eqref{corresp}, we obtain
  		\begin{align*}
  			\vec{\varphi}(\vec{\lambda}) \cdot \vec{\varphi}(\vec{\lambda}) 	&= \sum_{j = 1}^N \lf[ \lf( a(\lambda_j(\xv_j) \hat{\eev}) \ri)^2 + \lf( a^{\dagger}(\lambda_j(\xv_j) \hat{\eev}) \ri)^2 +2a^{\dagger}(\lambda_j(\xv_j) \hat{\eev})\cdot a(\lambda_j(\xv_j) \hat{\eev}) + \varepsilon \lf\|  \lambda_j \hat{\eev} \ri\|_{\mathfrak{H}^{d}}^2 \ri]	\\
		    & = : \sum_{j = 1}^N \lf[ s_{3,\xv_j}(\zv)^{\text{Wick}}+ s_{4,\xv_j}(\zv)^{\text{Wick}}+ s_{5,\xv_j}(\zv)^{\text{Wick}}+ \varepsilon \lf\|  \lambda_j \hat{\eev} \ri\|_{\mathfrak{H}^{d}}^2 \ri],
  		\end{align*} 
  		where $ a(\mathbf{v})^2 = a(\mathbf{v}) \cdot a(\mathbf{v}) $ and similarly for $ a^{\dagger} $, and we have omitted the dependence on $ j $ of the symbols to simplify the notation, since the behavior is perfectly the same. The above symbols have the following explicit forms
  		\begin{equation*}
    			s_{3,\xv}(\zv) = \braketr{1}{\tilde{s}_{3,\xv} \zv \otimes \zv}_{\mathbb{C}}\; ,\; s_{4,\xv}(\zv)= \braketr{\zv \otimes \zv}{\tilde{s}_{4,\xv}\, 1}_{\mathfrak{H}\otimes _{\mathrm{s}}\mathfrak{H}}\; ,\; s_{5,\xv}(\zv)= \braketr{\zv}{\tilde{s}_{5,\xv} \, \zv}_{\mathfrak{H}}\; ;
 		\end{equation*}
  		where $ \tilde{s}_{3,\xv}: \lf( \mathfrak{H}  \otimes_{\mathrm{s}} \mathfrak{H} \cup \homega   \otimes_{\mathrm{s}}  \homega \ri)^2 \to \C  $ and $ \tilde{s}_{4,\xv}: \C \to \lf( \mathfrak{H}  \otimes_{\mathrm{s}} \mathfrak{H} \cup \homega   \otimes_{\mathrm{s}}  \homega \ri)^2 $ are actually one the adjoint of the other, {\it i.e.}, $ \tilde{s}_{4,\xv} = \tilde{s}_{3,\xv}^{\dagger} $, and $ \tilde{s}_{5,\xv} $ maps $ \mathfrak{H}  \otimes_{\mathrm{s}} \mathfrak{H} \cup \homega   \otimes_{\mathrm{s}}  \homega $ to itself.  Following the same reasoning as above, one can easily show that any $\tilde{s}_{j,\xv}$, $j=3,4,5$, is finite rank. Let us prove it explicitly for $\tilde{s}_{5,\xv}$: it
  has the form of a (non-orthogonal) projector
  		\begin{equation*}
    			\ket{\lambda_{j}(\xv)}\bra{\lambda_{j}(\xv)},
  		\end{equation*}
  		and thus its range is the one-dimensional space $ \mathrm{span}_{\mathbb{C}}\{ \lambda_{x,j} \} $. Therefore, we conclude that
  $ s_{3,\xv}(\zv)\in \mathscr{P}^{\infty}_{2,0}\bigl(\homega\bigr)$,
  $ s_{4,\xv}(\zv) \in \mathscr{P}^{\infty}_{0,2}\bigl(\homega\bigr)$, and
  $ s_{5,\xv}(\zv) \in \mathscr{P}^{\infty}_{1,1}\bigl(\homega \bigr)$ for any $ j =1, \ldots, N $ and a.e. $ \xv \in \Lambda $. Then again by \cref{eqn:comp}, we get the convergence
  		\begin{equation*}
    			\meanlrlr{\Psi_{\eps}}{\vec{\varphi}(\vec{\lambda}) \cdot \vec{\varphi}(\vec{\lambda})}{\Psi_{\eps}}_{\fock} \xrightarrow[\eps \rightarrow 0]{} \mu\lf(\mathfrak{a}^2_{\zv}\lf(\xv_1, \ldots, \xv_N \ri)\ri).
  		\end{equation*}
	\end{proof}

In the proof of \cref{thm:conveff}, we also need the following technical estimate.

	\begin{lemma}
  		\label{lemma:1}
  		\mbox{}	\\
  		For any $0\leq r\leq 1$, $ -\xi \in \varrho(\mathcal{H}_{\mathrm{eff}})$, and $\phi\in L^2 (\Lambda^N;\mathbb{C}^s)$,
  $(\mathcal{H}_{\mathrm{eff}}+\xi)^{-1}\phi\in H^r(\Lambda^N;\mathbb{C}^s)$. In particular, there exists a finite constant $C$, such that
  		\begin{equation*}
    			\bigl\lVert (\mathcal{H}_{\mathrm{eff}}+\xi)^{-1}\phi  \bigr\rVert_{H^r(\Lambda^N;\mathbb{C}^s)}^{}\leq C\bigl\lVert \phi  \bigr\rVert_{L^2(\Lambda^N;\mathbb{C}^s)}^{}\; .
 		\end{equation*}
	\end{lemma}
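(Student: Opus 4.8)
The plan is to prove the two endpoint cases $r=0$ and $r=1$ directly and then fill in the range $0<r<1$ by interpolation of the target space. For $r=0$ there is nothing to do: $\mathcal{H}_{\mathrm{eff}}$ is self-adjoint by \cref{prop:1} and $-\xi\in\varrho(\mathcal{H}_{\mathrm{eff}})$, so $(\mathcal{H}_{\mathrm{eff}}+\xi)^{-1}$ is bounded on $L^2(\Lambda^N;\mathbb{C}^s)$ with norm $\dist(-\xi,\sigma(\mathcal{H}_{\mathrm{eff}}))^{-1}$.

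For $r=1$ I would argue at the level of quadratic forms. Set $\psi:=(\mathcal{H}_{\mathrm{eff}}+\xi)^{-1}\phi$, so that $\psi\in\mathscr{D}(\mathcal{H}_{\mathrm{eff}})\subset\mathscr{D}[\mathcal{H}_{\mathrm{eff}}]=\mathscr{D}[-\Delta_{\mathrm{D}}+V_+]\subset H^1_0(\Lambda^N;\mathbb{C}^s)$, and $\|\psi\|_{L^2}\leq C\|\phi\|_{L^2}$ by the $r=0$ case. Pairing $(\mathcal{H}_{\mathrm{eff}}+\xi)\psi=\phi$ with $\psi$ gives $q_{\mathcal{H}_{\mathrm{eff}}}[\psi]=\braket{\psi}{\phi}-\xi\|\psi\|_{L^2}^2$, hence $|q_{\mathcal{H}_{\mathrm{eff}}}[\psi]|\leq\|\psi\|_{L^2}\|\phi\|_{L^2}+|\xi|\,\|\psi\|_{L^2}^2\leq C\|\phi\|_{L^2}^2$. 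By \cref{prop:1} the form $q_{\mathcal{H}_{\mathrm{eff}}}-q_{0,\mathrm{D}}$ is Kato-small with respect to $q_{0,\mathrm{D}}$, i.e. $|q_{\mathcal{H}_{\mathrm{eff}}}[\psi]-q_{0,\mathrm{D}}[\psi]|\leq a\,q_{0,\mathrm{D}}[\psi]+C\|\psi\|_{L^2}^2$ for some $a<1$; rearranging, $(1-a)\,q_{0,\mathrm{D}}[\psi]\leq|q_{\mathcal{H}_{\mathrm{eff}}}[\psi]|+C\|\psi\|_{L^2}^2\leq C\|\phi\|_{L^2}^2$. Since $q_{0,\mathrm{D}}[\psi]=\|\nabla\psi\|_{L^2}^2+\meanlr{\psi}{V_+}{\psi}\geq\|\nabla\psi\|_{L^2}^2$ (because $V_+\geq0$), this yields $\|\psi\|_{H^1(\Lambda^N;\mathbb{C}^s)}^2=\|\psi\|_{L^2}^2+\|\nabla\psi\|_{L^2}^2\leq C\|\phi\|_{L^2(\Lambda^N;\mathbb{C}^s)}^2$.

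For $0<r<1$ I would interpolate. Since $\Lambda$ has smooth boundary, $H^r(\Lambda^N;\mathbb{C}^s)=\bigl[L^2(\Lambda^N;\mathbb{C}^s),H^1(\Lambda^N;\mathbb{C}^s)\bigr]_r$ by standard interpolation theory (for $\Lambda=\mathbb{R}^d$ this is immediate via the Fourier transform). The operator $T:=(\mathcal{H}_{\mathrm{eff}}+\xi)^{-1}$ is bounded from $L^2$ into $L^2$ and, by the previous step, from $L^2$ into $H^1$; interpolating the range gives $T\colon L^2\to H^r$ bounded, with $\|T\|_{L^2\to H^r}\leq\|T\|_{L^2\to L^2}^{1-r}\,\|T\|_{L^2\to H^1}^{r}$, which is exactly the claimed bound for all $0\leq r\leq1$.

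I do not anticipate a serious obstacle: the statement is essentially a soft corollary of the self-adjointness of $\mathcal{H}_{\mathrm{eff}}$ on the form domain of $-\Delta_{\mathrm{D}}+V_+$ proved in \cref{prop:1}, combined with interpolation. The two points that deserve a little care are the inclusion $\mathscr{D}[-\Delta_{\mathrm{D}}+V_+]\subset H^1_0(\Lambda^N;\mathbb{C}^s)$ and the use of Kato-smallness to upgrade the a priori bound on $q_{\mathcal{H}_{\mathrm{eff}}}[\psi]$ into a bound on $\|\nabla\psi\|_{L^2}$; both are already implicit in the proof of \cref{prop:1}.
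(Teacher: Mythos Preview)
Your proof is correct and follows essentially the same approach as the paper: both handle $r=0$ trivially, prove $r=1$ by a quadratic-form argument, and then interpolate. The only difference is presentational: the paper expands $-\Delta_{\mathrm{D}}+V_+$ as $\mathcal{H}_{\mathrm{eff}}+\xi$ plus explicit correction terms and bounds each one by hand, whereas you invoke the Kato-smallness statement of \cref{prop:1} directly to pass from a bound on $q_{\mathcal{H}_{\mathrm{eff}}}[\psi]$ to one on $q_{0,\mathrm{D}}[\psi]\geq\|\nabla\psi\|_2^2$ --- a cleaner packaging of the same estimate.
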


	\begin{proof}
  		For $r=0$, the result holds trivially. It is then sufficient to prove it for $r=1$, and obtain all the intermediate cases by interpolation, or,
  since we do not need to optimize the constants, simply using that $(1+\lvert \kkv \rvert_{}^2)^{r}\leq (1+\lvert \kkv \rvert_{}^2)^{r'}$, for any
  $r\leq r'$ ($\kkv$ being the dual variable of $ \xxv \in \Lambda^N$).

  		For $r=1$, we can use the trivial identity for any $ \psi \in H^2(\Lambda^N;\C^s) $
  		\begin{equation*}
    			\lVert \psi  \rVert_{H^1(\Lambda^N;\mathbb{C}^s)}^2= \meanlrlr{\psi}{ 1-\Delta_{\mathrm{D}}}{\psi}_{L^2(\Lambda^N;\mathbb{C}^s)}\;
  		\end{equation*}
 		where we recall that $ \Delta_{\mathrm{D}} := \Delta_{\mathrm{D}} \otimes \mathds{1}_{s\times s}$ is the Laplacian with domain $ H^2_0(\Lambda^N; \C^s) $. In fact, by positivity of the Laplacian, the above identity also holds true as an inequality in the whole $ L^2(\Lambda^N;\C^s) $. Therefore, assuming without loss of generality that $ \xi \in \R $, we have that
  		\bmln{
      		\bigl\lVert (\mathcal{H}_{\mathrm{eff}}+\xi)^{-1}\phi  \bigr\rVert_{H^1(\Lambda^N;\mathbb{C}^s)}^2 \leq  \meanlr{(\mathcal{H}_{\mathrm{eff}}+\xi)^{-1}\phi}{1-\Delta_{\mathrm{D}}+ V_+}{(\mathcal{H}_{\mathrm{eff}}+\xi)^{-1}\phi}_2 \\
      = \braketr{(\mathcal{H}_{\mathrm{eff}}+\xi)^{-1}\phi}{(\HHe + \xi)(\mathcal{H}_{\mathrm{eff}}+\xi)^{-1}\phi}_2	\\
      +  \sum_{j = 1}^N \meanlr{(\mathcal{H}_{\mathrm{eff}}+\xi)^{-1}\phi}{-2i \aav_{j,\mu}(\xv_j)\cdot \nabla_j + \vec{\sigma}_j \cdot \mathbf{B}_{j,\mu}(\xv_j)}{(\mathcal{H}_{\mathrm{eff}}+\xi)^{-1}\phi}_2	\\
      -  \meanlr{(\mathcal{H}_{\mathrm{eff}}+\xi)^{-1}\phi}{\mu\lf(\mathfrak{a}^2_{\zv}\lf(\xv_1, \ldots, \xv_N \ri)\ri) + V_{<}(\xv_1,\ldots, \xv_N)}{(\mathcal{H}_{\mathrm{eff}}+\xi)^{-1}\phi}_2	\\
       + (1-\xi) \lVert (\mathcal{H}_{\mathrm{eff}}+\xi)^{-1}\phi  \rVert_2^2\; .
		}
 		Hence, by the boundedness of $ (\HHe + \xi)^{-1} $  for any $ -\xi \in \varrho(\HHe) $, 
  		\bmln{
      			\bigl\lVert (\mathcal{H}_{\mathrm{eff}}+\xi)^{-1}\phi  \bigr\rVert_{H^1(\Lambda^N;\mathbb{C}^s)}^2 \leq \lf(\delta + \alpha_{<} \ri) \bigl\lVert (\mathcal{H}_{\mathrm{eff}}+\xi)^{-1}\phi  \bigr\rVert_{H^1(\Lambda^N;\mathbb{C}^s)}^2 \\
      			+ C \bigg[ 1 + \sum_{j=1}^N \lf( \tx\frac{1}{\delta} \lf\| \aav_{j,\mu} \ri\|_{L^{\infty}(\Lambda;\mathbb{R}^{d})}^2 + \lf\| \vec\sigma_j \ri\|_{M_{s\times s}(\C)} \lf\| \mathbf{B}_{j,\mu} \ri\|_{L^{\infty}(\Lambda;\mathbb{R}^{d})} \ri) + \lf\| \mu\lf(\mathfrak{a}^2_{\zv}\ri) \ri\|_{L^{\infty}(\Lambda^N)} \bigg] \lVert \phi  \rVert_2^2 	\\
      			\leq \lf(\delta + \alpha_{<} \ri) \bigl\lVert (\mathcal{H}_{\mathrm{eff}}+\xi)^{-1}\phi  \bigr\rVert_{H^1(\Lambda^N;\mathbb{C}^s)}^2 
      			+ C \lVert \phi  \rVert_2^2,
    		}
  		where $\delta >0$ is arbitrary, $0 < \alpha_{<}<1$ is the relative form bound of   $V_{<}$ w.r.t. $ -\Delta_{\mathrm{D}} $, and we have included the estimate of the norm of $ (\HHe + \xi)^{-1} $ in the constant $ C $. We now pick $\delta$ small enough, in such a way that $ \delta +\alpha_{<} < 1$. Then, we get
  		\bdm
  		      \bigl\lVert (\mathcal{H}_{\mathrm{eff}}+\xi)^{-1}\phi  \bigr\rVert_{H^1}^2\leq C \lf(1-\delta - \alpha_{<} \ri)^{-1}\lVert \phi  \rVert_2^2,
   		\edm
  		which yields the result.
	\end{proof}

We are now in a position to prove \cref{thm:conveff}.

	\begin{proof}[Proof of \cref{thm:conveff}]
  		Let $- \xi \in \varrho(\mathcal{H}_{\eps}) \cap \varrho(\mathcal{H}_{\text{eff}})$ uniformly in $\eps$, \emph{i.e.}, there exists $C>0$ so that
  $\text{dist}(-\xi, \sigma(\mathcal{H}_{\eps})) > C $. By the second resolvent identity,
		\bmln{
			I_{\varepsilon}:=\lf\| (\mathcal{H}_{\eps} + \xi)^{-1} - (\mathcal{H}_{\text{eff}} + \xi)^{-1}  \ri\| \\
			= \sup_{\substack{\psi,\phi \,\in L^2(\Lambda^N;\mathbb{C}^s)\\ \lf\|\psi \ri\|_2, \lf\|\phi \ri\|_2 \leq 1}} \lf| \meanlrlr{\psi}{\lf(\mathcal{H}_{\eps} + \xi \ri)^{-1} \lf(\mathcal{H}_{\eps}-\mathcal{H}_{\text{eff}} \ri) \lf(\mathcal{H}_{\text{eff}} + \xi \ri)^{-1}}{\phi} \ri|,
		}
		and therefore
		\bmln{
    			I_{\varepsilon} \leq \sup_{\substack{\psi,\phi \,\in L^2(\Lambda^N;\mathbb{C}^s)\\ \lf\|\psi \ri\|_2, \lf\|\phi \ri\|_2 \leq 1}} \bigg\{ \lf| \meanlrlr{(\mathcal{H}_{\eps} + \xi)^{-1} \psi}{ \lf\langle\vec\varphi^2\ri\rangle_{\varepsilon}  -\mu\lf(\mathfrak{a}^2_{\zv}\ri)}{(\mathcal{H}_{\text{eff}} + \xi)^{-1} \phi} \ri| \\
    			+\sum_{j = 1}^N \lf| \meanlr{(\mathcal{H}_{\eps} + \xi)^{-1} \psi}{i \nabla_j \cdot \lf(\aav_{j,\eps} -\aav_{j,\mu} \ri)  -\vec{\sigma}_j \cdot \lf(\mathbf{B}_{j,\eps}  - \mathbf{B}_{j,\mu}  \ri)}{(\mathcal{H}_{\text{eff}} + \xi)^{-1} \phi} \ri| \bigg\}.
  		}

		The idea is now to exploit the $\xv$-a.e. pointwise convergence proven in \cref{point} to show the convergence of the expression above. The strategy is in fact common to all the terms, so let us consider only a single summand of the second term on the r.h.s., {\it i.e.}, explicitly 
		\bdm
  			I_{\varepsilon}(\aav):=\sup_{\substack{\psi,\phi \,\in L^2(\Lambda^N;\mathbb{C}^s)\\ \lf\|\psi \ri\|_2, \lf\|\phi \ri\|_2 \leq 1}} \lf|  \meanlr{(\mathcal{H}_{\eps} + \xi)^{-1} \psi}{i \nabla_j \cdot \lf(\aav_{j,\eps}\lf(\xv_j\ri) -\aav_{j,\mu}\lf(\xv_j\ri) \ri) }{(\mathcal{H}_{\text{eff}} + \xi)^{-1} \phi} \ri|.
		\edm
		By Schwartz  inequality and \cref{lemma:1}, {\it i.e.}, the fact that $ \nabla_j (\mathcal{H}_{\varepsilon} + \xi)^{-1}\psi \in L^2(\Lambda^N;\mathbb{C}^s)$ for any
$j =1, \ldots, N$, with norm uniformly bounded w.r.t. $\varepsilon$, $j$, and $\psi$, whenever $\lVert \psi \rVert\leq 1$, we obtain
		\begin{equation}
  			\label{aterm}
			\begin{split}
  				I_{\varepsilon}(\aav)\leq C\sup_{\substack{\phi \,\in L^2(\Lambda^N;\mathbb{C}^s)\\ \lf\|\phi \ri\|_2 \leq 1}} \lf\| \lf(\aav_{j,\eps} -\aav_{j,\mu} \ri) (\mathcal{H}_{\text{eff}} + \xi)^{-1} \phi\,\ri\|_{L^2(\Lambda^N;\mathbb{C}^{s})}^{}\; .
			\end{split}
		\end{equation}
		Fix now $R>0$, and set
		\begin{equation*}
  			U_R = \lf\{ \xxv \in \Lambda^N \: \big| \:  |\xxv| \leq R \ri\}.
		\end{equation*}
		Splitting $\Lambda^N =U_R \cup U_R^{\mathrm{c}}$, we can use Hölder inequality to obtain
  		\bmln{
     		 	\lf\| \lf(\aav_{j,\eps} -\aav_{j,\mu} \ri) (\mathcal{H}_{\text{eff}} + \xi)^{-1} \phi\,\ri\|^2_{2} \leq s \int_{\Lambda^N}^{} \diff \xxv \:  \lf| \aav_{j,\eps}(\xv_j) - \aav_{j,\mu}(\xv_j) \ri|^2 \lf| (\mathcal{H}_{\text{eff}} + \xi)^{-1} \phi(\xxv)  \ri|^2	\\
      	\leq s \lf( \int_{|\xv| \leq R}^{} \diff \xv \: \lf| \aav_{j,\eps}(\xv) - \aav_{j,\mu}(\xv) \ri|^{2p}  \ri)^{1/p} \lf\| (\mathcal{H}_{\mathrm{eff}}+\xi)^{-1}\phi  \ri\|_{2p'}^2\\ + s \lf\| \aav_{j,\eps} - \aav_{j,\mu}  \ri\|_{L^{\infty}(\Lambda;\mathbb{R}^{d})}^2 \int_{U_{R}^{\mathrm{c}}}^{} \diff \xxv \: \lf| (\mathcal{H}_{\mathrm{eff}}+\xi)^{-1}\phi(\xxv) \ri|^2 ,
    		}
  		for any $\frac{1}{p}+\frac{1}{p'}=1$. Now, we would like that $1 \leq p'< \frac{dN}{dN-2} $, for $ dN \geq 2 $ (in order to use Sobolev embedding), and that in addition $p<\infty$, {\it i.e.}, $p'>1$. By taking
  		\begin{equation*}
      		p' = \begin{cases}
                	2,	&	\mbox{if } dN \leq 4, \\ 
               		\frac{dN}{dN-2}, &	\mbox{if } dN >4,
              	\end{cases}
		\end{equation*}
     		all the requests are met, and thus we get
      	\begin{equation*}
           	p = \begin{cases}
                	2,	&	\mbox{if } dN \leq 4, \\ 
               		\frac{1}{2} dN, &	\mbox{if } dN >4.
              	\end{cases}
      	\end{equation*}
     		 In addition, let us recall that by \eqref{eq:1}
 		\begin{equation*}
    			\lf\| \aav_{j,\eps} - \aav_{j,\mu}  \ri\|_{L^{\infty}(\Lambda;\mathbb{R}^{d})} \leq C < +\infty\;, 
  		\end{equation*}
  		uniformly w.r.t. $\varepsilon\in (0,1)$. Therefore, by Sobolev embedding and \cref{lemma:1},
  		\bmln{
      	I_{\varepsilon}(\aav)\leq C\sup_{\substack{\phi \,\in L^2(\Lambda;\mathbb{C}^s)\\ \lf\|\phi \ri\|_2 \leq 1}} \bigg\{ \bigg( \int_{\lf| \xv \ri| \leq R}^{} \diff \xv \: \lf| \aav_{j,\eps}(\xv) - \aav_{j,\mu}(\xv) \ri|^{2p}  \bigg)^{1/p} \lf\| (\mathcal{H}_{\mathrm{eff}}+\xi)^{-1}\phi  \ri\|_{2p'}\\ +   \int_{U_{R}^{\mathrm{c}}}^{} \diff \xxv \: \lf| (\mathcal{H}_{\mathrm{eff}}+\xi)^{-1}\phi(\xxv) \ri|^2 \bigg\}	\\
      	C\sup_{\substack{\phi \,\in L^2(\Lambda;\mathbb{C}^s)\\ \lf\|\phi \ri\|_2 \leq 1}} \bigg\{ \bigg( \int_{\lf| \xv \ri| \leq R}^{} \diff \xv \: \lf| \aav_{j,\eps}(\xv) - \aav_{j,\mu}(\xv) \ri|^{2p}  \bigg)^{1/p} \lf\| (\mathcal{H}_{\mathrm{eff}}+\xi)^{-1}\phi  \ri\|_{H^r(\Lambda^N;\C^s)}\\ +   \int_{U_{R}^{\mathrm{c}}}^{} \diff \xxv \: \lf| (\mathcal{H}_{\mathrm{eff}}+\xi)^{-1}\phi(\xxv) \ri|^2 \bigg\}.
    		}	
  		On the r.h.s., the first term converges to zero as $\varepsilon\to 0$ by dominated
  convergence, since $\aav_{j,\eps}(\xv) - \aav_{j,\mu}(\xv)$ converges to zero $\xv\in \Lambda$-a.e. by \cref{point}. The second term converges to
  zero as $R\to \infty$, because $U^{\mathrm{c}}_{R}=\varnothing$ for any $R $ large enough, when $\Lambda$ is bounded, and, when $\Lambda$  is unbounded, it converges
  to zero because $(\mathcal{H}_{\mathrm{eff}}+\xi)^{-1}\phi\in L^2(\Lambda^N;\mathbb{C}^s)$, with norm uniformly bounded thanks to the condition
  $\lVert \phi \rVert_{2}^{}\leq 1$. Taking first the limit $ \eps \to 0 $ and then $ R \to \infty $, we get the result.
	\end{proof}

\subsection{Rougher electromagnetic fields}
	\label{sec:rough-electr-fields}

As discussed in \cref{rem:2}, the effective potentials $\aav_{j,\mu}$ and fields $\mathbf{B}_{j,\mu}$,
$\mu\in \mathscr{M}\bigl(\homega\bigr)$, as well as the variance electric potentials
$W_{j,\mu} $ obtained in \cref{thm:conveff} via the quasi-classical limit are all bounded (actually
continuous and vanishing at infinity). This is due to the regularity assumption~\eqref{omegacontrol} that we made on the family of Fock quantum vectors
$(\Psi_{\varepsilon})_{\varepsilon\in (0,1)}$ . It is therefore interesting to figure out whether it is
possible to obtain MSOs with less regular $\aav_{j,\mu}$, $\mathbf{B}_{j,\mu}$ and $W_{j,\mu}$ relaxing \eqref{omegacontrol}.

Thanks to \cref{rem: positive}, in order to ensure that $\mathcal{H}_{\mathrm{eff}}(\mu)$ to be self-adjoint, we need to request that (see, \emph{e.g.}, \cite{MR606167,Matte:2017ab})
\begin{equation*}
  \aav_{j,\mu} \in L^2_{\mathrm{loc}}(\Lambda;\mathbb{R}^{d}),	\qquad	 	W_{j,\mu} \in L^1_{\mathrm{loc}}(\Lambda),
\end{equation*}
for any $ j = 1, \ldots, N $, and $ -\vec{\sigma}_j \cdot \mathbf{B}_{j,\mu} $ combined with the negative part of $ V $ is form-bounded w.r.t. the Laplacian, i.e.,
\begin{equation*}
 	\sum_{j = 1}^N \lf\| \vec{\sigma}_j \cdot \mathbf{B}_{j,\mu} \ri\|_{\mathbb{C}^s}^{}+ V_- \in \mathfrak{K}_{<}(\Lambda)\;,
\end{equation*}
where $ V_- = - \min\{V, 0 \} $.

For the sake of simplicity, let us fix $s=1$ in the following discussion ($s=1$ corresponds to spinless particles, hence there is no
Zeeman term in the effective Hamiltonian). Now, removing Assumption~\ref{omegacontrol} altogether we still have convergence, up to an
eventual subsequence extraction, of $\Psi_{\varepsilon} $ to a \emph{cylindrical measure} $M$ on $ L^2(\mathcal{K},\nu) $, \emph{i.e.}, to a Radon
probability measure $\mu_M$ on the Hausdorff completion $\overline{\mathfrak{H}}=\overline{L^2(\mathcal{K},\nu)_{\mathrm{weak}}}$ of $L^2$ endowed
with the weak topology. In addition, all the cylindrical measures are reached by suitable sequences of quantum states
\cite{Falconi:2016ab}. Furthermore, since the symbols involved in the limit are cylindrical (finite rank), as discussed in the proof of \cref{point}, it is possible
to prove pointwise convergence for generic cylindrical measures, provided $ \mathfrak{a}_{(\: \cdot \:)}(\xv_1, \ldots, \xv_N)$ and
$\mathfrak{a}^2_{(\: \cdot \:)}(\xv_1, \ldots, \xv_N)$ are
$M$-integrable as cylindrical functions \cite{Falconi:2017aa
}. Finally, if the measure $\mu_M$ is a linear map from $L^p_{\mathrm{loc}}(\Lambda\times \overline{\mathfrak{H}}; W)$ into
$L^p_{\mathrm{loc}}(\Lambda; W)$ for $p=1,2$ (for any $W$ finite dimensional real vector space), we get
\begin{equation*}
 \aav_{j,\mu_M}= 2 \Re \int_{\overline{\mathfrak{H}}}^{} \mathrm{d}\mu(\zv)  \braket{\zv}{ \lambda_{j}(\xv) \hat{\eev}}_{\mathfrak{H}} \in L^2_{\mathrm{loc}}(\Lambda; \R^d),	\qquad  W_{j,\mu_M} \in L^{1}_{\mathrm{loc}}(\Lambda; \mathbb{R}^+)
\end{equation*}
whenever $ \mathfrak{a}_{\zv}(\xv_1, \ldots, \xv_N)  \in L^2_{\mathrm{loc}}(\Lambda^N \times \overline{\mathfrak{H}}; \mathbb{R}^{dN})$. If these assumptions are satisfied, we can prove the convergence
in strong resolvent sense
\begin{equation*}
  \mathcal{H}_{\eps} \xrightarrow[\eps \rightarrow 0]{\mathrm{s-res}} \mathcal{H}_{\mathrm{eff}} (\mu_M)\; .
\end{equation*}
If $\lambda$ is ``not too regular'', the effective magnetic potential vector $\aav_{j,\mu_M}$ and the variance electric potential
$W_{j,\mu_M} $ may also be non-smooth, \emph{e.g.}, they may belong to
$L^2_{\mathrm{loc}}(\Lambda;\mathbb{R}^{d}) \setminus L^2(\Lambda;\mathbb{R}^{d})$ and $L^{1}_{\mathrm{loc}}(\Lambda;\mathbb{R}^+)\setminus L^{1}(\Lambda;\mathbb{R}^+)$ respectively for any $ j $.

There is an inconvenience hidden in the above discussion, namely that it is difficult to impose explicit conditions on the quantum states that imply that the
aforementioned assumptions on $M$ and $\mu_M$ are satisfied. It is however possible to construct  in a simple way  explicit microscopic models and
corresponding families of quantum Fock states such that \emph{any} given MSO can be obtained in the quasi-classical limit. Let us
discuss the construction in detail for the simple MSO of one spinless particle with no additional external potentials, moving in a
region $\Lambda\subset \mathbb{R}^d$, $d\geq 2$, since the generalization to $N $ particles with an interaction potential $ V $ is trivial: the MSO reads in this case
\begin{equation}
  \label{eq:2}
  \mathcal{H}^{\aav}= \bigl(-i\nabla -\aav(x)\bigr)^2\; ,
\end{equation}
with $\aav\in L^2_{\mathrm{loc}}(\Lambda;\mathbb{R}^d)$ a given (divergence-less) magnetic potential.

	\begin{definition}[Microscopic system]
  		\label{def:1}
  		\mbox{}	\\
  		Let $\mathcal{K}=\hat{\Lambda}$ be the (Pontrjagin) dual of $\Lambda$, with $\nu=h$ Haar measure, and $\mathfrak{H}_{\aav} : = \mathbb{C}^{d-1} \otimes L^2(\hat{\Lambda}, \mathrm{d}\nu)$. Assume also that the particle have the following electromagnetic coupling factor, in Coulomb gauge:
  		\begin{equation*}
    			\vec{\lambda}(\xv;\kv)=\sum_{\gamma=1}^{d-1} \lambda_{\aav}(\kv) e^{-i \kv \cdot \xv} \vec{e}_{\gamma}(\kv),
  		\end{equation*}
  		where $ \eev_{\gamma} $, $ \gamma = 1, \ldots, d-1 $, is the polarization basis, and $\lambda_{\aav}\in \mathfrak{H}_{\aav}$ such that there exists
  $\lambda_{\aav}^{-1}$ satisfying $\lambda_{\aav}(\kv)\lambda_{\aav}^{-1}(\kv)=1$ a.e.. Finally, let $ \omega $ be the dispersion relation of the
  radiation field, {\it i.e.}, a positive function of $ \kv $. Then, the {\it microscopic model} of a spinless particle interacting with the radiation is described by the Hamiltonian
 		\begin{equation}
    			H_{\aav} =  \lf( -i\nabla - \vec{\varphi} (\vec{\lambda}) \ri)^2 + \mathrm{d}\Gamma(\omega).
  		\end{equation}
	\end{definition}

	An explicit example of $\lambda_{\aav}$ is for instance
	\begin{equation*}
  		\lambda_{\aav}(\xv; \kv)= \frac{1}{\sqrt{\lvert \kv  \rvert_{}^{}}} e^{-\frac{\lvert \kv  \rvert_{}^2}{2}}\;,
	\end{equation*}
	corresponding to a Gaussian charge distribution. 
	
	It remains now to define the appropriate quantum states. We make use of squeezed coherent states converging to a measure concentrated outside of $ \C^{d-1} \otimes L^2(\hat{\Lambda};|\kv| \mathrm{d}h(\kv))$. 
	
	\begin{definition}[Squeezed coherent states]
		\label{def:squeezed}
		\mbox{}	\\
		A {\it squeezed coherent state} in $ \Gamma_{\mathrm{s}}(\mathfrak{H}_{\aav})$ is defined as
		\begin{equation}
  			\Xi(f_{1,\varepsilon},\dotsc,f_{\gamma,\varepsilon}) = \exp \lf\{ \tx\frac{1}{\varepsilon} \sum_{\gamma} \lf( a^{\dagger}_{\gamma}(f_{\gamma,\varepsilon})-a_{\gamma}(f_{\gamma,\varepsilon} \ri) \ri\} \Omega\;,
		\end{equation}
		where $ f_{1,\varepsilon},\dotsc, f_{d-1,\varepsilon}  \in  L^2(\hat{\Lambda},\mathrm{d}h)$ and $\Omega\in \Gamma_{\mathrm{s}}(\mathfrak{H}_{\aav})$ is the Fock vacuum vector. 
	\end{definition}
	
Let us also remark that the Fourier transform of the divergence-less
field $\aav\in \mathscr{S}'(\Lambda;\mathbb{R}^d)$ can be written as
\begin{equation}
  \hat{\aav}(\kv)= \sum_{\gamma=1}^{d-1} \hat{A}_{\gamma}(\kv) \vec{e}_{\gamma}(\kv)\; ,
\end{equation}
for some suitable $\hat{A}_{\gamma}\in \mathscr{S}'(\hat{\Lambda})$.

\begin{definition}[Quasi-classical coherent Fock vectors]
  		\label{def:2}
  		\mbox{}	\\
  		Let $ \lf\{ \xi_{\varepsilon} \ri\}_{\varepsilon\in (0,1)}\subset C_0^{\infty}(\hat{\Lambda})$ be a family of standard mollifiers, {\it i.e.}, smooth functions with compact support $ \subset \hat \Lambda $). Then, the {\it quasi-classical family}
  $ \lf\{ \Psi_{\varepsilon}^{\aav} \ri\}_{\varepsilon\in (0,1)}$ of coherent Fock vectors associated to $\aav$ is given by
  		\begin{equation}
  			\label{eq:field states}
    			\Psi_{\varepsilon}^{\aav}=\Xi \lf(\lambda_{\aav}^{-1}(\hat{A}_1*\xi_{\varepsilon}), \dotsc ,\lambda_{\aav}^{-1}(\hat{A}_{d-1}*\xi_{\varepsilon}) \ri)\; .
  		\end{equation}
	\end{definition}
	
	\begin{proof}[Proof of \cref{thm:sconv}]
          The proof exploits the fact that on operators that are polynomial functions of the creation and annihilation operators, the
          Weyl operators act explicitly as translations. The strategy is explained in detail for the quasi-classical limit of
          Nelson-type Hamiltonians in \cite[\textsection 3.4]{Correggi:2017aa}. In addition, due to the analogue of \cite[Proposition 3.11]{Correggi:2017aa}, the classical limit on squeezed coherent states takes a simple and explicit form. To sum up,
            using explicit computations it is possible to prove the following convergence, for any quasi-classical
          family of Fock states $ \Psi_{\eps}^{\mathbf{A}}
            $:
		\begin{equation*}
  			\meanlr{\Psi_{\varepsilon}^{\aav}}{H_{\aav}}{\Psi_{\varepsilon}^{\aav}}_{\Gamma_{\mathrm{s}}(\mathfrak{H})} - c_{\varepsilon} =:\mathcal{H}_{\varepsilon}^{\aav} \xrightarrow[\eps \rightarrow 0]{\text{s-res}} \mathcal{H}^{\aav}\; 
		\end{equation*}
		where $ H_{\aav} $ is the Hamiltonian of the microscopic model provided in \cref{def:1}. This is the desired statement for a single spinless particle with no external potential. The generalization of the above result to $ N $ particles in presence of a many-body potential $ V $ is trivial.
	\end{proof}

\section{Ground State Energy}
\label{sec:ground state}

In order to prove \cref{thm:2}, we state some preparatory lemmas. First of all, since we make use of the
diamagnetic inequality (see, \emph{e.g.}, \cite[Theorem 7.21]{MR1817225}), we recall here explicitly this very well-known result for the convenience of the reader.

	\begin{proposition}[Diamagnetic inequality]
  		\label{prop:2}
  		\mbox{}	\\
  		Let $\Lambda\subset \mathbb{R}^d$ be open and let $\aav \in L^2_{\mathrm{loc}}( \Lambda;\RR^d)$ and $ f \in L^2(\Lambda)$ be such that
  $(\partial_j - i A_j)f \in L^2(\Lambda)$, $j = 1, \ldots, d$. Then, $ |f| \in H^1(\Lambda)$ and pointwise for a.e. $ \xv \in \Lambda$
		\beq
			\bigl|\nabla |f| (\xv) \bigr|    \leq  \bigl| (\nabla - i\aav)f (\xv) \bigr|.  
		\eeq
	\end{proposition}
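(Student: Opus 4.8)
The plan is to follow the classical route to the diamagnetic inequality (as in \cite[Theorem 7.21]{MR1817225}): first prove the pointwise bound for a smooth approximation $f^\delta$ of $f$, where it is an elementary computation, and then pass to the limit $\delta\to 0$, taking care of the only genuine obstruction, namely the low regularity of $\aav$. As a preliminary remark, since $\aav\in L^2_{\mathrm{loc}}(\Lambda;\R^d)$ and $f\in L^2(\Lambda)$, the product $\aav f$ is in $L^1_{\mathrm{loc}}(\Lambda)$, so $\nabla f=(\nabla-i\aav)f+i\aav f\in L^1_{\mathrm{loc}}(\Lambda)$ and $f\in W^{1,1}_{\mathrm{loc}}(\Lambda)$; in particular mollification commutes with $\nabla$.

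Fix a standard mollifier $\rho_\delta$ and set $f^\delta:=f\ast\rho_\delta$, smooth on a compact $K\subset\Lambda$ for $\delta$ small (depending on $K$). By linearity,
\begin{equation*}
  (\partial_j-iA_j)f^\delta=\bigl[(\partial_j-iA_j)f\bigr]\ast\rho_\delta+i\bigl[(A_jf)\ast\rho_\delta-A_jf^\delta\bigr],
\end{equation*}
and I would show that the first term converges to $(\partial_j-iA_j)f$ in $L^2_{\mathrm{loc}}(\Lambda)$ while the second, a Friedrichs-type commutator, converges to $0$ in $L^1_{\mathrm{loc}}(\Lambda)$: writing it as $\int\rho_\delta(y)\,[A_j(\,\cdot-y)-A_j]\,f(\,\cdot-y)\,\diff y$, one bounds
\begin{equation*}
  \bigl\|(A_jf)\ast\rho_\delta-A_jf^\delta\bigr\|_{L^1(K)}\leq \|f\|_{L^2(K')}\,\sup_{|y|\leq\delta}\bigl\|A_j(\,\cdot-y)-A_j\bigr\|_{L^2(K)}\xrightarrow[\delta\to0]{}0
\end{equation*}
for $K'\supset K$ slightly larger, by continuity of translations in $L^2$. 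Hence $(\nabla-i\aav)f^\delta\to(\nabla-i\aav)f$ and $|(\nabla-i\aav)f^\delta|\to|(\nabla-i\aav)f|$ in $L^1_{\mathrm{loc}}(\Lambda)$.

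For the smooth function $g:=f^\delta$ one has the elementary pointwise inequality: differentiating $\sqrt{|g|^2+\eta^2}$ gives $\nabla\sqrt{|g|^2+\eta^2}=\Re(\overline g\,\nabla g)/\sqrt{|g|^2+\eta^2}$, and since $\aav$ and $|g|^2$ are real, $\Re(\overline g\,\nabla g)=\Re\bigl(\overline g\,(\nabla-i\aav)g\bigr)$, so that $\bigl|\nabla\sqrt{|g|^2+\eta^2}\bigr|\leq|(\nabla-i\aav)g|$ uniformly in $\eta$; letting $\eta\to0$ yields $|f^\delta|\in W^{1,1}_{\mathrm{loc}}$ with $\bigl|\nabla|f^\delta|\bigr|\leq\bigl|(\nabla-i\aav)f^\delta\bigr|$ a.e. Finally, I would let $\delta\to0$: as $|f^\delta|\to|f|$ in $L^2_{\mathrm{loc}}$ and each component measure $\partial_j|f^\delta|\,\diff\xv$ is dominated on every compact, in total variation, by $\bigl|(\nabla-i\aav)f^\delta\bigr|\diff\xv$ (which converges in $L^1_{\mathrm{loc}}$, hence has uniformly bounded mass on compacts), these measures are weakly-$*$ precompact, their limit is $\nabla|f|$ in the sense of distributions, and the domination passes to the limit, giving that $\nabla|f|$ is absolutely continuous with density bounded a.e. by $\bigl|(\nabla-i\aav)f\bigr|\in L^2(\Lambda)$. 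Thus $|f|\in L^2(\Lambda)$ with $\nabla|f|\in L^2(\Lambda)$, i.e. $|f|\in H^1(\Lambda)$, and $\bigl|\nabla|f|(\xv)\bigr|\leq\bigl|(\nabla-i\aav)f(\xv)\bigr|$ for a.e. $\xv\in\Lambda$, which is the claim. The main obstacle is exactly the low regularity of $\aav$: it obstructs the naive identity $\nabla|f|^2=2\Re(\overline f\,\nabla f)$ with controlled integrability and forces the mollification argument together with the translation-continuity commutator estimate; once $f$ is replaced by $f^\delta$, the pointwise bound and the limit passage are routine.
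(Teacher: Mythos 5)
The paper does not prove this proposition: it states it as a well-known fact and simply cites Lieb--Loss \cite[Theorem 7.21]{MR1817225}, so there is no ``paper proof'' for your argument to match. That said, what you wrote is a correct and self-contained proof along the standard lines. Your preliminary observation that $\nabla f=(\nabla-i\aav)f+i\aav f\in L^1_{\mathrm{loc}}$ (so $f\in W^{1,1}_{\mathrm{loc}}$ and mollification commutes with $\nabla$) is exactly the right starting point; the Friedrichs-type commutator estimate
\[
\bigl\|(A_jf)\ast\rho_\delta-A_jf^\delta\bigr\|_{L^1(K)}\leq \|f\|_{L^2(K')}\,\sup_{|y|\leq\delta}\|A_j(\cdot-y)-A_j\|_{L^2(K)}\to 0
\]
is valid and is precisely the device that absorbs the low regularity $\aav\in L^2_{\mathrm{loc}}$; the $\eta$-regularized pointwise inequality for $f^\delta$, using $\Re(\overline g\,\nabla g)=\Re\bigl(\overline g\,(\nabla-i\aav)g\bigr)$ because $\aav$ is real, is correct.

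One place I would phrase more carefully is the final limit passage. You speak of ``each component measure $\partial_j|f^\delta|\,\diff\xv$'' being dominated; naively bounding each component separately by $|(\nabla-i\aav)f^\delta|$ and summing would lose a dimensional factor. What you want is either (i) to treat $\nabla|f^\delta|\,\diff\xv$ as an $\R^d$-valued measure whose \emph{total variation} is dominated by $|(\nabla-i\aav)f^\delta|\,\diff\xv$, and note that this domination is stable under weak-$*$ limits when the dominating densities converge in $L^1_{\mathrm{loc}}$; or (ii) to note that $f^\delta\to f$ in $W^{1,1}_{\mathrm{loc}}$, hence $|f^\delta|\to|f|$ in $W^{1,1}_{\mathrm{loc}}$, extract subsequences along which $\nabla|f^\delta|\to\nabla|f|$ and $|(\nabla-i\aav)f^\delta|\to|(\nabla-i\aav)f|$ pointwise a.e., and pass the pointwise inequality directly. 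Either fix is routine. Compared with the cited Lieb--Loss argument, the only structural difference is organizational: Lieb--Loss first proves a chain-rule lemma $\nabla|u|=\Re\bigl(\mathrm{sgn}(\bar u)\,\nabla u\bigr)$ for $u\in W^{1,1}_{\mathrm{loc}}$, and then the diamagnetic inequality is a two-line corollary since the $i\aav f$ contribution is purely imaginary; your proof inlines that chain rule via the mollification, which is what its proof consists of anyway. Both are fine; Lieb--Loss's factorization is a bit cleaner if one is willing to cite the chain rule as a black box, while yours is fully self-contained.
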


The first question to address is the boundedness from below of $H_{\varepsilon}$, of $\mathcal{H}_{\eps}$, and of $\inf_{\mu\in \mathscr{M}_{2,\omega}} \Bigl(\underline{\sigma}\bigl(\mathcal{H}_{\text{eff}}(\mu)\bigr) + c(\mu)\Bigr)$.

	\begin{lemma}[Boundedness from below]
  		\label{lemma:2}
  		\mbox{}	\\
		Let the assumptions of \cref{thm:2} be satisfied, then uniformly w.r.t. $\varepsilon\in (0,1) $
		\begin{gather}
  			\label{eq:3} \underline{\sigma}(H_{\varepsilon}) \geq - C > -\infty \; ,    \\
  			\label{eq:4} \underline{\sigma}(\mathcal{H}_{\varepsilon})   \geq - C > -\infty \; ,    \\
  			\label{eq:5} \inf_{\mu \in \mathscr{M}_{2,\omega}}\Bigl(\underline{\sigma}\bigl(\mathcal{H}_{\text{eff}}(\mu)\bigr) + c(\mu)\Bigr)  \geq - C > -\infty \; .
		\end{gather}
	\end{lemma}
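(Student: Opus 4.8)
The three estimates follow from a single mechanism, which I would implement as follows. In each case one writes the relevant quadratic form as a (possibly degenerate) magnetic kinetic energy, plus a pointwise non-negative ``variance'' potential, plus $V_+\geq 0$, plus the Kato-small piece $V_<$, plus a Zeeman-type term, plus — in the microscopic case — the field energy $\mathrm d\Gamma(\omega)$; one then discards the non-negative pieces, dominates $V_<$ by the magnetic kinetic energy via the diamagnetic inequality (which preserves the relative form-bound $a<1$), and controls the Zeeman term.

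\emph{Part \eqref{eq:3}.} This is precisely \cref{selfadthm}: $H_\varepsilon$ is self-adjoint and bounded below on $\mathscr D(-\Delta_{\mathrm D}+V_+)\otimes\mathscr D(\mathrm d\Gamma(\omega))$, uniformly in $\varepsilon\in(0,1)$. The uniformity comes from the fact that in $\langle\Psi|H_\varepsilon|\Psi\rangle=\|(-i\nabla-\bm\varphi(\bm\lambda))\Psi\|^2+\langle\Psi|V|\Psi\rangle+\langle\Psi|\mathrm d\Gamma(\omega)|\Psi\rangle-\langle\Psi|\vec\sigma\cdot\bm\varphi(\mathbf b)|\Psi\rangle$ one keeps the non-negative first and third terms and $V_+\geq 0$, handles $V_<$ through the (operator) diamagnetic inequality so that its relative form-bound $a<1$ w.r.t.\ $-\Delta$ passes to $(-i\nabla-\bm\varphi(\bm\lambda))^2$, and absorbs the Zeeman term into a fraction of $\mathrm d\Gamma(\omega)$ using $\|a(\mathbf f)\Psi\|\leq\|\omega^{-1/2}\mathbf f\|_{\mathfrak H}\|\mathrm d\Gamma(\omega)^{1/2}\Psi\|$ together with $\|a^{\dagger}(\mathbf f)\Psi\|^2=\|a(\mathbf f)\Psi\|^2+\varepsilon\|\mathbf f\|_{\mathfrak H}^2\|\Psi\|^2$ and $\varepsilon<1$; this yields $H_\varepsilon\geq(1-a)(-i\nabla-\bm\varphi(\bm\lambda))^2+\tfrac12\mathrm d\Gamma(\omega)-C$, with $C$ depending only on $\|\omega^{-1/2}\bm\lambda\|_{L^\infty}$, $\|\omega^{-1/2}\mathbf b\|_{L^\infty}$, $\|\vec\sigma\|$, $N,d,s$ and the Kato data of $V_<$. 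I would record the side bound $H_\varepsilon\geq\tfrac12\mathrm d\Gamma(\omega)-C$ for later use.

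\emph{Part \eqref{eq:4}.} For $\phi\in\mathscr D[\mathcal H_\varepsilon]$, since $\mathrm d\Gamma(\omega)$ acts trivially on the particle factor and $c_\varepsilon=\langle\Psi_\varepsilon|\mathrm d\Gamma(\omega)|\Psi_\varepsilon\rangle$, one has the exact identity $\langle\phi|\mathcal H_\varepsilon|\phi\rangle=\langle\phi\otimes\Psi_\varepsilon|H|\phi\otimes\Psi_\varepsilon\rangle-c_\varepsilon\|\phi\|^2$, hence $\mathcal H_\varepsilon\geq\underline\sigma(H_\varepsilon)-c_\varepsilon\geq -C-c_\varepsilon$ by Part \eqref{eq:3}. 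When $\mathbf b=0$ the term $c_\varepsilon$ cancels after the partial trace, leaving $\langle\phi|\mathcal H_\varepsilon|\phi\rangle$ equal to a \emph{classical} magnetic Schrödinger form: by the variance inequality $\langle\bm\varphi(\bm\lambda)^2\rangle_{\Psi_\varepsilon}\geq|\langle\bm\varphi(\bm\lambda)\rangle_{\Psi_\varepsilon}|^2$ it dominates $\|(-i\nabla-\tilde\aav_\varepsilon)\phi\|^2+\langle\phi|V|\phi\rangle$ with $\tilde\aav_\varepsilon\in L^\infty(\Lambda^N;\mathbb R^{dN})$, and then \cref{prop:2}, $V_+\geq 0$ and Kato-smallness of $V_<$ give $\mathcal H_\varepsilon\geq -C$ unconditionally. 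With spin the same computation leaves $-\sum_j\vec\sigma_j\cdot\mathbf B_{j,\varepsilon}$, $\|\mathbf B_{j,\varepsilon}\|_\infty\leq C\,c_\varepsilon^{1/2}$, so $\mathcal H_\varepsilon\geq -C\,c_\varepsilon^{1/2}-C$; this gives \eqref{eq:4} whenever $\sup_\varepsilon c_\varepsilon<\infty$, which holds for the families $\Psi_\varepsilon$ entering \cref{thm:2} (near-ground states have uniformly bounded field energy, by the side bound $H_\varepsilon\geq\tfrac12\mathrm d\Gamma(\omega)-C$).

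\emph{Part \eqref{eq:5}.} Fix $\mu\in\mathscr M_{2,\omega}$ and a normalized $\phi\in\mathscr D[\mathcal H_{\mathrm{eff}}(\mu)]$. By \cref{rem: positive} each $W_{j,\mu}\geq 0$, and $V_+\geq 0$, so both are dropped; since $\aav_{j,\mu}\in L^\infty$, \cref{prop:2} gives $\sum_j\|(-i\nabla_j-\aav_{j,\mu})\phi\|^2\geq\|\nabla|\phi|\|^2$, hence $\langle\phi|V_<|\phi\rangle\geq-a\sum_j\|(-i\nabla_j-\aav_{j,\mu})\phi\|^2-C$; and Cauchy--Schwarz in $\mu$ together with $\mathbf b\in L^\infty(\Lambda^N;(\mathscr D[\omega^{-1}])^{dN})$ gives $\|\mathbf B_{j,\mu}\|_\infty\leq 2\|\omega^{-1/2}b_j\|_\infty\,c(\mu)^{1/2}$. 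Therefore $\langle\phi|\mathcal H_{\mathrm{eff}}(\mu)|\phi\rangle\geq(1-a)\sum_j\|(-i\nabla_j-\aav_{j,\mu})\phi\|^2-C\,c(\mu)^{1/2}-C\geq-C\,c(\mu)^{1/2}-C$, so that $\underline\sigma(\mathcal H_{\mathrm{eff}}(\mu))+c(\mu)\geq c(\mu)-C\,c(\mu)^{1/2}-C\geq-\tfrac14C^2-C$, with $C$ independent of $\mu$; taking the infimum over $\mathscr M_{2,\omega}$ proves \eqref{eq:5}.

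The routine inputs are the standard $a$–$a^\dagger$ relative bounds and the diamagnetic inequality. The only delicate point — and the reason $c(\mu)$ must appear on the right-hand side of \eqref{bottom}, cf.\ \cref{rem:bd below} — is that the Zeeman term is controlled only by $c(\mu)^{1/2}$ (resp.\ $c_\varepsilon^{1/2}$), which is \emph{not} bounded on $\mathscr M_{2,\omega}$; it is tamed by completing the square against the field energy $+c(\mu)$. Beyond that, the task is bookkeeping: one checks that every constant depends solely on $\|\omega^{\pm1/2}\bm\lambda\|_{L^\infty}$, $\|\omega^{-1/2}\mathbf b\|_{L^\infty}$ and the Kato data of $V_<$, which is exactly what makes the bounds uniform in $\varepsilon$ (Parts \eqref{eq:3}–\eqref{eq:4}) and in $\mu$ (Part \eqref{eq:5}).
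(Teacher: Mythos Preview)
Your proof is correct and follows the same route as the paper: cite \cref{selfadthm} for \eqref{eq:3}; for \eqref{eq:4} and \eqref{eq:5}, discard the non-negative variance potential and $V_+$, pass the Kato-small $V_<$ through the magnetic kinetic term via the diamagnetic inequality, bound the Zeeman contribution by a multiple of $c^{1/2}$ (with $c=c_\varepsilon$ resp.\ $c(\mu)$), and in \eqref{eq:5} complete the square against $c(\mu)$. The only cosmetic difference is that for \eqref{eq:4} you first attempt the reduction $\langle\phi|\mathcal H_\varepsilon|\phi\rangle=\langle\phi\otimes\Psi_\varepsilon|H|\phi\otimes\Psi_\varepsilon\rangle-c_\varepsilon\|\phi\|^2$ before giving what is essentially the paper's direct argument, and you are more explicit than the paper about the residual dependence on $c_\varepsilon$ in the spin case.
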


	\begin{proof}
  		The bound \eqref{eq:3} follows directly from \cref{selfadthm}.
  
  		In order to prove \eqref{eq:4}, we pick a function $ \psi \in \mathscr{D}[-\Delta+ V_+] $; then, by the diamagnetic inequality (\cref{prop:2}), and positivity of $\mathrm{d}\Gamma(\omega)$ and $V_+$, we get
		\bmln{
  			\meanlr{\psi}{\mathcal{H}_{\eps}}{\psi}_2  = \sum_{j=1}^N \Big\{ \meanlrlr{\psi}{( -i\nabla_j - \aav_{j,\eps}(\xv_j) )^2}{\psi}_2 \\
  			  + \meanlrlr{\psi}{N^{-1} \langle\vec\varphi^2\rangle_{\eps}(\xv_1, \ldots, \xv_N) - \aav^2_{j,\eps}(\xv_j)}{\psi}_2- \meanlrlr{\psi}{\vec{\sigma}_j\cdot \mathbf{B}_{j,\varepsilon}(\xv_j)}{\psi}_2 \Big\} + \meanlrlr{\psi}{V_{<}}{\psi}_2   \\
			 \geq \sum_{j=1}^N \lf\{ \meanlrlr{\psi}{N^{-1} \langle\vec\varphi^2\rangle_{\eps}(\xv_1, \ldots, \xv_N) - \aav^2_{j,\eps}(\xv_j)}{\psi}_2 - \meanlrlr{\psi}{\vec{\sigma}_j\cdot \mathbf{B}_{j,\varepsilon}(\xv_j)}{\psi}_2 \ri\} \\
			 + \meanlrlr{\psi}{-\Delta+ V_{<}}{\psi}_2.
		} 
		Now, $\xv$-a.e. $N^{-1} \langle\vec\varphi^2\rangle_{\eps} - \aav^2_{j,\eps}(\xv)$ is a variance term (see \cref{rem: positive}), and therefore positive. Using then the Kato-smallness of
$V_{<}$ and of $ \sum \vec{\sigma}_j \cdot \mathbf{B}_{j,\varepsilon}(\xv)$, which is actually infinitesimally Kato-small in the sense of quadratic forms, we conclude the proof. 

		It remains to prove
\eqref{eq:5}. Let $\mu \in \mathscr{M}_{2,\omega}$ and $\psi \in \mathscr{D}[-\Delta+ V_+]$. Similarly to the proof of \eqref{eq:4} above, by diamagnetic inequality and
positivity, we get
		\begin{align*}
  			\meanlr{\psi}{\mathcal{H}_{\text{eff}} (\mu)}{\psi}_2  &\geq \sum_{j=1}^N  \meanlrlr{\psi}{( -i\nabla_j - \aav_{j,\mu}(\xv_j) )^2- \vec{\sigma}_j\cdot \mathbf{B}_{j,\mu}(\xv_j)}{\psi}_2 + \meanlrlr{\psi}{V_{<}}{\psi}_2 \\
			& \geq \meanlrlr{\psi}{-\Delta+ V_{<}}{\psi}_2 - \sum_{j = 1}^N \lf\| \vec{\sigma}_j  \ri\|_{\mathbb{C}^{sd}}^{} \lf\| \mathbf{B}_{j,\mu}  \ri\|_{L^{\infty}(\Lambda;\mathbb{R}^{d})}^{}  \; .
		\end{align*}
		The first term of the r.h.s. is bounded from below, since $V_{<}$ is Kato-small by hypothesis. It remains to
show that 
		\bdm
			- \sum_{j = 1}^N \lf\| \vec{\sigma}_j  \ri\|_{\mathbb{C}^{sd}}^{} \lf\| \mathbf{B}_{j,\mu}  \ri\|_{L^{\infty}(\Lambda;\mathbb{R}^{d})}^{} + c(\mu)
		\edm
		is bounded from below uniformly with respect to $\mu\in \mathscr{M}_{2,\omega}$: however,  for any $\delta_j >0$,
		\bmln{
    			\lf\| \mathbf{B}_{j,\mu} \ri\|_{L^{\infty}(\Lambda;\mathbb{R}^{d})}^{}\leq 2 \lf\| \omega^{-1/2} b_j  \hat{\eev} \ri\|_{L^{\infty}(\Lambda;\mathfrak{H}^{d})}^{} \int_{\homega}^{}   \mathrm{d}\mu(z) \: \lf\| z  \ri\|_{\homega}^{}\\\leq \frac{1}{\delta_j} \int_{\homega}^{}   \mathrm{d}\mu(z) \: \lf\| z  \ri\|_{\homega}^{2} + \delta_j (d-1) \lf\| \omega^{-1/2}  b_j   \ri\|_{L^{\infty}(\Lambda)}^{2} \\
    			=\frac{1}{\delta_j} c(\mu)  + \delta_j (d-1) \lf\| \omega^{-1/2}  b_j   \ri\|_{L^{\infty}(\Lambda)}^{2}.
  		}
		Choosing now $\delta_j = N \lf\| \vec{\sigma}_j  \ri\|_{\mathbb{C}^{sd}}$, we get
		\beq
      		- \sum_{j = 1}^N \lf\| \vec{\sigma}_j  \ri\|_{\mathbb{C}^{sd}}^{} \lf\| \mathbf{B}_{j,\mu}  \ri\|_{L^{\infty}(\Lambda;\mathbb{R}^{d})}^{} + c(\mu) \geq - N (d-1) \sum_{j =1}^N \lf\| \vec{\sigma}_j  \ri\|_{\mathbb{C}^{sd}} \lf\| \omega^{-1/2}  b_j   \ri\|_{L^{\infty}(\Lambda)}^{2}
  		\eeq
  		which is independent of $ \mu $ and bounded from below.
	\end{proof}

We have now all the elements to prove \cref{thm:2}.

	\begin{proof}[Proof of \cref{thm:2}]
  		Let us set $\mathcal{J}_{\text{eff}}(\mu) := \mathcal{H}_{\text{eff}}(\mu) + c(\mu)$. Following \cite[Lemma 3.20]{Correggi:2017aa}, it is possible to prove that
  \[ \inf_{\mu \in \mathscr{M}_{2,\omega}} \underline{\sigma} (\mathcal{J}_{\text{eff}}(\mu)) = \inf_{\mu \in \mathscr{M}_{\text{fin}}} \underline{\sigma}
    (\mathcal{J}_{\text{eff}}(\mu))\; ,\] where $\mathscr{M}_{\text{fin}}:=\mathscr{M}_{\text{fin}}(\homega)$ is the set of finitely
  supported Radon probability measures on $\homega$, \emph{i.e.}, the subset of $\mathscr{M}_{2,\omega}$ consisting of
  measures such that there exists $I\subset \mathbb{N}$ finite, $\{ \alpha_j\}_{j \in I } \subset [0,1]$, with $\sum_{j \in I } \alpha_j = 1$, and
  $\{ \zv_j\}_{j \in I }\subset \homega $, such that
  \[ \mu = \sum_{j \in I} \alpha_j \delta( \zv - \zv_j)\; . \] Let us consider now vectors of the form $\phi\otimes \Xi(\zv)$, with
  $\phi\in \mathcal{D}(-\Delta_{\mathrm{D}}+V_+)$ and $ \zv \in \homega 
  $, and where we recall that $\Xi$ is the squeezed coherent state as in \cref{def:squeezed}: for a vector-valued $ \zv = (z_1, \ldots, z_{d-1}) $, we denote by $ \Xi $ the following tensor product
 		\bdm
 			\Xi(\zv) : = \Xi(z_1) \otimes \cdots \otimes \Xi(z_{d-1}) \in \C^{d-1} \otimes L^2_{\omega}(\mathcal{K}).
		\edm
		Using the properties of squeezed coherent states and, in particular, that
  $\Xi(z)\to \delta(\,\cdot\, -z)$, the upper bound is easy to prove. In fact, setting
  		\bdm
  			\mathcal{F} :=\bigg\{ \{\zv_j\}_{j\in I} \subset \homega 
  			\: \Big| \: I\subset \mathbb{N}\text{ finite}\,,\, \{\alpha_j\}_{j\in I}\subset [0,1],\sum_{j\in I}^{}\alpha_j=1 \bigg\},
		\edm
		we get
  		\bml{
    			\underline{\sigma}(H) \leq \inf_{\mathcal{F}} \;\inf_{\phi \in \mathcal{D}(-\Delta_{\mathrm{D}}+V_+), \lf\| \phi \ri\|_2 = 1}\;  \sum_{j \in I} \alpha_j \meanlr{\phi\otimes \Xi(\zv_j)}{H_{\varepsilon}}{\phi\otimes \Xi(\zv_j)}_{\mathscr{H}} \\
    			= \inf_{\mu \in \mathscr{M}_{\text{fin}}} \;\inf_{\phi \in \mathcal{D}(-\Delta_{\mathrm{D}}+V_+), \lf\| \phi \ri\|_2 = 1}\; \meanlr{\phi}{ \mathcal{J}_{\text{eff}}(\mu)}{\phi}_{L^2(\Lambda)} =  \inf_{\mu \in \mathscr{M}_{\text{fin}}} \underline{\sigma} (\mathcal{J}_{\text{eff}}(\mu)) = \inf_{\mu \in \mathscr{M}_{2,\omega}} \underline{\sigma} (\mathcal{J}_{\text{eff}}(\mu)).
 		}
 		
		It is then sufficient to show that also
  		\begin{equation}
    			\liminf_{\varepsilon\to 0}\underline{\sigma}(H) \geq  \inf_{\mu \in \M_{2,\omega}} \underline{\sigma}\bigl(\mathcal{J}_{\mathrm{eff}}(\mu)\bigr)\; .
  		\end{equation}
  		Since $\mathcal{D}(-\Delta_{\mathrm{D}}+V_+)\otimes \mathcal{D}(\mathrm{d}\Gamma(\omega))$ is a core of $H_{\varepsilon}$, for any $\delta>0$, it is possible to find a vector
  $\Pi_{\varepsilon,\delta}\in \mathcal{D}(-\Delta_{\mathrm{D}}+V_+)\otimes \mathcal{D}(\mathrm{d}\Gamma(\omega))$ satisfying
  		\begin{equation}\label{controlpi}
    			\meanlr{\Pi_{\varepsilon,\delta}}{H_{\eps}}{\Pi_{\varepsilon,\delta}}_{\mathscr{H}}< \underline{\sigma}(H_{\eps})+\delta\; .
  		\end{equation}
  Since in addition the vectors of the form $\psi\otimes \Xi(\zv)$, with $\psi\in \mathcal{D}(-\Delta_{\mathrm{D}}+V_+)$ and
  $\zv \in \homega $, are total in $ \mathscr{H}$ ($ L^2_{\omega}(\mathcal{K})$ is dense in $ L^2(\mathcal{K}, \diff\nu) $ and therefore $ \homega $ is dense in $\mathfrak{H}$), and belong to
  $\mathcal{D}(-\Delta_{\mathrm{D}}+V_+)\otimes \mathscr{D}(\mathrm{d}\Gamma(\omega))$, it is possible to choose for any $\delta>0$, the vector $\Pi_{\varepsilon,\delta}$ in the form
  		\begin{equation*}
    			\Pi_{\varepsilon,\delta}=\sum_{j=1}^{M(\delta)}\zeta_{j,\delta}(\varepsilon)\psi_{j,\delta}\otimes \Xi\bigl(\zv_{j,\delta}\bigr)\; ,
  		\end{equation*}
  		where $\zv_{i,\delta}\neq \zv_{k,\delta}$, for $i\neq k$, and each $\psi_{i,\delta}$ is normalized; finally, the $\zeta_{i,\delta}(\varepsilon)$ satisfy
  		\begin{equation}
    			\label{eq:7}
    			\sum_{j=1}^{M(\delta)}\lvert \zeta_{j,\delta}(\varepsilon)  \rvert_{}^2+2 \Re \sum_{j<k}^{} \zeta^{*}_{j,\delta}(\varepsilon)\zeta_{k,\delta}(\varepsilon) \braketl{\psi^*_{j,\delta}}{\psi_{k,\delta}}_2 \, e^{-\frac{i}{\varepsilon}\Im \braketl{\zv_{j,\delta}}{\zv_{k,\delta}}_{\mathfrak{H}}-\frac{1}{2\varepsilon}\lVert \zv_{j,\delta} - \zv_{k,\delta}  \rVert_{\mathfrak{H}}^{2}}=1\; .
  		\end{equation}
  		Now,
  		\begin{equation*}
    			\lf| 2 \Re \zeta^{*}_{j,\delta}(\varepsilon)\zeta_{k,\delta}(\varepsilon) \braketl{\psi^*_{j,\delta}}{\psi_{k,\delta}}_2 \ri| \leq \lvert \zeta_{j,\delta}  \rvert_{}^2+\lvert \zeta_{k,\delta}  \rvert_{}^2\; ,
  		\end{equation*}
  		so \eqref{eq:7} yields
                $\sum_{j=1}^{M(\delta)}(1-C_{j,\delta}(\varepsilon))\lvert \zeta_{j,\delta}(\varepsilon)  \rvert_{}^2 \leq 1$,
  		where
  		\bdm
  			0\leq \lvert C_{j,\delta}(\varepsilon)\rvert \leq 2M(\delta)\max_{j<k\in M(\delta)} e^{-\frac{1}{2\varepsilon}\lVert \zv_{j,\delta} - \zv_{k,\delta}
    \rVert_{\mathfrak{H}}^{2}},
		\edm
		 and $C_{j,\delta}(\varepsilon)\to 0$ as $\varepsilon\to 0$. Therefore, it follows that each
  $(\zeta_{j,\delta}(\varepsilon))_{\varepsilon\in (0,1)}$ is uniformly bounded for $\varepsilon$ small enough, \emph{e.g.}, for $ \eps = \varepsilon_0 >0$ such that $C_{j,\delta}(\varepsilon_0) \leq 1/2$ for any $j$.

  		Given $\Pi_{\varepsilon,\delta}$ of this form, the corresponding expectation of $H_{\varepsilon}$ can be explicitly computed and takes the form
  		\bml{
    			\label{eq:cohexp}
      			\meanlr{\Pi_{\varepsilon,\delta}}{ H_{\varepsilon}}{\Pi_{\varepsilon,\delta}}_{\mathscr{H}} = \sum_{j,k=1}^{M(\delta)} {\zeta}^*_{j,\delta}(\varepsilon)\zeta_{k,\delta}(\varepsilon) e^{-\frac{i}{\varepsilon}\Im \braketl{\zv_{j,\delta}}{\zv_{k,\delta}}_{\mathfrak{H}}-\frac{1}{2\varepsilon}\lVert \zv_{j,\delta} - \zv_{k,\delta}  \rVert_{\mathfrak{H}}^{2}} \\
      			\times \Bigl\{ \bigl\langle \psi_{j,\delta}  \bigl\lvert -\Delta_{\mathrm{D}} + V(\xv_1, \ldots, \xv_N) \bigr\rvert \psi_{k,\delta} \bigr\rangle_2 + \meanlr{\zv_{j,\delta}}{\diff \Gamma(\omega)}{\zv_{k,\delta}}_{\mathfrak{H}} \\
      			+ i \sum_{\ell =1}^N \lf[ \meanlrlr{\psi_{j,\delta}}{\lf( \braket{\zv_{j,\delta}}{\lambda_\ell(\xv_\ell) \hat{\eev}}_{\mathfrak{H}} \cdot \nabla_\ell + \nabla_\ell \cdot \braketr{\lambda_\ell(\xv_\ell) \hat{\eev}}{\zv_{k,\delta}}_{\mathfrak{H}}\ri)}{\psi_{k,\delta}}_2 \ri.	\\
      			- \meanlrlr{\psi_{j,\delta}}{\lf(\vec{\sigma}_\ell \cdot  \braketr{\zv_{j,\delta}}{b_\ell(\xv_\ell) \hat{\eev}}_{\mathfrak{H}} +\vec{\sigma}_{\ell} \cdot  \braketr{b_\ell(\xv_\ell) \hat{\eev}}{\zv_{k,\delta}}_{\mathfrak{H}} \ri)}{\psi_{k,\delta}}_2	\\
      			 + \meanlrlr{\psi_{j,\delta}}{\lf( \braket{\zv_{j,\delta}}{\lambda_\ell(\xv_\ell) \hat{\eev}}_{\mathfrak{H}} \cdot \braket{\zv_{j,\delta}}{\lambda_\ell(\xv_\ell) \hat{\eev}}_{\mathfrak{H}} + \braket{\lambda_\ell(\xv_\ell) \hat{\eev}}{\zv_{k,\delta}}_{\mathfrak{H}} \cdot \braket{\lambda_\ell(\xv_\ell) \hat{\eev}}{\zv_{k,\delta}}_{\mathfrak{H}} \ri)}{\psi_{k,\delta}}_2	  \\
      			\lf. \lf. + \meanlrlr{\psi_{j,\delta}}{\braket{\zv_{j,\delta}}{\lambda_\ell(\xv_\ell) \hat{\eev}}_{\mathfrak{H}} \cdot \braket{\lambda_\ell(\xv_\ell) \hat{\eev}}{\zv_{k,\delta}}_{\mathfrak{H}} + \eps (d-1) \lf\| \lambda_\ell(\xv_\ell) \ri\|_{\mathfrak{H}}^2 }{\psi_{k,\delta}}_2	\ri] \ri\}.
    			}
  			Now let $(\zeta_{j,\delta})_{j=1}^{M(\delta)}\subset \mathbb{C}$ be cluster points of each
  $(\zeta_{j,\delta}(\varepsilon))_{\varepsilon\in (0,1)}$ corresponding to a common subsequence, satisfying
  		\begin{equation*}
    			\sum_{j=1}^{M(\delta)}\lvert \zeta_{j,\delta}  \rvert_{}^2=1\;,
  		\end{equation*}
  		then, it is not difficult to see that the corresponding cluster point of~\eqref{eq:cohexp} has the form
  		\begin{equation*}
    			\sum_{j=1}^{M(\delta)} \lf| \zeta_{j,\delta}  \ri|_{}^2 \meanlrlr{\psi_{j,\delta}}{ \mathcal{J}_{\mathrm{eff}}\bigl(\delta(\cdot - \zv_{j,\delta})\bigr)}{\psi_{j,\delta}}_2\; .
  		\end{equation*}
  		Therefore, setting
  		\bml{
      			\mathscr{D}_{\mathrm{gs}}(\delta) :=\biggl\{ \lf( (\zeta_{j,\delta})_{j=1}^{M(\delta)}, \;\{ \zv_{j,\delta}\}_{j=1}^{M(\delta)},  \{\psi_{j,\delta}\}_{j=1}^{M(\delta)} \ri) \subset \mathbb{C} \times \homega \times L^2(\Lambda^N; \C^s) \: \Big| \: 	\\
      			\sum_{j=1}^{M(\delta)} \lf| \zeta_{j,\delta}  \ri|_{}^2=1, \lf\| \psi_{j,\delta} \ri\|_2 = 1 \biggr\}\; ,
    		}
  		we have
  		\begin{equation*}
    			\begin{split}
     			\liminf_{\varepsilon\to 0}\;\meanlr{\Pi_{\varepsilon,\delta}}{ H_{\varepsilon}}{\Pi_{\varepsilon,\delta}}_{\mathscr{H}} =\inf_{\mathscr{D}_{\mathrm{gs}}(\delta)} \; \sum_{j=1}^{M(\delta)} \lf| \zeta_{j,\delta} \ri|^2 \meanlrlr{\psi_{j,\delta}}{ \mathcal{J}_{\mathrm{eff}}\bigl(\delta(\cdot - \zv_{j,\delta})\bigr)}{\psi_{j,\delta}}_2	\\
     			\geq \inf_{\lf\| \{ \zeta_{j,\delta} \}_j \ri\|_{\ell^{2}}^{}=1, \;   \zv_{j,\delta} \in  \mathfrak{H}_{\omega}}\; \sum_{j=1}^M \lf| \zeta_{j,\delta}  \ri|_{}^2 \underline{\sigma}\Bigl(\mathcal{J}_{\mathrm{eff}}\bigl(\delta(\cdot -\zv_{j,\delta})\bigr)\Bigr)\\
     			\geq \inf_{\mu\in \M_{\mathrm{fin}}}\underline{\sigma}(\mathcal{J}_{\mathrm{eff}}(\mu))=\inf_{\mu\in \M_{\omega}}\underline{\sigma}(\mathcal{J}_{\mathrm{eff}}(\mu))\; .
    			\end{split}
  		\end{equation*}
  		Hence
  		\begin{equation*}
    			\inf_{\mu\in \M_{\omega}}\underline{\sigma}(\mathcal{J}_{\mathrm{eff}}(\mu))< \underline{\sigma}(H_{\varepsilon})+\delta\; ,
  		\end{equation*}
  		for any $\delta>0$, and this completes the proof.
	\end{proof}

\appendix

\section{Infinite Dimensional Semiclassical Analysis}
\label{sec:Wick-quant}

In this Appendix we briefly outline some of the tools of bosonic semiclassical analysis in infinite dimensions used throughout the text, see, \emph{e.g.}, \cite{ammari:nier:2008,MR2513969,MR2802894,2011arXiv1111.5918A,Ammari:2014aa,Ammari:2017aa,Ammari:2014ab,ammari:15,MR3390788,Falconi:2016ab,Falconi:2017aa} for an overview of the theory and its
applications. We essentially adopt the notation of \cite{ammari:nier:2008
}, and to a lesser extent of \cite{Falconi:2016ab
  ,Falconi:2017aa
}. Throughout the rest of this section, let us denote by $\mathscr{Z}$ a generic separable complex Hilbert space.

	\begin{definition}[Polynomial symbols]
		\label{def:poly}
		\mbox{}	\\
  		Let $p,q \in \mathbb{N}$. A function $ s:\mathscr{Z}\to \mathbb{C}$ is a $(p,q)$-homogeneous polynomial symbol on $\mathscr{Z}$, denoted as $ s \in \mathscr{P}_{p,q}(\mathscr{Z})$, iff there exist
    $\tilde{s} \in \mathscr{L}(\mathscr{Z}^{\otimes_{\mathrm{s}} p}; \mathscr{Z}^{\otimes_{\mathrm{s}} q})$, such that
  		\bdm
  			\begin{cases}
				s(z) = \braketr{z^{\otimes q}}{\tilde{s}\; z^{\otimes p}}_{ \mathscr{Z}^{\otimes_{\mathrm{s}} q}},	\\
				\tilde{s} = \tfrac{1}{p!} \tfrac{1}{q!}\; \partial^p_z\, \partial^q_{\bar{z}}\, s(z),
			\end{cases}
		\edm
		where the derivatives in the second equation are Gâteaux derivatives.
		\newline
		In addition, a $(p,q)$-homogeneous symbol $s$ is compact, denoted as $ s \in \mathscr{P}^{\infty}_{p,q}(\mathscr{Z})$, iff the corresponding
  operator $\tilde{s} \in \mathscr{L}^{\infty}(\mathscr{Z}^{\otimes_{\mathrm{s}} p}; \mathscr{Z}^{\otimes_{\mathrm{s}} q})$ (\emph{i.e.}, it is compact).
	\end{definition}

It is possible to quantize $(p,q)-$homogeneous symbols using the well-known Wick quantization rules. To $ s \in \mathscr{P}_{p,q}(\mathscr{Z})$ it is
associated the operator $s^{\text{Wick}}$
\begin{gather*}
  s^{\mathrm{Wick}}\restriction_{\mathscr{Z}^{\otimes_n}} \in \mathscr{L}(\mathscr{Z}^{\otimes_{\mathrm{s}} n};\mathscr{Z}^{\otimes_{\mathrm{s}} n+q-p})\; ,\\
  s^{\mathrm{Wick}}\restriction_{\mathscr{Z}^{\otimes_n}} := 1_{[p,+\infty)} (n) \frac{\sqrt{n!(n+q-p)!}}{(n-p)!} \eps ^{\frac{p+q}{2}} \left(\tilde{s} \otimes_{\mathrm{s}} \mathds{1}\restriction_{\mathscr{Z}^{\otimes_{\mathrm{s}} n-p}}  \right)\; .
\end{gather*}
The aforementioned construction can be extended to suitable homogeneous polynomial symbols that are only densely defined (\emph{e.g.}, symbols $s$
whose corresponding $\tilde{s}$ is a densely defined closed operator). The basic symbol-operator correspondences heavily used in bosonic quantum field
theories are the following: let $\xi \in \mathscr{Z}$, and $T$ self-adjoint on $\mathscr{Z}$, then
\begin{align}
s_{\xi}(z) = \braket{\xi}{z}_{\mathscr{Z}} \quad &\longleftrightarrow \quad s_{\xi}^{\mathrm{Wick}} = a(\xi); \label{eqn:a}\\
\bar{s}_{\xi}(z) = \braket{z}{\xi}_{\mathscr{Z}} \quad &\longleftrightarrow \quad \bar{s}_{\xi}^{\mathrm{Wick}} = a^{\dagger}(\xi); \label{eqn:a*}\\
s_{T}(z) = \braket{z}{T z}_{\mathscr{Z}} \quad &\longleftrightarrow \quad s_{T}^{\mathrm{Wick}} = \mathrm{d}\Gamma(T). \label{eqn:gammacorr}
\end{align}
More generally (see, \emph{e.g.}, \cite[Proposition 2.4]{ammari:nier:2008}), for any $\eta_j, \xi_k \in \mathscr{Z}$
\begin{equation}\label{corresp}
s(z) = \prod_{j=1}^p \prod_{k=1}^q  \braket{z}{\eta_j}_{\mathscr{Z}} \braket{\xi_k}{z}_{\mathscr{Z}} \quad \longleftrightarrow \quad s^{\text{Wick}} = 
a^\dagger(\eta_1) \ldots a^\dagger(\eta_p)a(\xi_1)\ldots a(\xi_q)\; .
\end{equation}
The reader interested in quantization procedures and semiclassical analysis should refer to textbooks such as
\cite{MR983366
  ,MR2952218
} for finite dimensional phase spaces, and as already mentioned
\cite{ammari:nier:2008
  ,Falconi:2017aa
  ,Falconi:2016ab
} for infinite dimensional phase spaces.

Let us now analyze the convergence of bosonic quantum states to classical cylindrical and Radon measures. It is possible to prove convergence of a
family of regular quantum states of the C*-algebra of canonical commutation relations to a cylindrical measure, in two suitable topologies, under very
mild conditions on the quantum states, \emph{e.g.}, uniform boundedness of their norm w.r.t. the semiclassical parameter. The analysis of such
convergence is carried out in detail in \cite{Falconi:2016ab
}; in the Proposition below we provide a partial result that covers what is needed in this paper.

	\begin{proposition}[Convergence of states]
  		\label{convcyl}
  		\mbox{}	\\
			Let $ \{\Psi_{\eps} \}_{\eps \in (0,1)} \subseteq \Gamma_{\mathrm{s}}(\mathscr{Z})$ be a family of Fock space vectors such that 
                        \[ \sup_{\eps \in (0,1)} \lVert \Psi_{\eps} \rVert_{\Gamma_{\mathrm{s}}(\mathscr{Z})} \leq C<+\infty \; ;\] then the set of cluster points
                        $\mathrm{Cluster}^{\mathfrak{P}}_{\varepsilon\to 0}(\Psi_{\eps},\varepsilon\in (0,1)) $ w.r.t. a suitable topology
                        $\mathfrak{P}$\footnote{$\mathfrak{P}$ is the weak topology on quantum states ($\Psi_{\varepsilon}$ defines uniquely the
                          state $\langle \Psi_{\varepsilon} \,\vert\; \cdot \; \Psi_{\varepsilon} \rangle_{\Gamma_{\mathrm{s}}}$) of the C*-algebra of canonical commutation relations, with
                          respect to compactly supported and smooth cylindrical observables, see \cite{Falconi:2016ab} for additional
                          details.} is not empty. If in addition, $\{\Psi_{\eps} \}_{\eps \in (0,1)}$ satisfies the no loss of mass
                        condition, the convergence holds in the upper bound topology
                        $\mathfrak{P}\vee\mathfrak{T}$\footnote{$\mathfrak{T}$ is the (preimage of the) topology of pointwise convergence
                          of the generating functional
                          $ \braket{\Psi_{\varepsilon}}{ W(\cdot )\Psi_{\varepsilon}}_{\Gamma_{\mathrm{s}}}$ associated to
                          $\Psi_{\varepsilon}$, where $W(\cdot )$ is the Weyl operator
                          \cite{Falconi:2016ab
                          }.}. If $M \in \mathrm{Cluster}^{\mathfrak{P}}_{\varepsilon\to 0}(\Psi_{\eps},\varepsilon\in (0,1))$, then
                        $M \in \mathscr{M}_{\mathrm{cyl}}(\mathscr{Z})$ is a cylindrical measure on $\mathscr{Z}$.
	\end{proposition}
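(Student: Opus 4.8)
The plan is to obtain the statement from the theory of Wigner measures for families of states on the CCR $C^{*}$-algebra, reducing the infinite-dimensional problem to finite-dimensional semiclassical analysis by a projective-limit construction; this is carried out in detail in \cite{Falconi:2016ab}, so I shall only sketch the argument. To $\Psi_{\eps}$ I associate the vector state $\omega_{\eps}:=\braket{\Psi_{\eps}}{\,\cdot\,\Psi_{\eps}}_{\Gamma_{\mathrm{s}}(\mathscr{Z})}$, which by the uniform bound $\sup_{\eps}\lVert\Psi_{\eps}\rVert\leq C$ is a uniformly bounded positive functional on the $C^{*}$-algebra $\mathfrak{A}$ generated by the compactly supported smooth cylindrical observables. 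By separability of $\mathscr{Z}$ it suffices to work with a countable cofinal family of finite-dimensional subspaces of $\mathscr{Z}$ with dense union, so that $\mathfrak{A}$ is separable and the unit ball of $\mathfrak{A}^{*}$ is weak-$*$ sequentially compact (Banach--Alaoglu together with metrizability). Hence $(\omega_{\eps})_{\eps\in(0,1)}$ admits a $\mathfrak{P}$-cluster point as $\eps\to 0$, realized along some sequence $\eps_{n}\to 0$, which proves the non-emptiness part.

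To identify any $\mathfrak{P}$-cluster point $M$ with a cylindrical measure I argue subspace by subspace. For a fixed finite-dimensional $V\subseteq\mathscr{Z}$, the cylindrical observables based on $V$ generate a copy of the Weyl algebra over $V\cong\mathbb{C}^{\dim V}$, on which $\omega_{\eps_{n}}$ restricts to a state; by the elementary finite-dimensional theory of Wigner measures --- Banach--Alaoglu on the dual of $C_{c}^{\infty}(V)$ together with the standard positivity estimate for Wigner transforms --- a further subsequence of these restrictions converges vaguely to a positive finite Radon measure $M_{V}$ on $V$, the Wigner measure of the restricted family. A short computation with push-forwards under orthogonal projections shows that the family $\{M_{V}\}_{V}$ is consistent, i.e. the push-forward of $M_{W}$ to $V$ equals $M_{V}$ whenever $V\subseteq W$; a diagonal extraction over the chosen countable cofinal family of subspaces then produces a single subsequence realizing $M$ and along which all the $M_{V}$ converge. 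A consistent family of Radon measures on finite-dimensional quotients is by definition a cylindrical measure, so $M\in\mathscr{M}_{\mathrm{cyl}}(\mathscr{Z})$.

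It remains to upgrade, under the no-loss-of-mass condition, $\mathfrak{P}$-convergence to convergence in $\mathfrak{P}\vee\mathfrak{T}$, where $\mathfrak{T}$ requires pointwise convergence of the generating functional $\xi\mapsto\braket{\Psi_{\eps}}{W(\xi)\Psi_{\eps}}_{\Gamma_{\mathrm{s}}(\mathscr{Z})}$. The obstruction is that for $\xi\neq 0$ the Weyl symbol $z\mapsto e^{i\,2\Re\braket{\xi}{z}}$ does not vanish at infinity, hence is not one of the test observables defining $\mathfrak{P}$, and $\braket{\Psi_{\eps}}{W(\xi)\Psi_{\eps}}$ may fail to converge to $\int e^{i\,2\Re\braket{\xi}{z}}\,\mathrm{d}M(z)$ precisely when phase-space mass escapes to infinity along the sequence. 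The no-loss-of-mass hypothesis excludes this: it forces every marginal $M_{V}$ to be a probability measure, equivalently it yields a uniform tightness estimate on the finite-dimensional marginals of $\Psi_{\eps}$, and then a $3\delta$-argument --- splitting $e^{i\,2\Re\braket{\xi}{z}}$ into a compactly supported piece, handled by $\mathfrak{P}$-convergence, plus a tail, handled by tightness --- gives $\braket{\Psi_{\eps}}{W(\xi)\Psi_{\eps}}\to\int e^{i\,2\Re\braket{\xi}{z}}\,\mathrm{d}M(z)$ along the subsequence. I expect this last step, and specifically the extraction of the tightness estimate from the no-loss-of-mass condition, to be the main technical point; the rest is the projective-limit packaging of routine finite-dimensional semiclassical facts. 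Complete proofs can be found in \cite{Falconi:2016ab}.
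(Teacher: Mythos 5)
The paper offers no proof of this Proposition: the surrounding text says ``[t]he analysis of such convergence is carried out in detail in \cite{Falconi:2016ab}; in the Proposition below we provide a partial result that covers what is needed in this paper,'' and the statement is left as an import from that reference. There is therefore no in-paper argument to compare your proposal against; the right question is whether your sketch is a faithful outline of the cited theory, and in broad strokes it is. The three steps you list --- sequential compactness of the state space of the separable CCR $C^{*}$-subalgebra to get $\mathfrak{P}$-cluster points, the projective-limit packaging of finite-dimensional Wigner measures (marginals $M_V$, push-forward consistency under projections, diagonal extraction) to exhibit the cluster point as a cylindrical measure, and the tightness/no-loss-of-mass mechanism to recover the generating functional and hence $\mathfrak{T}$-convergence --- are exactly the ingredients of the construction in \cite{Falconi:2016ab}, and you are candid about deferring to it for the details.

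Two small imprecisions worth fixing if this were to be expanded. First, ``the standard positivity estimate for Wigner transforms'' is misleading: Wigner transforms of states are not pointwise positive; positivity of the limit measure $M_V$ is obtained by comparing with anti-Wick/Husimi quantization (whose symbols are positivity-preserving) and showing the Wick--anti-Wick difference is $O(\varepsilon)$ on the relevant symbol classes. Second, the logical order of the diagonal extraction should be stated more carefully: one first fixes $M$ as a $\mathfrak{P}$-cluster point of the states and a realizing subsequence, and only then reads off each $M_V$ from the action of $M$ on the subalgebra of cylindrical observables based on $V$; it is this reading-off that yields the consistent family, rather than a separate extraction per $V$ that one must afterwards reconcile with $M$. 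Neither issue is a gap in the sense of breaking the argument, but both would need tightening in a complete write-up.
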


The notation $\Psi_{\eps} \rightarrow M$ and $\Psi_{\varepsilon}\to \mu$, used throughout the text, means convergence (up to an eventual subsequence extraction as explained in Sect. \cref{sec:conveff}) in the upper bound $\mathfrak{P}\vee\mathfrak{T}$ topology, respectively to a cylindrical measure $M$ and to a cylindrical
measure concentrated as a Radon probability measure $\mu$ (on some suitable space).

Adding additional hypotheses, such as assumption \eqref{omegacontrol}, it is possible to obtain additional information on the cluster points of a family of
Fock vectors. In fact, it is possible to prove their concentration as Borel Radon measures on a suitable space, that may differ from $\mathscr{Z}$
(\emph{e.g.},  on the space $\homega $ for vectors satisfying assumption \eqref{omegacontrol}, that is not embedded nor embedding
$\mathfrak{H}$, if $\omega(k)=\lvert k \rvert_{}^{}$, although they share a common dense subset). The following theorem is again an adaptation to our
context (see Sect. \cref{sec:PF} for the definitions of the objects appearing below) of a more general result, proved in \cite[Theorem 3.3]{Falconi:2017aa
}.

	\begin{thm}[Concentration of Wigner measures]
		\label{concstates}
		\mbox{}	\\
  Let $(\Psi_{\eps})_{\varepsilon\in (0,1)}\subset \Gamma_{\mathrm{s}}(\mathfrak{H})$, satisfying \eqref{omegacontrol}. If
  $\Psi_{\varepsilon}\to_{\mathfrak{P}} M\in \mathscr{M}_{\mathrm{cyl}}(\mathfrak{H})$, then the convergence holds as well in the
  $\mathfrak{P}\vee\mathfrak{T}$ topology (introduced in \cref{convcyl}), and $M$ is concentrated as a Borel Radon measure
  $\mu\in \mathscr{M}_{2\delta,\omega}\subset \mathscr{M}\bigl(\homega\bigr)$ of probability measures with finite momenta
  up to order $2\delta$. In particular,
  \begin{equation}
    \label{norm}
    \int_{\homega}^{} \mathrm{d}\mu(\zv) \: \lVert \zv \rVert^{2\delta}_{\homega}\,  \leq C(\delta)\; ,
  \end{equation}
  where $C(\delta)$ is the constant appearing in \eqref{omegacontrol}.
\end{thm}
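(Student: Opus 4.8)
The plan is to recognize this statement as the specialization to the Pauli--Fierz setting of the general concentration result \cite[Theorem~3.3]{Falconi:2017aa}, so that the work reduces to fitting $\mathrm{d}\Gamma(\omega)$ into the framework of that theorem, checking its hypotheses, and reading off the quantitative conclusion. The key structural observation is that, by the symbol--operator correspondence \eqref{eqn:gammacorr}, $\mathrm{d}\Gamma(\omega)$ is the Wick quantization of the densely defined, positive, possibly unbounded weight symbol $z\mapsto\braket{z}{\omega z}_{\mathfrak{H}}=\lVert\sqrt{\omega}\,z\rVert_{\mathfrak{H}}^{2}$; since by the assumptions of \cref{sec:PF} the dispersion $\omega$ is $\nu$-a.e. strictly positive and $\omega^{-1}$ is a densely defined self-adjoint multiplication operator, $\mathrm{d}\Gamma(\omega)$ plays the role of a ``number operator adapted to $\homega$'', and \eqref{omegacontrol} is exactly a uniform bound on its moments up to order $\delta$.

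The first step is the a priori estimate that forces concentration. Along the subsequence provided by \cref{convcyl}, Wick-quantized symbols converge to the integral of the symbol against the cluster point $M$. Truncating $\omega$ to $\omega_{R}:=\min(\omega,R)$ (so that $\mathrm{d}\Gamma(\omega_{R})$ is bounded and, commuting with $\mathrm{d}\Gamma(\omega)$, satisfies $\mathrm{d}\Gamma(\omega_{R})^{\delta}\leq\mathrm{d}\Gamma(\omega)^{\delta}$), expanding $\mathrm{d}\Gamma(\omega_{R})^{\delta}$ via the Wick product rules (or functional calculus) into its principal symbol $\lVert\sqrt{\omega_{R}}\,z\rVert_{\mathfrak{H}}^{2\delta}$ plus positive lower-order-in-$\eps$ Wick contributions, and then letting first $\eps\to 0$ and afterwards $R\to\infty$ with Fatou's lemma, one obtains
\begin{equation*}
  \int \lVert\sqrt{\omega}\,z\rVert_{\mathfrak{H}}^{2\delta}\,\mathrm{d}M(z)\;\leq\;\liminf_{\eps\to 0}\meanlr{\Psi_{\eps}}{\mathrm{d}\Gamma(\omega)^{\delta}}{\Psi_{\eps}}_{\fock}\;\leq\;C(\delta)\,.
\end{equation*}
Finiteness of the left-hand side forces $M$ to be concentrated on $\{z:\lVert\sqrt{\omega}\,z\rVert_{\mathfrak{H}}<\infty\}=\homega$; once $M$ is identified with a normalized Radon measure $\mu$ this is precisely \eqref{norm}, and the statement for all intermediate exponents follows by interpolation as noted after \eqref{omegacontrol}.

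It remains to verify the no loss of mass condition of \cref{convcyl}, which upgrades the convergence to the $\mathfrak{P}\vee\mathfrak{T}$ topology and makes the cluster point concentrate as a genuine probability measure. Here the same bound supplies the required tightness: for $h$ in a dense subspace of $\mathfrak{H}$, a Cauchy--Schwarz estimate in the momentum integral gives
\begin{equation*}
  \lVert a(\sqrt{\omega}\,h)\,\Psi_{\eps}\rVert_{\fock}^{2}\;\leq\;\lVert h\rVert_{\mathfrak{H}}^{2}\,\meanlr{\Psi_{\eps}}{\mathrm{d}\Gamma(\omega)}{\Psi_{\eps}}_{\fock}\;\leq\;C(1)\,\lVert h\rVert_{\mathfrak{H}}^{2}
\end{equation*}
uniformly in $\eps$; this uniform control of the fluctuations of the generating (Weyl) functional along $\homega$-directions rules out escape of photon mass and yields no loss of mass (here $\delta\geq 1$ is used). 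Invoking \cite[Theorem~3.3]{Falconi:2017aa} then gives the concentration of $M$ as $\mu\in\mathscr{M}(\homega)$, which together with the a priori bound places $\mu$ in $\mathscr{M}_{2\delta,\omega}$ and establishes \eqref{norm}.

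The main obstacle is precisely this last step, and it is genuinely present only because $\omega$ is allowed to vanish (the massless case $\omega(\kv)=\lvert\kv\rvert$): the ordinary number operator $\mathrm{d}\Gamma(\mathds{1})$ is not controlled by \eqref{omegacontrol}, so soft (low-frequency) photons could a priori carry mass to infinity, and one is forced to work in the weak topology of $\homega$ (which neither contains nor is contained in $\mathfrak{H}$), using the $\omega$-weighted estimates above. This is exactly where the machinery of cylindrical measures and Hausdorff completions of \cite{Falconi:2016ab,Falconi:2017aa} becomes indispensable and cannot be replaced by the more elementary arguments available when $\omega$ is bounded below.
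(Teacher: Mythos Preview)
Your proposal is correct and takes the same approach as the paper: the paper does not give an independent proof of this theorem but simply states it as an adaptation of \cite[Theorem~3.3]{Falconi:2017aa} to the present setting. You have correctly identified this reference and supplied the details of the adaptation (the identification of $\mathrm{d}\Gamma(\omega)$ as the Wick quantization of $\lVert\sqrt{\omega}\,z\rVert_{\mathfrak{H}}^{2}$, the truncation argument for the a priori bound, and the no-loss-of-mass verification), which the paper leaves implicit.
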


	Let us conclude with two following results. The the first one is an adaptation of \cite[Corollary
6.4]{ammari:nier:2008} to vectors satisfying \cref{concstates}.

	\begin{proposition}[Classical limit]
  		\label{eqn:comp}
  		\mbox{}	\\
  		Let $(\Psi_{\eps})_{\varepsilon\in (0,1)}\subset \Gamma_{\mathrm{s}}(\mathfrak{H})$, satisfying \eqref{omegacontrol} and such that
  \begin{equation*}
    \Psi_{\varepsilon}\to \mu\in \mathscr{M}_{2\delta,\omega}\; .
  \end{equation*}
  Then, for any compact polynomial symbol of order at most $2\delta$,
  \begin{equation*}
    s\in \bigoplus_{\substack{p,q\in \mathbb{N}\\p+q\leq 2\delta}} \mathscr{P}^{\infty}_{p,q}\bigl(\homega\bigr)\; ,
  \end{equation*}
  the quantum expectation of its Wick quantization converges to the classical expectation of the symbol:
 		 \begin{equation*}
    			\lim_{\varepsilon\to 0}\, \braketr{\Psi_{\varepsilon}}{  s^{\mathrm{Wick}}\, \Psi_{\varepsilon} }_{\Gamma_{\mathrm{s}}(\mathfrak{H})}=\int_{\homega}^{} \mathrm{d}\mu(\zv)\; s(\zv).
  		\end{equation*}
	\end{proposition}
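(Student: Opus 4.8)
The plan is to mirror the proof of \cite[Corollary 6.4]{ammari:nier:2008}, the only genuine novelty being that the classical phase space is $\homega$ instead of $\mathfrak{H}$, and that the uniform a priori control comes from the weighted bound \eqref{omegacontrol} on $\mathrm{d}\Gamma(\omega)$ rather than from a bound on the number operator — a substitution that is essential precisely in the massless case, where $\mathrm{d}\Gamma(\omega)$ and the number operator are not comparable. First I would use the (anti)linearity of both sides in $s$ to reduce to a single homogeneous compact symbol $s\in\mathscr{P}^{\infty}_{p,q}(\homega)$ with $p+q\le2\delta$, so that $s(\zv)=\braketr{\zv^{\otimes q}}{\tilde s\,\zv^{\otimes p}}_{\homega^{\otimes_{\mathrm{s}}q}}$ with $\tilde s\in\mathscr{L}^{\infty}(\homega^{\otimes_{\mathrm{s}}p};\homega^{\otimes_{\mathrm{s}}q})$. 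By \cref{concstates}, $\Psi_\eps$ converges to $\mu$ in the $\mathfrak{P}\vee\mathfrak{T}$ topology, $\mu$ is concentrated on $\homega$, and \eqref{norm} holds.

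Next, since $\mathscr{D}[\omega+\omega^{-1}]$ is a common dense subspace of $\mathfrak{H}$ and $\homega$, I would approximate $\tilde s$, in the operator norm of $\mathscr{L}(\homega^{\otimes_{\mathrm{s}}p};\homega^{\otimes_{\mathrm{s}}q})$, by a sequence of finite-rank operators $\tilde s_n$ whose ranges and co-ranges are spanned by finitely many tensor products of vectors lying (after the natural identification $\braket{\zv}{f}_{\homega}=\braket{\zv}{\omega f}_{\mathfrak{H}}$) in $\mathscr{D}[\omega+\omega^{-1}]\subset\mathfrak{H}$. Writing out the associated symbols $s_n$, they are finite linear combinations of monomials $\prod_{j}\braket{\zv}{\eta_j}_{\mathfrak{H}}\prod_{k}\braket{\xi_k}{\zv}_{\mathfrak{H}}$ with all $\eta_j,\xi_k\in\mathscr{D}[\omega+\omega^{-1}]$, so that by \eqref{eqn:a}, \eqref{eqn:a*} and \eqref{corresp} the operator $s_n^{\mathrm{Wick}}$ is a polynomial in the $a^{\dagger}(\eta_j)$ and $a(\xi_k)$, depending on finitely many modes only. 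For such bona fide cylindrical observables,
\[
  \lim_{\eps\to0}\braketr{\Psi_\eps}{s_n^{\mathrm{Wick}}\,\Psi_\eps}_{\fock}=\int_{\homega}\mathrm{d}\mu(\zv)\,s_n(\zv)
\]
is precisely the content of $\mathfrak{P}\vee\mathfrak{T}$-convergence against finite-dimensional symbols (\cref{convcyl}, \cref{concstates}), equivalently \cite[Corollary 6.4]{ammari:nier:2008} applied on the finite-dimensional subspace spanned by the $\eta_j,\xi_k$ — where the $\mathrm{d}\Gamma(\omega)$-moment bound reduces to a number-operator moment bound.

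It then remains to control the two remainders uniformly. On the quantum side, the elementary $\omega^{-1/2}$-estimates for creation and annihilation operators used in the proof of \cref{prop:1}, applied to each factor of the monomials building $s^{\mathrm{Wick}}-s_n^{\mathrm{Wick}}$, give
\[
  \bigl|\braketr{\Psi_\eps}{(s^{\mathrm{Wick}}-s_n^{\mathrm{Wick}})\Psi_\eps}_{\fock}\bigr|\le C\,\bigl\lVert\tilde s-\tilde s_n\bigr\rVert_{\mathscr{L}(\homega^{\otimes_{\mathrm{s}}p};\homega^{\otimes_{\mathrm{s}}q})}\Bigl(\meanlr{\Psi_\eps}{(\mathrm{d}\Gamma(\omega)+1)^{\delta}}{\Psi_\eps}_{\fock}+1\Bigr),
\]
which is bounded uniformly in $\eps\in(0,1)$ by \eqref{omegacontrol} and monotonicity, since $p+q\le2\delta$. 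On the classical side,
\[
  \Bigl|\int_{\homega}\mathrm{d}\mu(\zv)\,\bigl(s(\zv)-s_n(\zv)\bigr)\Bigr|\le\bigl\lVert\tilde s-\tilde s_n\bigr\rVert_{\mathscr{L}(\homega^{\otimes_{\mathrm{s}}p};\homega^{\otimes_{\mathrm{s}}q})}\int_{\homega}\mathrm{d}\mu(\zv)\,\lVert\zv\rVert_{\homega}^{p+q}\le C\,\bigl\lVert\tilde s-\tilde s_n\bigr\rVert_{\mathscr{L}(\homega^{\otimes_{\mathrm{s}}p};\homega^{\otimes_{\mathrm{s}}q})}
\]
by \eqref{norm}. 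Combining the three displays, letting first $\eps\to0$ and then $n\to\infty$, yields the claim.

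The step I expect to be the main obstacle is the uniform-in-$\eps$ estimate in the first of the last two displays: one has to bound the Wick quantization of a symbol that is compact \emph{on $\homega$} by a $\mathrm{d}\Gamma(\omega)$-weighted norm on $\fock$, which forces a careful bookkeeping of how the $\omega^{-1/2}$ weights produced by the basic $a^{\#}$-bounds interact with the identification between the $\mathfrak{H}$- and $\homega$-pairings. It is exactly here that the hypothesis $s\in\bigoplus_{p+q\le2\delta}\mathscr{P}^{\infty}_{p,q}(\homega)$ — compactness on $\homega$, not merely on $\mathfrak{H}$ — is genuinely used, and it explains why \eqref{omegacontrol}, rather than a bound on the number operator, is the correct assumption in the massless setting.
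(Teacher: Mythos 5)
Your argument is correct and follows the same route the paper has in mind: the paper gives no detailed proof of \cref{eqn:comp}, stating only that it is an adaptation of \cite[Corollary 6.4]{ammari:nier:2008} to vectors satisfying \cref{concstates}, and your scheme — reduce to a homogeneous $s\in\mathscr{P}^{\infty}_{p,q}(\homega)$, approximate $\tilde s$ in $\mathscr{L}(\homega^{\otimes_{\mathrm{s}}p};\homega^{\otimes_{\mathrm{s}}q})$ by finite-rank operators, invoke $\mathfrak{P}\vee\mathfrak{T}$-convergence for the resulting cylindrical observables, and close the argument with the $\mathrm{d}\Gamma(\omega)$-weighted Wick number estimate (uniform in $\eps$) on the quantum side and \eqref{norm} on the measure side — is exactly what that adaptation amounts to. The only small technical fix worth making is to choose the singular vectors of the finite-rank approximants in the dense subset $\mathscr{D}(\omega)\cap\mathscr{D}(\omega^{-1})$ (rather than merely the form domain $\mathscr{D}[\omega+\omega^{-1}]$), so that the rewriting $\braket{\zv}{f}_{\homega}=\braket{\zv}{\omega f}_{\mathfrak{H}}$ really does yield creation/annihilation operators $a^{\#}(\omega f)$ of genuine $\mathfrak{H}$-vectors before you apply the elementary $a^{\#}$-bounds.
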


        It is possible to extend the convergence of
        \cref{eqn:comp} to any polynomial symbol (of degree at most $2\delta$) that can be suitably approximated by a pointwise converging sequence of compact symbols. It is sufficient that the Wick quantizations of the approximating symbols converge in average to the quantization of the original symbol \emph{uniformly w.r.t. $\varepsilon$}; and that the approximating symbols can be dominated by a $\mu$-integrable function. The latter properties are satisfied for example by the free photon energy $\mathrm{d}\Gamma(\omega)=\bigl(\lVert \zv \rVert_{\homega}^2\bigr)^{\mathrm{Wick}}$, that thus converges in average in the limit
        $\varepsilon\to 0$:

	\begin{corollary}[Energy convergence]
		\label{lem:ceps}
		\mbox{}	\\
		Let $ \ceps $ be given by \eqref{eq:ceps}, where $ \Psi_{\eps} $ satisfies \eqref{omegacontrol}, and $ c(\mu) $ by \eqref{eq:cmu}, then,
		\beq
			\ceps \xrightarrow[\eps \to 0]{} c(\mu),
		\eeq
		up to a subsequence extraction.
	\end{corollary}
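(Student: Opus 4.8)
The quantity $c_{\eps}$ is, by \eqref{eq:ceps} and the correspondence \eqref{eqn:gammacorr}, the quantum expectation on $\Psi_{\eps}$ of the Wick quantization of the $(1,1)$-homogeneous polynomial symbol $s(\zv)=\lVert\sqrt{\omega}\,\zv\rVert_{\mathfrak{H}}^{2}=\braket{\zv}{\omega\zv}_{\mathfrak{H}}$, whose associated operator is multiplication by $\omega$ on $\mathfrak{H}$; this is self-adjoint but \emph{not} compact, so \cref{eqn:comp} cannot be invoked directly. On the classical side, \cref{concstates} ensures that the cluster measure $\mu$ is concentrated on $\mathfrak{H}_{\omega}$ and that, by \eqref{norm} with $\underline{\delta}=1$, the symbol $s$ is $\mu$-integrable, so that $c(\mu)=\int_{\mathfrak{H}_{\omega}}\diff\mu(\zv)\,s(\zv)$ is finite and the statement is meaningful.

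The plan is to realize $\diff\Gamma(\omega)$ as a limit of Wick quantizations of \emph{compact} symbols, to which \cref{eqn:comp} does apply, exactly along the lines of the discussion following that Proposition. First I would replace $\omega$ by the bounded truncations $\omega_{n}:=\omega\wedge n$, and then $\omega_{n}$ by the finite-rank compressions $T_{n,N}:=P_{N}\,\omega_{n}\,P_{N}$, where $(P_{N})_{N}$ is an increasing sequence of finite-dimensional orthogonal projections on $\mathfrak{H}$ with $\bigvee_{N}P_{N}=\mathds{1}$. Each $s_{n,N}(\zv):=\braket{\zv}{T_{n,N}\zv}_{\mathfrak{H}}$ then belongs to $\mathscr{P}^{\infty}_{1,1}(\mathfrak{H}_{\omega})$, satisfies $0\leq s_{n,N}(\zv)\leq n\lVert\zv\rVert_{\mathfrak{H}}^{2}$, converges pointwise to $s_{n}:=\braket{\cdot}{\omega_{n}\,\cdot}_{\mathfrak{H}}$ as $N\to\infty$, while $s_{n}\nearrow s$ pointwise as $n\to\infty$. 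For fixed $n,N$, \cref{eqn:comp} gives $\lim_{\eps\to 0}\meanlr{\Psi_{\eps}}{\diff\Gamma(T_{n,N})}{\Psi_{\eps}}_{\fock}=\int_{\mathfrak{H}_{\omega}}\diff\mu(\zv)\,s_{n,N}(\zv)$; letting $N\to\infty$ by dominated convergence (dominating function $n\lVert\zv\rVert_{\mathfrak{H}}^{2}\in L^{1}(\mu)$, again by \eqref{norm}) and then $n\to\infty$ by monotone convergence, the right-hand side tends to $c(\mu)$.

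It remains to transfer this to the quantum side. Since $c_{\eps}\leq C(1)$ uniformly in $\eps$ by \eqref{omegacontrol}, one may extract from the given sequence $\Psi_{\eps}\to\mu$ a further subsequence along which $c_{\eps}$ converges, say to $\ell\geq 0$. Monotonicity of $\diff\Gamma$, i.e. $\diff\Gamma(T_{n,N})\leq\diff\Gamma(\omega_{n})\leq\diff\Gamma(\omega)$, then yields $\ell\geq\int s_{n,N}\,\diff\mu$ for all $n,N$, hence $\ell\geq c(\mu)$ after passing to the supremum. For the reverse inequality $\ell\leq c(\mu)$ one must control \emph{uniformly in $\eps$} the tails $\meanlr{\Psi_{\eps}}{\diff\Gamma\bigl(\omega\,\mathds{1}_{\{\omega>n\}}\bigr)}{\Psi_{\eps}}_{\fock}$ and show they vanish as $n\to\infty$, i.e.\ the absence of loss of mass in the $\omega$-weighted phase space. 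For $\delta>1$ this follows from the pointwise bound $\omega\,\mathds{1}_{\{\omega>n\}}\leq n^{1-\delta}\,\omega^{\delta}$ together with $\diff\Gamma(\omega^{\delta})\leq\diff\Gamma(\omega)^{\delta}$ (superadditivity of $t\mapsto t^{\delta}$ on $[0,\infty)$) and \eqref{omegacontrol}, bounding the tail by $n^{1-\delta}C(\delta)$; the borderline case $\delta=1$ is precisely what the concentration result \cref{concstates} — and the discussion following \cref{eqn:comp} — provides. This uniform tail estimate, the only place where the full strength of \eqref{omegacontrol} and \cref{concstates} enters, is the main obstacle; everything else is the monotone/dominated convergence bookkeeping above.
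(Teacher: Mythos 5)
Your overall strategy—truncate $\omega$, compress to finite rank, invoke \cref{eqn:comp} for the compact pieces, and control the error uniformly in $\varepsilon$—is precisely the (sketched) argument the paper lays out in the discussion immediately preceding the corollary, so the route is the same. There are, however, two issues, one fixable and one substantive.

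The fixable one: the operator inequality $P_{N}\,\omega_{n}\,P_{N}\leq \omega_{n}$ on which your ``monotonicity of $\diff\Gamma$'' step rests is \emph{false} in general (a finite-rank projection does not commute with multiplication by $\omega_{n}$, and the cross terms $P_{N}\omega_{n}(1-P_{N})+(1-P_{N})\omega_{n}P_{N}$ can make $\omega_{n}-P_{N}\omega_{n}P_{N}$ indefinite). You should instead take $T_{n,N}:=\sqrt{\omega_{n}}\,P_{N}\,\sqrt{\omega_{n}}$, which is still finite rank, is $\leq \omega_{n}\leq \omega$ because $P_{N}\leq\mathds{1}$, and gives $s_{n,N}(\zv)=\lVert P_{N}\sqrt{\omega_{n}}\,\zv\rVert_{\mathfrak{H}}^{2}$, which increases pointwise to $\lVert\sqrt{\omega_{n}}\zv\rVert_{\mathfrak{H}}^{2}$ and is dominated by $\lVert\sqrt{\omega}\zv\rVert_{\mathfrak{H}}^{2}=\lVert\zv\rVert_{\mathfrak{H}_{\omega}}^{2}$. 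Note that this domination is also needed to repair your stated bound $s_{n,N}\leq n\lVert\zv\rVert_{\mathfrak{H}}^{2}$: the function $\lVert\zv\rVert_{\mathfrak{H}}^{2}$ need not be finite on $\mathfrak{H}_{\omega}$ (the two spaces are not nested when $\omega$ is unbounded and not bounded below), whereas $\lVert\zv\rVert_{\mathfrak{H}_{\omega}}^{2}$ is $\mu$-integrable by \eqref{norm}.

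The substantive gap is your treatment of $\delta=1$. Your uniform tail bound $\langle\Psi_{\varepsilon},\diff\Gamma(\omega\mathds{1}_{\{\omega>n\}})\Psi_{\varepsilon}\rangle\leq n^{1-\delta}C(\delta)$ genuinely requires $\delta>1$; for $\delta=1$ it is vacuous. You then assert that ``the borderline case $\delta=1$ is precisely what \cref{concstates} provides,'' but that is not what \cref{concstates} states: it gives concentration of the cluster measure $\mu$ on $\mathfrak{H}_{\omega}$ and the moment bound $\int\lVert\zv\rVert_{\mathfrak{H}_{\omega}}^{2\delta}\diff\mu\leq C(\delta)$, i.e.\ exactly the Fatou-type \emph{lower} bound $\ell\geq c(\mu)$, not the uniform integrability needed for $\ell\leq c(\mu)$. (In the finite-dimensional analogue, weak convergence of probability measures plus a uniform second-moment bound does not imply convergence of second moments.) What the concentration theorem does handle is the $N\to\infty$ step—absence of dimensional loss of mass—which is a separate tightness statement from the $n\to\infty$ energy cutoff. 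So for $\delta=1$ you have not closed the argument; you have reproduced the point that the paper itself asserts without proof. To make the proposal self-contained you should either restrict to $\delta>1$ and say so, or locate and cite the stronger statement in Falconi's concentration result that yields the uniform tail control at the borderline exponent.
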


\section{Pauli-Fierz Operator}
\label{sec:PFO}

We discuss here briefly the question of self-adjointness of the PF Hamiltonian defined in \eqref{eqn:ham}, under the assumptions \eqref{coulomb} and \eqref{eqn:h3}. The functions $ \vec{\lambda}$ and $ \mathbf{b} $ are given in \eqref{eq:lambda} and \eqref{eq:b2} respectively. With such hypothesis, it is possible to prove only essential self-adjointness of $ H $, while the precise domain of self-adjointness can be
characterized with additional assumptions on $\vec{\lambda}$ and $ \mathbf{b} $ \cite{Falconi:2014aa,MR1773809,MR1891842,MR2436496,Matte:2017ab}. 

Let us denote by $\mathscr{D}(T)$ the domain of self-adjointness of an operator $T$, and by $\mathscr{D}[T]$ its form domain. Let us remark that
\cref{selfadthm} below, as well as the other results in this paper, hold also if \eqref{coulomb} is not satisfied, \emph{i.e.}, a different
gauge is chosen. However, with this choice, the Hamiltonian $H$ in \eqref{eqn:ham} can equivalently be written as
\begin{equation}
 H = -\Delta_{\mathrm{D}} + \vec\varphi(\vec{\lambda}) \cdot \vec\varphi(\vec{\lambda}) + i  a^{\dagger}(\vec{\lambda})\cdot \nabla + i \nabla\cdot a(\vec{\lambda})  + V + \mathrm{d}\Gamma(\omega) - \vec{\sigma} \cdot \varphi(\mathbf{b})\; ,
\end{equation}
and $-\Delta_{\mathrm{D}}$ is the Laplacian with \emph{Dirichlet boundary conditions}. All the terms of the operator but the last one are meant to be multiplied by $ \one_{\C^s} $.

	\begin{thm}[Self-adjointness of $ H $]
		\label{selfadthm}
		\mbox{}	\\
  		Let $\vec{\lambda}, \mathbf{b} \in L^{\infty}(\Lambda^N;\mathfrak{H}^{dN})$, with $\nabla\cdot \vec{\lambda}\in L^{\infty}(\Lambda^N;\mathfrak{H})$, and let
 \eqref{eqn:h3} be satisfied. Then,
  		\begin{itemize}\setlength{\itemsep}{2.5mm}
  			\item $H$ is essentially self-adjoint on $\bigl(\mathscr{D}(-\Delta_{\mathrm{D}}+V_{+})\otimes \mathscr{D}(\mathrm{d}\Gamma(\omega))\bigr)\cap C_0^{\infty}(\mathrm{d}\Gamma(1))$;
    
  			\item If, in addition, $ \bm{\lambda} \in L^{\infty}\lf(\Lambda^N; \lf(\mathscr{D}[\omega + \omega^{-1}]\ri)^{dN} \ri) $,
    			$\nabla\cdot \vec{\lambda}\in L^{\infty}\bigl(\Lambda^N;\mathscr{D}[\omega^{-1}]\bigr)$, and
    			$ \mathbf{b} \in L^{\infty}\bigl(\Lambda^N; \mathscr{D}(\omega^{-1})^{dN}\bigr)$, then $H$ is self-adjoint on
    			$\mathscr{D}(-\Delta_{\mathrm{D}}+V_+)\otimes \mathscr{D}(\mathrm{d}\Gamma(\omega))$ and bounded from below, with bound uniform w.r.t. $\varepsilon$.
  		\end{itemize}
	\end{thm}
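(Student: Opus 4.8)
The plan is to compare $H$ with the nonnegative self-adjoint operator
\[
K := (-\Delta_{\mathrm D}+V_+)\otimes\one_{\Gamma_{\mathrm s}(\mathfrak H)} + \one_{L^2}\otimes\mathrm{d}\Gamma(\omega),
\]
which is self-adjoint on $\dom(-\Delta_{\mathrm D}+V_+)\otimes\dom(\mathrm{d}\Gamma(\omega))$ as a sum of two commuting nonnegative self-adjoint operators: $-\Delta_{\mathrm D}+V_+$ is self-adjoint as a form sum by the KLMN theorem (using $V_+\in L^1_{\mathrm{loc}}(\Lambda^N;\RR^+)$ and $-\Delta_{\mathrm D}\geq 0$), and $\mathrm{d}\Gamma(\omega)$ is self-adjoint on its natural domain by the standing assumption on $\omega$. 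Writing $H=K+R$ with
\[
R=\vec\varphi(\vec\lambda)\cdot\vec\varphi(\vec\lambda)+i\,a^{\dagger}(\vec\lambda)\cdot\nabla+i\,\nabla\cdot a(\vec\lambda)+V_{<}-\vec\sigma\cdot\varphi(\mathbf b),
\]
everything reduces to controlling $R$ relative to $K$ and to the number operator $N:=\mathrm{d}\Gamma(\one)$. The only elementary inputs are the standard $N_\tau$-estimates $\lVert a(\mathbf f)\psi\rVert\leq\lVert\omega^{-1/2}\mathbf f\rVert_{\mathfrak H}\lVert\mathrm{d}\Gamma(\omega)^{1/2}\psi\rVert$ and $\lVert a^{\dagger}(\mathbf f)\psi\rVert\leq\lVert\omega^{-1/2}\mathbf f\rVert_{\mathfrak H}\lVert\mathrm{d}\Gamma(\omega)^{1/2}\psi\rVert+\sqrt{\eps}\,\lVert\mathbf f\rVert_{\mathfrak H}\lVert\psi\rVert$, together with the cruder $\lVert a(\mathbf f)\psi\rVert,\lVert a^{\dagger}(\mathbf f)\psi\rVert\leq\sqrt{\eps}\,\lVert\mathbf f\rVert_{\mathfrak H}\lVert(N+\one)^{1/2}\psi\rVert$; since the explicit $\eps$-prefactors are $\leq 1$ on $\eps\in(0,1)$, all the uniformity statements will come for free.

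For the first item, under the sole assumptions $\vec\lambda,\mathbf b\in L^{\infty}(\Lambda^N;\mathfrak H^{dN})$ and $\nabla\cdot\vec\lambda\in L^{\infty}(\Lambda^N;\mathfrak H)$, the field terms in $R$ are controlled only by $N$ and by $\nabla$, not by $\mathrm{d}\Gamma(\omega)$, so the natural tool is a Glimm--Jaffe--Nelson type commutator theorem with comparison operator $\Theta:=K+N+\one$. One verifies, on the core $\mathscr C:=\bigl(\dom(-\Delta_{\mathrm D}+V_+)\otimes\dom(\mathrm{d}\Gamma(\omega))\bigr)\cap C_0^{\infty}(\mathrm{d}\Gamma(\one))$ --- which is a core for $\Theta$ --- that $H$ is well defined and symmetric there, that $|\langle\psi,H\psi\rangle|\leq C\langle\psi,\Theta\psi\rangle$ (using $\lVert\nabla_j\psi\rVert^{2}\leq\langle\psi,\Theta\psi\rangle$ and the cruder bounds), and that $|\langle H\phi,\Theta\psi\rangle-\langle\Theta\phi,H\psi\rangle|\leq C\lVert\Theta^{1/2}\phi\rVert\,\lVert\Theta^{1/2}\psi\rVert$ for $\phi,\psi\in\mathscr C$. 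The commutators of the interaction with $-\Delta_{\mathrm D}$ and with $N$ only reproduce field operators of the same type and terms carrying at most one extra gradient, all $\Theta$-bounded; the hypothesis $\nabla\cdot\vec\lambda\in L^{\infty}$ is exactly what makes $[\nabla\cdot a(\vec\lambda),-\Delta_{\mathrm D}]$ meaningful. The delicate term --- and the main obstacle of this item --- is $[\mathrm{d}\Gamma(\omega),R]$, which formally produces creation and annihilation operators with form factor $\omega\vec\lambda$, not controlled by the hypotheses; one circumvents it by first replacing $\omega$ with the bounded $\omega_M:=\min\{\omega,M\}$, for which the whole argument applies verbatim with $M$-independent constants and core, and then removing the cutoff. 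This is, in essence, the content of the self-adjointness results of \cite{Falconi:2014aa,Matte:2017ab}, on which we rely.

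For the second item the stronger hypotheses $\vec\lambda\in L^{\infty}(\Lambda^N;(\dom[\omega+\omega^{-1}])^{dN})$, $\nabla\cdot\vec\lambda\in L^{\infty}(\Lambda^N;\dom[\omega^{-1}])$ and $\mathbf b\in L^{\infty}(\Lambda^N;\dom(\omega^{-1})^{dN})$ upgrade the field terms to $K$-bounded perturbations: by the $N_\tau$-estimates and Young's inequality, $\vec\sigma\cdot\varphi(\mathbf b)$ together with the lower-order pieces of $\vec\varphi(\vec\lambda)^{2}$, of $a^{\dagger}(\vec\lambda)\cdot\nabla$ and of $\nabla\cdot a(\vec\lambda)$ are infinitesimally $K$-bounded, with $\eps$-independent constants. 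The genuinely non-small pieces --- the full magnetic-square term $\vec\varphi(\vec\lambda)\cdot\vec\varphi(\vec\lambda)\geq 0$ and the cross terms $i\,a^{\dagger}(\vec\lambda)\cdot\nabla+i\,\nabla\cdot a(\vec\lambda)$ --- cannot be absorbed by a smallness condition on $\vec\lambda$; one exploits instead the positivity $(-i\nabla-\vec\varphi(\vec\lambda))^{2}\geq 0$ together with a pull-through argument (commuting $a(k)$ through $(H+\xi)^{-1}$ and using $\omega^{\pm1/2}\vec\lambda\in L^{\infty}$, $\omega^{-1/2}\nabla\cdot\vec\lambda\in L^{\infty}$), in the spirit of \cite{MR1773809,MR1891842,MR2436496,Matte:2017ab}; this is what pins down the self-adjointness domain as exactly $\dom(-\Delta_{\mathrm D}+V_+)\otimes\dom(\mathrm{d}\Gamma(\omega))$. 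Finally, boundedness from below uniform in $\eps$ follows, on the form domain, from an operator-valued ($\Gamma_{\mathrm s}(\mathfrak H)$-valued) diamagnetic inequality $\langle\psi,(-i\nabla-\vec\varphi(\vec\lambda))^{2}\psi\rangle\geq\lVert\nabla|\psi|\rVert_{L^{2}(\Lambda^N)}^{2}$, which gives $\langle\psi,V_{<}\psi\rangle\geq -a\langle\psi,(-i\nabla-\vec\varphi(\vec\lambda))^{2}\psi\rangle-C\lVert\psi\rVert^{2}$ with $a<1$ by \eqref{eqn:h3}, while $|\langle\psi,\vec\sigma\cdot\varphi(\mathbf b)\psi\rangle|\leq\delta\langle\psi,\mathrm{d}\Gamma(\omega)\psi\rangle+C_\delta\lVert\psi\rVert^{2}$; together with $V_+\geq0$ and $\mathrm{d}\Gamma(\omega)\geq0$ this yields $\langle\psi,H\psi\rangle\geq-C\lVert\psi\rVert^{2}$ uniformly in $\eps\in(0,1)$.
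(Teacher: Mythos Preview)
Your proposal is correct and aligned with the paper's own proof, which is in fact just a two-sentence citation: essential self-adjointness via the commutator criterion of \cite[Theorem~3.1]{Falconi:2014aa}, and self-adjointness on the explicit domain via \cite[Theorem~5.7]{Matte:2017ab}. You have essentially unpacked what those references do---the Glimm--Jaffe--Nelson comparison with $\Theta=K+N+\one$ for the first item, and the pull-through/positivity machinery for the second---and added a self-contained diamagnetic argument for the uniform lower bound, which the paper leaves implicit in the citation.
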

	
	\begin{proof}
          Essential self-adjointness is proved using the criterion \cite[Theorem 3.1, see also \textsection4.3 for the application to PF-type
          Hamiltonians]{Falconi:2014aa
          }. As already mentioned, there are several different proofs of self-adjointness, with different assumptions
          \cite{MR1891842,MR2436496,Matte:2017ab}, here we have used the most general ones \cite[Theorem
          5.7]{Matte:2017ab
          }.
	\end{proof}

{\footnotesize

\providecommand{\bysame}{\leavevmode\hbox to3em{\hrulefill}\thinspace}
\providecommand{\MR}{\relax\ifhmode\unskip\space\fi MR }
\providecommand{\MRhref}[2]{%
  \href{http://www.ams.org/mathscinet-getitem?mr=#1}{#2}
}
\providecommand{\href}[2]{#2}
}
\end{document}